\documentclass[11pt]{article}

\usepackage[utf8]{inputenc}
\usepackage[T1]{fontenc}
\usepackage{amsmath,amssymb,amsthm,mathtools}
\numberwithin{equation}{section}
\usepackage{enumerate}
\usepackage[colorlinks=true,linkcolor=blue,citecolor=blue]{hyperref}
\usepackage{cleveref}
\usepackage{svg}
\usepackage[margin=1.2in]{geometry}
\usepackage{authblk}
\usepackage{tikz}
\usetikzlibrary{arrows.meta, positioning}
\usepackage{caption}
\usepackage{bbm}

\theoremstyle{plain}
\newtheorem{theorem}{Theorem}[section]
\newtheorem{proposition}[theorem]{Proposition}
\newtheorem{lemma}[theorem]{Lemma}
\newtheorem{corollary}[theorem]{Corollary}

\theoremstyle{definition}
\newtheorem{definition}[theorem]{Definition}

\theoremstyle{remark}

\newcommand{\R}{\mathbb{R}}

\begin{document}

\title{Reconstruction of the Riemann tensor\\from causal order and element counts}
\author[1,2]{Nathan Madsen\,}%\orcidlink{0000-0001-7791-0481}}
\affil[1]{Department of Physics, University of California, Santa Barbara}
\affil[2]{Department of Mathematics, University of California, Santa Barbara}
\date{October 13, 2026}
\maketitle

\begin{abstract}
A causal set is a locally finite partially ordered set. Causal set theory proposes such discrete structures as a basis for spacetime physics. Connecting this description with continuum physics requires recovering the geometric quantities used in general relativity and controlling the errors introduced by discreteness. Here, with dimension and sampling density specified, we construct approximate normal coordinates, estimate parallel transport, and recover the full Riemann tensor from causal order and element counts. For samples embedded in a fixed smooth spacetime and satisfying stated uniform counting assumptions, we prove a uniform curvature error bound that tends to zero as the sampling density increases.
\end{abstract}

% ============================================================
\section{Introduction}\label{sec:intro}
% ============================================================

The causal set program~\cite{Bombelli1987,Surya2019} rests on the idea that order and number together determine geometry. Its continuum motivation comes from reconstruction results of Hawking, King, and McCarthy~\cite{Hawking1976} and Malament~\cite{Malament1977}. For past- and future-distinguishing spacetimes of dimension greater than two, chronological order---the relation defined by future-directed timelike curves---determines the topology, smooth structure, and conformal metric; see also~\cite[Sec.~4.3.4]{Minguzzi2019Causality}. The spacetime volume measure fixes the remaining conformal factor. In a causal set, element counts approximate this volume measure.

Bombelli~\cite{Bombelli2000Statistical} introduced statistical measures of closeness between finite-volume Lorentzian geometries by comparing the probabilities of obtaining finite causal orders from random samples. Statistical reconstruction has subsequently been studied through rigorous uniqueness theorems. Braun~\cite{Braun2025OrderNumber} samples $k$ points independently from the normalized spacetime volume measure and records a matrix whose $(i,j)$ entry is one precisely when the $i$th point chronologically precedes the $j$th. This sampling law is equivalent to a Poisson sprinkling conditioned to contain exactly $k$ points. For spacetimes of the same finite total volume, in dimension at least three and under stated causal and completeness assumptions, equality of the probabilities of all such matrices for every $k$ implies that the spacetimes are smoothly isometric. The present paper addresses local reconstruction from a finite sample, with quantitative bounds on coordinates, transport, and curvature.

Finite Minkowski embedding procedures have been developed by Henson~\cite{Henson2006}, Clough and Evans~\cite{CloughEvans2017}, Liu and Reid~\cite{LiuReid2018}, and Johnston~\cite{Johnston2022,Johnston2025}. Johnston's constructions recover coordinates from volume-derived inner products or spatial distances. Related constructions of intrinsic spatial distance include those of Rideout and Wallden~\cite{RideoutWallden2009}, Eichhorn, Surya, and Versteegen~\cite{EichhornSuryaVersteegen2019}, and Bogu\~n\'a and Krioukov~\cite{BogunaKrioukov2024}. Glaser and Surya~\cite{GlaserSurya2013Locality} address intrinsic locality by comparing the abundances of order intervals in a region with those expected for sprinklings into flat Alexandrov intervals. Their work uses these profiles to identify candidate approximately flat regions and test manifoldlikeness. Rovelli~\cite{Rovelli2002GPS} defines physical coordinates in four-dimensional general relativity using clock readings transmitted by light signals from four reference worldlines, so that metric components in these coordinates are observable quantities. Here we obtain approximate normal coordinates from causal order and interval counts, then use coordinate comparisons to estimate transport and curvature.

Quantitative uniqueness of well-conditioned causal-set embeddings was studied in~\cite{Madsen_HV}, using timelike anchor data and Lorentzian trilateration to compare two embeddings. The present paper uses related anchor methods for local reconstruction, with causal profiles first localizing the sample and interval counts then supplying the data for coordinates, transport, and curvature.

One route to curvature uses discrete wave operators. Building on Sorkin's earlier construction~\cite{Sorkin2007}, Benincasa and Dowker~\cite{BenincasaDowker2010} introduced a four-dimensional causal-set d'Alembertian and used its action on a constant scalar field to propose a scalar-curvature estimator. Belenchia, Benincasa, and Dowker~\cite{BelenchiaBenincasaDowker2016} proved, under their assumptions on the geometry and scalar field, that the mean over Poisson sprinklings converges to $(\Box-\tfrac12 R)\phi$. Extensions to other dimensions and generalized operators were developed by Dowker and Glaser, Glaser, Aslanbeigi, Saravani, and Sorkin, and Belenchia~\cite{DowkerGlaser2013,Glaser2014,AslanbeigiSaravaniSorkin2014,Belenchia2016Universal}.

Roy, Sinha, and Surya~\cite{RoySinhaSurya2013} derived curvature corrections to the expected counts of chains, meaning totally ordered subsets, in small causal diamonds. These yield estimates of the scalar curvature and the Ricci component along the diamond's timelike axis. They also analyzed sampling fluctuations as the sprinkling density increases.

Shuman~\cite{Shuman2024Derivatives,Shuman2024TangentSpaces} combines continuum differential-geometric formulas with discrete estimates of metric components and their derivatives to construct connection and Riemann-curvature estimates. He also fits curvature components from parallel transport around triangular loops. Numerical tests in two-dimensional Minkowski spacetime provide evidence for these constructions; the paper does not establish a uniform curvature error bound in curved spacetime. Here we estimate short-edge transport directly by affine fits between reconstructed normal charts and prove a quantitative consistency bound under stated sampling and geometric hypotheses. The proof controls the propagation of coordinate and transport errors around loops, with distinct scales for the short edges and the loops themselves.

Recent work has approached Ricci curvature through Lorentzian optimal transport. Barton, Borza, and R\"ohrig~\cite{BartonBorzaRohrig2026} define a causal-set curvature using distributions on causal diamonds, with a continuum expansion recovering timelike Ricci curvature and numerical tests on Poisson sprinklings. Braun and Li~\cite{BraunLi2026} give a related continuum construction using distributions on small spacelike slices.

With the dimension, density, and fixed construction thresholds specified, we reconstruct approximate normal coordinates on the sample points, estimate parallel transport along geodesic segments joining nearby sample points by fitting their coordinates on chart overlaps, and estimate curvature from loops at a larger scale. For samples embedded in a fixed smooth spacetime and satisfying the volume conditions (V1) and (V2), we control the errors through this construction and obtain a uniform, density-dependent estimate of the full Riemann tensor. The comparison uses the reconstructed Lorentz frames, with uniformly bounded changes of frame, at sample points satisfying the past- and future-count thresholds of Section~\ref{sec:eligible-bases}. Theorem~\ref{thm:volume-curvature} gives the quantitative bound, and Corollary~\ref{cor:poisson-curvature} gives its high-probability and eventual almost-sure Poisson realizations. Section~\ref{sec:alternative-estimators} describes the corresponding conditional construction for other interval estimators.

The proof separates the geometric reconstruction from its probabilistic realization. We work under the uniform counting conditions (V1) and (V2), and Section~\ref{sec:poisson-volume} shows how Poisson samples satisfy them. A first geometric issue is identifying nearby points: a small proper time can connect endpoints that are far apart along an almost null direction. Section~\ref{sec:geo} therefore begins the reconstruction by comparing which sample points lie to the past or future of each endpoint. These comparisons locate nearby points before interval counts are used to estimate proper time.

Section~\ref{sec:sampled-charts} uses the resulting interval estimates to reconstruct approximate normal coordinates. The finite tests select suitable anchor points and retain overlaps containing sample points spread across independent coordinate directions. This spread keeps the coordinate comparisons used to estimate transport stable.

Section~\ref{sec:transport-curvature} estimates transport along short edges at scale $r$ and combines these estimates around loops of the larger scale $H=\sqrt r$. An $O(r^2)$ error per edge accumulates to $O(Hr)$, while the leading curvature contribution has size $O(H^2)$. The two scales therefore make the accumulated error small compared with the curvature contribution. The chart frames may vary independently. Intermediate changes of frame cancel in the ordered loop product. At the base, changing frame leaves the identity matrix fixed, so the alignment error is multiplied by the small deviation of the loop transport from the identity. Together, these estimates give the curvature bound of Theorem~\ref{thm:volume-curvature}.

% ============================================================
\section{Spacetime geometry and counting functions}\label{sec:geo}
% ============================================================

Let $(M,g)$ be a smooth, time-oriented, globally hyperbolic spacetime of dimension $d \geq 2$, with signature $(-,+,\dots,+)$. Fix a regular compact causal diamond
\[ D = J^+(a) \cap J^-(b) = \overline{I^+(a) \cap I^-(b)}, \quad a \ll b \]
of finite positive volume. We write $\overline\mu=\operatorname{vol}_g$ for the spacetime volume measure on $M$ and $\mu=\overline\mu|_D$ for its restriction to the diamond $D$. In local coordinates $z$, this measure is given by $d\overline\mu=\sqrt{|\det(g_{\alpha\beta}(z))|}\,d^dz$. We call $\overline\mu$ ambient volume to distinguish it from the restricted measure $\mu$; for any measurable set $B\subset M$, $\mu(B)=\overline\mu(B\cap D)$. Fix a smooth future unit timelike field $T$ on a neighborhood of $D$ and define the positive Riemannian metric $h = g+2T^\flat \otimes T^\flat$. We call $h$ the observer metric associated with $T$. Write $d_h$ for its induced distance and $B_h(x,R)=\{y:d_h(x,y)\leq R\}$ for the closed ball of radius $R$ centered at $x$. Such a vector field is guaranteed to exist on a globally hyperbolic manifold by Bernal--S\'anchez~\cite{BernalSanchez2005}, where we take the Cauchy time function $t$ and write $T = -\nabla t / \sqrt{-g(\nabla t,\nabla t)}$. 

Let $C_\rho$ be a finite poset with an embedding $f: C_\rho \to D$ such that the map is injective and the partial order is exactly the induced causal order. For the embedded sample $P = f(C_\rho) \subset D$ we define the count-volume estimator as $\nu = \rho^{-1} \sum_{p\in P} \delta_p$.

A hat indicates an estimate constructed from the finite causal set. For example, the proper-time estimate introduced below is denoted $\hat\tau(p,q)$, while $\tau_g(p,q)$ denotes the continuum proper time between the corresponding sample points.

Then fix a relatively compact open neighborhood $\Omega$ of $D$, with $T$ defined on a neighborhood of $\overline \Omega$. This neighborhood provides room to move the endpoints of small diamonds and to thicken their boundaries. All sufficiently small constructions centered near $D$ remain in $\Omega$, and constants may depend on the fixed geometry on $\overline \Omega$. The sample points lie in $D$; the larger neighborhood $\Omega$ provides room for auxiliary geometric constructions beyond $\partial D$.

We take closures in the manifold topology. The distance induced by the smooth Riemannian metric $h$ gives the same topology, and we use this distance to measure separation from boundaries.

In estimates, $C$ denotes a positive constant depending on the fixed geometry and stated fixed parameters; its value may change between estimates. The notation $C_N$ allows additional dependence on $N$. Constants assigned specific roles, such as $K$, $A_{\mathrm{br}}$, and $A_{\mathrm{cand}}$, retain their chosen values.

We express lengths in a fixed unit, with $\rho$ denoting the corresponding dimensionless density, and set
\[ 
L_\rho=\log(e+\rho), \qquad s_{\min}=\rho^{-1/d}L_\rho^{2/d}. 
\]
The cutoff $s_{\min}$ will be the minimum height of the diamonds used to test density. It is chosen so that these regions shrink as $\rho$ increases while still containing many points on average. Indeed,
\[
 \rho s_{\min}^d=L_\rho^2. 
\]
The volume estimate below therefore implies that the smallest test diamonds have expected Poisson counts comparable to $L_\rho^2$. At this cutoff, the relative tolerance in (V1) is $K/\sqrt{L_\rho}$, which tends to zero.

For sufficiently small $s>0$, define the fixed-axis diamonds
\[ E_s(x) = J^+(\exp_x^g(-sT_x/2)) \cap J^-(\exp_x^g(sT_x/2)). \]
Fix an upper test height $s_0>0$ as supplied by Lemma~\ref{lem:small-diamond-stability}. The same formula defines the comparison diamonds used in its proof at heights up to $2s_0$.

\begin{definition}[Uniform volume conditions]
\label{def:deterministic-volume-event}
	For a fixed finite constant $K$, we say that the embedded sample satisfies the deterministic volume conditions if 
	\[ |\nu(E_s(x)) - \overline \mu (E_s(x))| \leq K \sqrt{\frac{L_\rho}{\rho s^d}}\overline \mu (E_s(x)), \quad s_{\min} \leq s \leq s_0, \quad E_s(x) \subset D, \tag{V1}\label{eq:V1} \]
	\[
	|\nu(D)-\mu(D)| \leq K\sqrt{L_\rho/\rho}.
	\tag{V2}\label{eq:V2}
	\]
	Condition~\eqref{eq:V1} is required simultaneously for every center and test height satisfying the displayed containment and height restrictions. 
\end{definition}

Whenever (V1) is applied, the test diamond must be contained in $D$ and its height must lie in $[s_{\min},s_0]$.

Conditions (V1) and (V2) are a quantitative form of the idea that counting causal-set elements should measure spacetime volume. In the original proposal of Bombelli, Lee, Meyer, and Sorkin~\cite{Bombelli1987}, uniform density was understood to allow the Poisson-type fluctuations expected from a random sprinkling, rather than exact equality between number and volume. 

Condition (V1) formalizes this fluctuation on the regular diamonds $E_s(x)$, whose height controls their extent in every direction. Because a Poisson count with mean $m$ has standard deviation $\sqrt{m}$ and its relative fluctuations decrease as $m^{-1/2}$, the tolerance in (V1) has this square-root counting dependence along with a logarithmic allowance for controlling all the test diamonds simultaneously. Condition (V2) controls the total count in $D$. Together with (V1), it also controls the sample near $\partial D$. 

\subsection{Small diamonds and boundary estimates}

\begin{figure}[htbp]
  \centering
  \includegraphics[width=.80\linewidth]{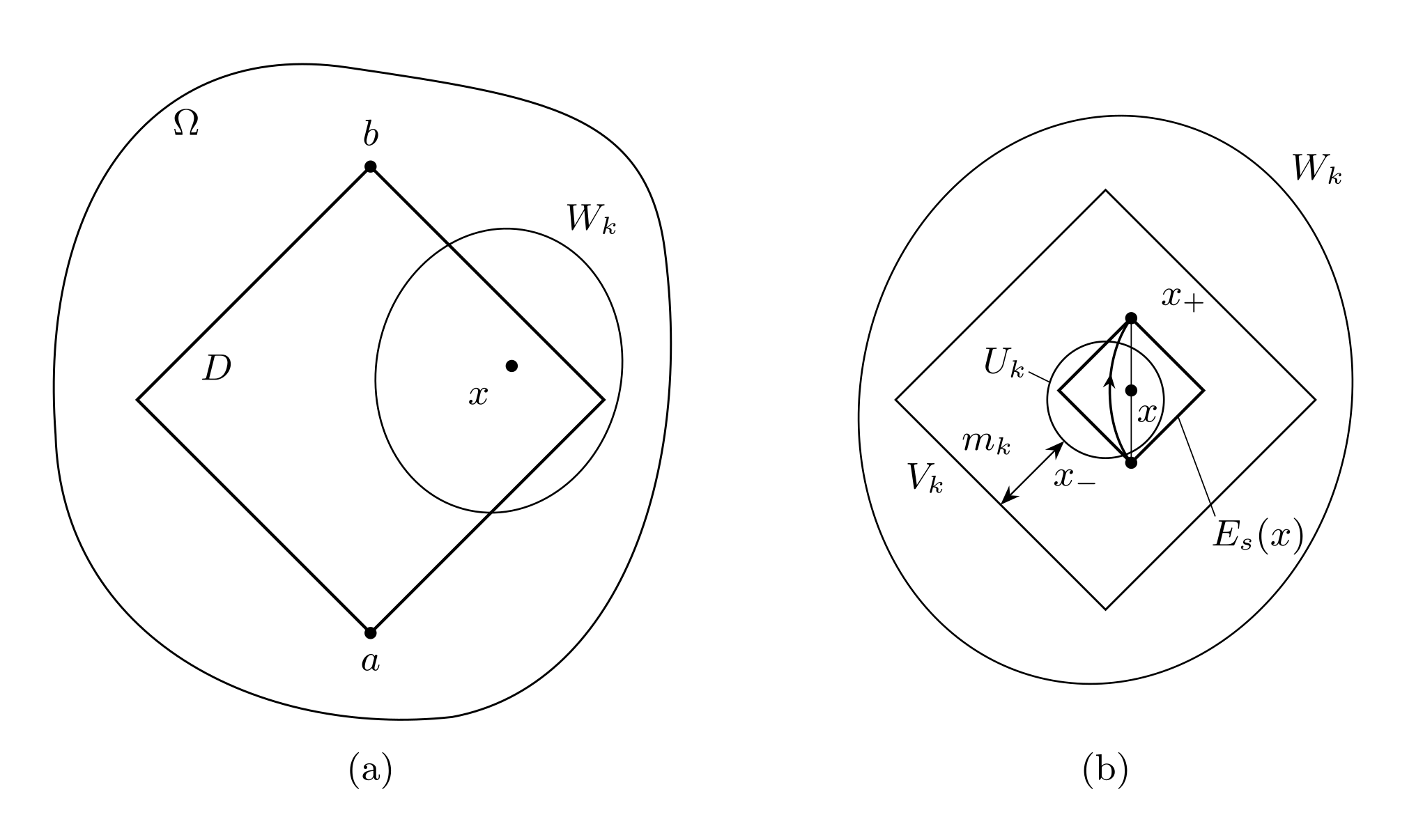}
  \caption{The ambient region and a local neighborhood.
  (a) The sampling diamond $D$ lies inside $\Omega$;
  a normal neighborhood $W_k$ may cross $\partial D$.
  (b) The $h$-distance between $\overline{U_k}$ and
  $\partial V_k$ is $m_k>0$. Once the tips $x_-$ and $x_+$
  of the diamond $E_s(x)$ lie in $V_k$, causal convexity
  of $V_k$ in $M$ ensures that the entire diamond remains
  in $V_k$, and hence in $W_k$. The curved arrow represents
  a causal curve between these tips.}
  \label{fig:local-neighborhoods}
\end{figure}

The next lemma records two useful features of the test diamonds. First, their height $s$ controls both diameter and volume. In flat spacetime, coordinates with time direction $T_x$ put the tips at $(\pm s/2,0)$, and a point $(t,z)$ belongs to the diamond precisely when $|z|\leq s/2-|t|$, where $|z|$ is the Euclidean spatial norm. Thus the temporal and spatial extents both shrink with $s$. For the fixed-axis family defined by $T$, the curved-spacetime estimate is $E_s(x)\subset B_h(x,Cs)$. An arbitrary causal interval whose axis approaches a null direction can instead have small proper-time height while retaining a large $h$-diameter.

The second feature concerns changing the center and height by a total amount at most $\epsilon$, measured using $d_h(x,x')+|s-s'|$. Assume that $\epsilon\leq cs$, with $c>0$ sufficiently small as in the next lemma. This restriction keeps the nearby center $x'$ inside $E_s(x)$ and allows a positive inner comparison height. Increasing the height about $x'$ by a suitable multiple of $\epsilon$ produces a diamond containing $E_s(x)$, while decreasing it produces a diamond contained in $E_s(x)$. The region between these two comparison diamonds has volume at most $C\epsilon s^{d-1}$. Geometrically, this is a thickness of order $\epsilon$ over a boundary whose area, measured with $h$, is of order $s^{d-1}$. This estimate will let us pass from counting bounds on a finite mesh of test diamonds to bounds for every diamond in the family.

The proof below establishes the size and containment statements directly. The technical estimate controlling how a light cone moves with its tip, including at the cone vertex, is proved separately in Appendix~\ref{app:light-cone-stability}.

\begin{lemma}[Size and stability of small diamonds]
\label{lem:small-diamond-stability}
There exist $s_0,c,C,A_{\mathrm{br}}>0$ such that, for every center $x\in\overline\Omega$ and height $0<s\leq s_0$,
	\begin{equation}\label{eq:regular-diamond-size}
B_h(x,cs) \subset E_s(x) \subset B_h(x,Cs), \qquad cs^d \leq \overline \mu(E_s(x)) \leq Cs^d.
\end{equation}
	If also $x'\in\overline\Omega$, $0<s'\leq s_0$, and $d_h(x,x')+|s-s'| \leq \epsilon \leq cs$, then
	\[ E_{s'-A_{\mathrm{br}}\epsilon}(x')\subset E_s(x) \subset E_{s'+A_{\mathrm{br}}\epsilon}(x'), \]
with $s'-A_{\mathrm{br}}\epsilon>0$, and
	\[ \overline \mu(E_{s'+A_{\mathrm{br}}\epsilon}(x') \setminus E_{s'-A_{\mathrm{br}}\epsilon}(x')) \leq C\epsilon s^{d-1}. \]
\end{lemma}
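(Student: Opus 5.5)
The proof works in normal-type coordinates around each point and reduces everything to uniform comparison with the flat model. By compactness of $\overline\Omega$, cover it by finitely many sets $U_k$ with $\overline{U_k}\subset V_k\subset W_k$. Here $W_k$ is a convex normal neighborhood and $V_k$ is causally convex in $M$, with $h$-margin $m_k>0$ between $\overline{U_k}$ and $\partial V_k$, as in Figure~\ref{fig:local-neighborhoods}. Let $m=\min_k m_k$. Choose $s_0$ so small that $B_h(x,Cs_0)\subset V_k$ whenever $x\in U_k$. For $x\in\overline\Omega$, use exponential coordinates at $x$ with $g_{\alpha\beta}(x)=\eta_{\alpha\beta}$ and $T_x=\partial_0$. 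By smooth dependence on $x$ over the compact set, $g_{\alpha\beta}(z)=\eta_{\alpha\beta}+O(|z|)$, with constants uniform in $x$. In these coordinates $h$ is uniformly equivalent to the Euclidean metric.

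\textbf{Size estimates.} Rescale $z=s\,w$. The tips become $(\pm\tfrac12,0)$, and the rescaled metric is $\eta+O(s)$ in $C^1$ on a fixed ball. Causal convexity of $V_k$ keeps $E_s(x)$ inside the chart. In the rescaled chart, the causal futures and pasts of the tips converge to the flat cones; concretely, causal curves of $\eta+O(s)$ lie in cones of aperture $1+O(s)$. It follows that the rescaled diamond lies between the flat diamonds with tips at $\pm(\tfrac12\mp Cs)$. This gives $E_s(x)\subset B_h(x,Cs)$. It also gives $B_h(x,cs)\subset E_s(x)$, since the flat diamond contains a Euclidean ball of radius about $1/4$ about the origin. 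Both volume bounds follow, because $\sqrt{|\det g|}$ is bounded above and below.

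\textbf{Stability.} Let $x_\pm=\exp_x(\pm sT_x/2)$ and $x'_\pm$ be the corresponding tips. Smoothness of $(x,s)\mapsto \exp_x(\pm sT_x/2)$ gives $d_h(x_\pm,x'_\pm)\le C\epsilon$. We want $J^+(x'_-)\supset J^+(y)$ for the tip $y$ of the height-$s'+A\epsilon$ diamond about $x'$; that tip lies $\sim A\epsilon/2$ further to the past along the axis of $x'$. It suffices to show $x_-\in I^+(y)$, since then $J^+(x_-)\subset J^+(y)$. In the rescaled chart around $x'$, the point $x_-$ lies within rescaled distance $C\epsilon/s$ of the flat tip, while $y$ sits $A\epsilon/(2s)$ below it along the time axis. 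The metric in the rescaled chart is $O(s)$-close to flat. Hence for $A_{\mathrm{br}}$ large, $x_-$ lies inside the timelike cone of $y$, of opening bounded below. The condition $\epsilon\le cs$ keeps these points in the region where this perturbative statement holds. The future tip and the inner comparison diamond (with $s'-A\epsilon\ge s/2>0$) are handled symmetrically.

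\textbf{Shell volume.} The shell $E_{s'+A\epsilon}(x')\setminus E_{s'-A\epsilon}(x')$ lies in the union of $J^+(y_-)\setminus J^+(y_-')$ and its past analogue, intersected with $B_h(x',Cs)$. Here $y_-,y'_-$ are tips separated by $2A\epsilon$ along one timelike geodesic. This region is bounded by two achronal boundaries of light cones; this is the estimate deferred to Appendix~\ref{app:light-cone-stability}. Write each cone boundary near the scale-$s$ ball as a Lipschitz graph $t=\phi(\vec z)$ over spatial coordinates, with uniform Lipschitz constant. Then the $h$-distance between the two graphs is $O(\epsilon)$, which gives volume $\le C\epsilon\, s^{d-1}$ by Fubini over a spatial ball of radius $Cs$. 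The main obstacle is this graph comparison near the vertex, where $\partial J^+$ is not smooth. I would handle it by noting that, for $z$ at distance $\ge 2A\epsilon$ from the vertices, both boundaries are smooth graphs whose displacement is controlled by $C^1$ dependence of $\exp$ on the base point. The remaining ball of radius $C\epsilon$ around the vertex contributes only $O(\epsilon^d)\le O(\epsilon s^{d-1})$.
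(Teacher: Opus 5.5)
Your proposal is correct and follows essentially the same route as the paper. It uses the same nested cover with causally convex $V_k$, obtains the inclusions by sliding each tip a multiple of $\epsilon$ along its axis under a uniform local cone comparison, and bounds the shell volume by slicing along coordinate time lines over a spatial ball of radius $Cs$ with $O(\epsilon)$ slice differences. The one variation is at the cone vertex. You excise an $O(\epsilon)$ spatial neighborhood and bound its contribution by $O(\epsilon^d)$; your $\exp$-parametrization combined with uniformly Lipschitz graphs already gives the uniform $O(\epsilon)$ vertical gap, vertex included. The paper instead proves the arrival-time Lipschitz bound directly through the vertex in Lemma~\ref{lem:arrival-time-stability}.
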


\begin{proof}
By strong causality, choose finitely many nested open neighborhoods $U_k\Subset V_k\Subset W_k$ such that the $U_k$ cover $\overline\Omega$, each $V_k$ is causally convex in $M$, and each $W_k$ is convex normal~\cite[Lemma~3.21 and Remark~3.23]{MinguzziSanchez2008}. Here $U\Subset V$ means that $\overline U$ is compact and contained in $V$. Causal convexity means that every causal curve with endpoints in $V_k$ remains in $V_k$. The sets $U_k$ contain the centers, the larger sets $V_k$ will contain the diamonds, and the sets $W_k$ provide normal coordinates. Choose these coordinate neighborhoods small enough that $\nabla t$ is past timelike and $\partial_t$ is future timelike. Put $m_k=\operatorname{dist}_h(\overline U_k,\partial V_k)>0$.

Smoothness of the tip maps and compactness of $\overline\Omega$ give a preliminary height $s_*>0$ and a constant $C_0>0$ such that, writing $x_\pm=\exp_x^g(\pm sT_x/2)$,
\[
d_h(x,x_\pm)\leq C_0s,
\qquad x\in\overline\Omega,\quad 0<s\leq2s_*.
\]
Choose $0<s_0\leq s_*$ so that $(2C_0+1)s_0<\min_k m_k$. For $x\in U_k$, both tips lie in $V_k$, hence $E_s(x)\subset V_k\subset W_k$ by causal convexity. The same choice covers other centers in $\overline\Omega$ within $s_0$ of $x$ and heights up to $2s_0$.

On compact subsets of these temporal charts, future causal vectors satisfy $|v|_{\mathrm{Eucl}}\leq C\,dt(v)$: normalize their coordinate norms to one and take the positive minimum of $dt(v)$. Integration along causal curves, together with the $O(s)$ tip separation, gives $E_s(x)\subset B_h(x,Cs)$. In coordinates, $x_\pm=x\pm(s/2)T_x+O(s^2)$. Since $T_x$ stays uniformly inside the timelike cones, every point of $B_h(x,cs)$, for sufficiently small $c$, lies on two future timelike coordinate segments from $x_-$ to $x_+$. This proves the reverse inclusion. The volume bounds follow from the uniform upper and lower bounds on the coordinate volume density. Decrease $s_0$ so these estimates also hold up to height $2s_0$.

The tip maps have bounded derivatives on a neighborhood of $\overline\Omega\times[0,2s_0]$. Thus the nearby tips $x'_\pm=\exp_{x'}^g(\pm s'T_{x'}/2)$ satisfy $d_h(x_\pm,x'_\pm)\leq C_1\epsilon$. Uniform local cone comparison gives $\kappa>0$ such that moving a tip backward by proper time $u$ along its central timelike geodesic puts $B_h(x'_\pm,\kappa u)$ to the chronological future of the new tip; moving forward puts this ball to its chronological past. Choose $\kappa A_{\mathrm{br}}/2>C_1$ and $c\leq1/[2(A_{\mathrm{br}}+1)]$, and define
\[
q_\pm=\exp_{x'}^g\bigl(\pm(s'+A_{\mathrm{br}}\epsilon)T_{x'}/2\bigr),
\qquad
r_\pm=\exp_{x'}^g\bigl(\pm(s'-A_{\mathrm{br}}\epsilon)T_{x'}/2\bigr).
\]
Then
\[
q_-\preceq x_-\preceq r_-,
\qquad
r_+\preceq x_+\preceq q_+,
\]
where $\preceq$ denotes causal precedence, including equality. Causal transitivity proves the required diamond inclusions. Moreover, $s'-A_{\mathrm{br}}\epsilon\geq s/2>0$ and $s'+A_{\mathrm{br}}\epsilon\leq2s_0$, so all comparison diamonds lie in the same normal neighborhood by the first paragraph.

In the coordinates $(t,z)$ of $W_k$, the spatial coordinates of every point in either comparison diamond lie in one Euclidean ball of radius $Cs$. We compare their time intervals along the lines $z=\mathrm{constant}$, as illustrated in Figure~\ref{fig:diamond-slices}. Corresponding tips of the two diamonds are $O(\epsilon)$ apart. Refine the finite cover, if needed, to use fixed coordinate cylinders in each $W_k$ satisfying the common-domain hypotheses of Lemma~\ref{lem:arrival-time-stability}: local timelike cone comparison allows a small tip region with a larger time interval. Restrict all four arrival graphs to the same spatial ball of radius $Cs$.

By Lemma~\ref{lem:arrival-time-stability}, the future-cone arrival time $\alpha(z)$ from the past tip and the past-cone arrival time $\beta(z)$ from the future tip each change by at most $C\epsilon$. The slice length is $[\beta(z)-\alpha(z)]_+$, where $[u]_+=\max\{u,0\}$; this also gives zero when the slice is empty. Since the positive-part function is $1$-Lipschitz, the difference in slice lengths is at most $C\epsilon$. The diamonds are nested, and the common spatial ball has coordinate volume at most $Cs^{d-1}$. Integrating the slice differences and using the uniform upper bound on the spacetime volume density proves the required $C\epsilon s^{d-1}$ bound.
\end{proof}

\begin{figure}[htbp]
\centering
\includegraphics[width=.42\linewidth]{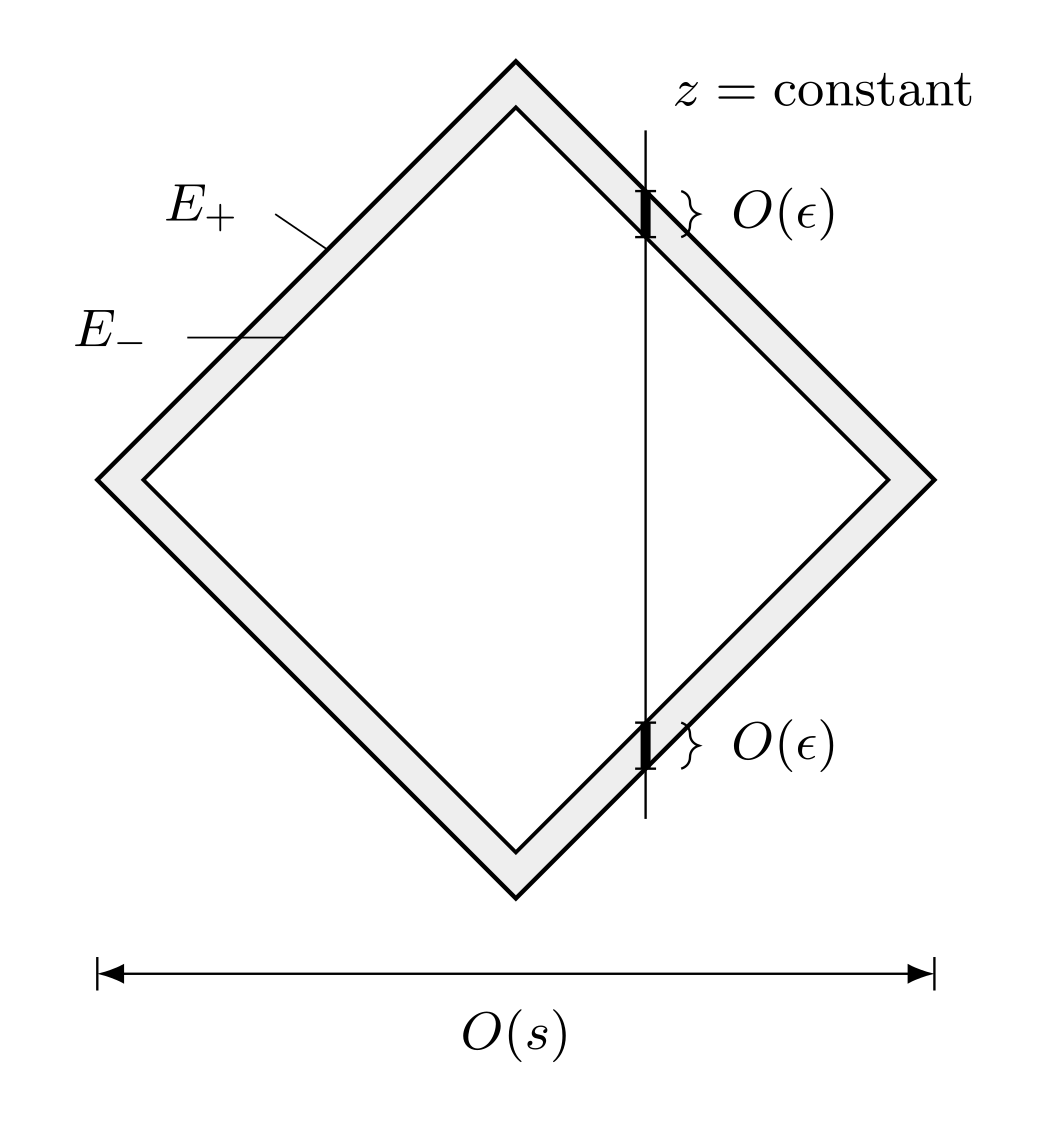}
\caption{The comparison diamonds $E_-=E_{s'-A_{\mathrm{br}}\epsilon}(x')$ and $E_+=E_{s'+A_{\mathrm{br}}\epsilon}(x')$ intersect a coordinate line in nested time intervals. Each extra segment has length $O(\epsilon)$, while the common spatial base has extent $O(s)$. Integrating the slice differences gives the volume bound $C\epsilon s^{d-1}$. The straight boundaries are schematic.}
\label{fig:diamond-slices}
\end{figure}

A set $A\subset M$ is achronal if no two of its points can be joined by a timelike curve. It may contain null segments and need not be a smooth hypersurface. Locally, achronality bounds how steeply its time coordinate can change with spatial position: otherwise two points would be timelike related. This is why a thin neighborhood of an achronal set has controlled volume, even when the set has corners, as illustrated in Figure~\ref{fig:achronal-tube}.

For $w>0$, write
\[
N_w^h(A)=\{p\in M:d_h(p,A)<w\}.
\]
This is the tube of width $w$ around $A$, with distance measured by the positive metric $h$. The tube is allowed to extend outside $D$; its volume is measured using $\overline\mu$, and no additional sample points are assumed. The neighborhood $\Omega$ provides room for these tubes and their covering diamonds.

\begin{figure}[htbp]
\centering
\includegraphics[width=.60\linewidth]{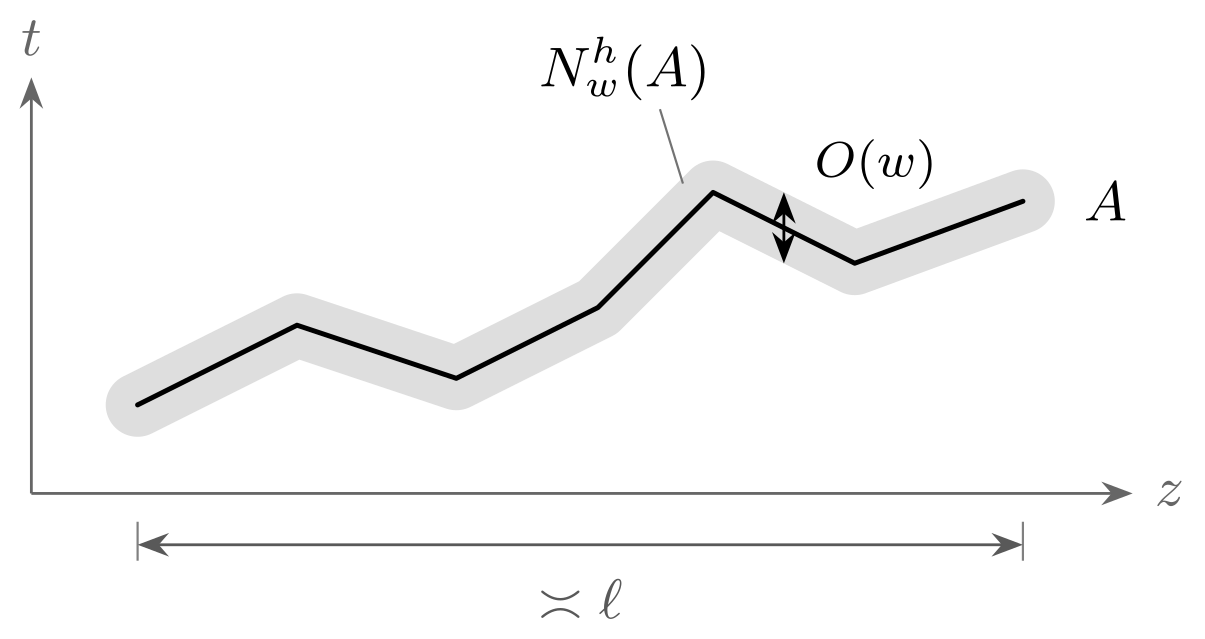}
\caption{An achronal set $A$ and its $h$-neighborhood in a local spacetime chart. Corners and null segments are allowed. Over spatial extent of order $\ell$, the shaded tube has thickness of order $w$, giving the volume bound $Cw\ell^{d-1}$.}
\label{fig:achronal-tube}
\end{figure}

\begin{samepage}
\begin{lemma}[Achronal tube bounds]
\label{lem:achronal-tube-bounds}
There exist $C,\ell_0>0$ such that, for every $x\in D$, $0<w\leq\ell\leq\ell_0$, and achronal set $A\subset B_h(x,\ell)$,
\[
\overline\mu(N_w^h(A))\leq Cw\ell^{d-1}.
\]
The tube can be covered by at most $C(\ell/w)^{d-1}$ diamonds $E_{Cw}(y)$ contained in $\Omega$.
\end{lemma}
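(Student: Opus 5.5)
Work in the same finite cover $U_k\Subset V_k\Subset W_k$ used in the proof of Lemma~\ref{lem:small-diamond-stability}, with temporal coordinates $(t,z)$ on $W_k$ in which $\nabla t$ is past timelike and $\partial_t$ is future timelike. Choose $\ell_0$ small enough that, for every $x\in D$, the $h$-ball $B_h(x,3\ell_0)$ (and the diamonds $E_{Cw}(y)$ built below) lies in one chart $W_k$ and in $\Omega$. On a compact subset of these charts, the coordinate Euclidean metric and $h$ are uniformly comparable, and the coordinate volume density is uniformly bounded above and below. It therefore suffices to prove coordinate statements with constants.

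\textbf{Step 1: graph description.} Uniform cone comparison gives a slope $\lambda<\infty$ such that any two points $(t_1,z_1),(t_2,z_2)$ in the chart with $|t_1-t_2|>\lambda|z_1-z_2|$ are chronologically related. This is where timelike cones are uniformly open around $\partial_t$: the coordinate segment joining such points is future timelike. Achronality of $A$ therefore implies $|t_1-t_2|\le\lambda|z_1-z_2|$ for all pairs in $A$. Hence the projection $\pi(t,z)=z$ is injective on $A$, and $A$ is the graph of a $\lambda$-Lipschitz function $\phi$ on $\pi(A)$, a subset of a Euclidean ball of radius $C\ell$. Extend $\phi$ by McShane/Kirszbraun to a $\lambda$-Lipschitz function on that whole ball.

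\textbf{Step 2: volume bound.} Any point $p=(t,z)$ with $d_h(p,A)<w$ has Euclidean distance at most $Cw$ from some $(\phi(z_0),z_0)$. Then $|z-z_0|\le Cw$ and $|t-\phi(z)|\le Cw+\lambda Cw$, so the tube lies in $\{(t,z): z\in B(\pi(A),Cw),\ |t-\phi(z)|\le C'w\}$. Since $w\le\ell$, the base has Euclidean $(d-1)$-volume at most $C\ell^{d-1}$. Fubini then gives coordinate volume at most $Cw\ell^{d-1}$, and the density bound converts this to $\overline\mu$.

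\textbf{Step 3: covering.} Take a grid of spacing $w$ in the $z$-ball of radius $C\ell$; it has at most $C(\ell/w)^{d-1}$ cells. Over each cell $Q$ meeting the enlarged base, the slab $\{z\in Q,\ |t-\phi(z)|\le C'w\}$ has Euclidean diameter at most $C''w$, because $\phi$ varies by at most $\lambda\sqrt{d}\,w$ on $Q$. Choose $y$ to be the slab point over the cell center. By the inner inclusion $B_h(y,cs)\subset E_s(y)$ of \eqref{eq:regular-diamond-size} with $s=C''' w$, this diamond contains the slab. This requires $C'''w\le s_0$, which holds after shrinking $\ell_0$, and containment in $\Omega$ holds by the choice of $\ell_0$. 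Discard cells whose slabs miss the tube; this does not increase the count.

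The main obstacle is Step~1, namely making the slope constant $\lambda$ uniform across charts and extending the Lipschitz comparison to points of $A$ joined within the chart. Causal convexity of $V_k$ ensures that chronological relations produced by coordinate segments inside the chart are genuine relations in $M$, which is what contradicts the achronality of $A$.
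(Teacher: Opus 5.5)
Your proposal is correct and follows the paper's proof. Achronality forces a uniformly Lipschitz graph over the spatial coordinates, via timelike straight coordinate segments. The Lipschitz extension then places the tube in a vertical strip of thickness $Cw$ over a base of volume $C\ell^{d-1}$, and small pieces are enclosed in $h$-balls and hence in diamonds $E_{Cw}(y)$ by Lemma~\ref{lem:small-diamond-stability}.

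The differences are cosmetic. You count slabs over a spatial grid directly, whereas the paper counts spacetime grid cells of side $w$ by comparing their total volume with that of a $Cw$-tube. Your appeal to causal convexity is unnecessary, since a timelike coordinate segment in the chart is already a timelike curve in $M$.
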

\end{samepage}

\begin{proof}
Choose finitely many coordinate boxes $V_k$ whose closures lie in $\Omega$, together with smaller coordinate boxes $U_k$ whose closures lie in $V_k$, so that the $U_k$ cover $D$. A coordinate box is the inverse image of an open rectangular box under a coordinate chart. Choose the larger boxes so that the coordinate time direction is timelike and a uniform narrow cone consists of timelike vectors. Choose $\ell_0$ so that, whenever $x\in D\cap U_k$, the ball $B_h(x,2\ell_0)$ lies in $V_k$. Coordinate distances, $h$-distances, and volume densities have uniform comparison bounds on these boxes.

In this box, achronality implies that $A$ is the graph of a function over part of the spatial coordinates, with a uniform Lipschitz constant: two points violating that bound would be joined by a timelike straight coordinate segment. Extend the function to the full spatial base with the same Lipschitz constant. The tube lies in a vertical strip of thickness $Cw$ about this graph, over a spatial region of volume at most $C\ell^{d-1}$. Integrating the bounded spacetime volume density gives the claimed estimate.

Take coordinate grid cells of side comparable to $w$ meeting the tube. Their disjoint interiors lie in a tube of width $Cw$, which has volume at most $Cw\ell^{d-1}$ by the same graph argument. Each cell has volume at least $cw^d$, so there are at most $C(\ell/w)^{d-1}$ cells. Each is contained in an $h$-ball of radius $Cw$, and hence in a diamond of height $Cw$ by Lemma~\ref{lem:small-diamond-stability}. Choose $\ell_0$ small enough that these heights are at most $s_0$ and all the covering diamonds remain in $\Omega$.
\end{proof}

Covering the fixed diamond $D$ by finitely many balls of fixed radius gives the global bound $\overline\mu(N_w^h(A))\leq Cw$ for any achronal $A\subset D$ and all sufficiently small $w$. If a boundary is contained in a fixed finite union of achronal sets, the bounds follow by summing their tube volumes and covers. The constants then also depend on the number of sets. We will use these consequences for the boundaries of $D$ and of restricted causal pasts, futures, and intervals.

\subsection{Poisson realization of V1 and V2}\label{sec:poisson-volume}

\begin{theorem}[Poisson volume bounds]
\label{thm:poisson-volume-bounds}
For every $N>0$, there is $K_N<\infty$ such that a Poisson sample on $D$ of intensity $\rho\mu$ satisfies (V1) and (V2), with $K=K_N$, with probability at least $1-\rho^{-N}$ for all sufficiently large $\rho$.
\end{theorem}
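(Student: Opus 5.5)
The plan is to prove (V2) directly from a Poisson tail bound, and to get (V1) from a union bound over a finite mesh of centers and heights, then pass to all diamonds using the sandwich estimates of Lemma~\ref{lem:small-diamond-stability}.

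\textbf{Tail bound and (V2).} For a Poisson variable $X$ with mean $m$, Bernstein's inequality gives
\[
\Pr(|X-m|\geq t)\leq 2\exp\!\bigl(-t^2/(2(m+t/3))\bigr).
\]
For (V2) take $X=\rho\nu(D)$, $m=\rho\mu(D)$ and $t=K\sqrt{\rho L_\rho}$. Since $t\ll m$, the exponent is of order $K^2L_\rho/\mu(D)$. Choosing $K$ large relative to $N$ makes this probability at most $\rho^{-N}/2$.

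\textbf{(V1) on a mesh.} Let $\epsilon=\eta\, s_{\min}\sqrt{L_\rho/(\rho s_{\min}^d)}=\eta\,s_{\min}L_\rho^{-1/2}$, with a small constant $\eta$.
\begin{itemize}
\item Take an $h$-net of centers in $\overline\Omega$ with spacing $\epsilon/2$. This has at most $C\epsilon^{-d}$ points.
\item Take heights in $[s_{\min},2s_0]$ with spacing $\epsilon/2$.
\item The mesh then has at most $C\epsilon^{-d-1}\leq C\rho^{2}$ elements, which is polynomial in $\rho$.
\item For every mesh diamond $E$ with $\overline\mu(E)\geq cs_{\min}^d$, apply the tail bound with $m=\rho\,\overline\mu(E\cap D)$.
\end{itemize}
Using $E$ rather than only $E\subset D$ keeps the mesh independent of the containment condition. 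When $E\subset D$ we have $m=\rho\overline\mu(E)\asymp\rho s^d\geq L_\rho^2$. Take $t=\tfrac12K\sqrt{L_\rho/(\rho s^d)}\,m\asymp K\sqrt{L_\rho m}$. Then $t\leq m$ because $L_\rho/(\rho s^d)\leq L_\rho^{-1}$, so the bound is $\exp(-cK^2L_\rho)=(e+\rho)^{-cK^2}$. Choosing $K_N$ large makes the union over the $C\rho^2$ mesh elements at most $\rho^{-N}/2$. For mesh diamonds not contained in $D$, only the upper deviation of $\nu(E\cap D)$ is needed later, and the same bound controls it.

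\textbf{Passing to all diamonds.} This is the main obstacle, because the comparison diamonds need not lie in $D$. Given an admissible $E_s(x)\subset D$, pick a mesh point $(x',s')$ with $d_h(x,x')+|s-s'|\leq\epsilon$. Take its enlarged and shrunken companions, which lie on the mesh after rounding the heights $s'\pm A_{\mathrm{br}}\epsilon$. By Lemma~\ref{lem:small-diamond-stability} we have the sandwich
\[
E_-\subset E_s(x)\subset E_+.
\]
Since $\epsilon\ll s$, the condition $\epsilon\leq cs$ holds. Monotonicity then gives
\[
\nu(E_-\cap D)\leq\nu(E_s(x))\leq\nu(E_+\cap D).
\]
The inner diamond $E_-$ lies in $D$. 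The outer diamond $E_+$ may leave $D$, but its points outside $D$ carry no sample mass. Applying the mesh bound to both gives
\[
|\nu(E_s(x))-\overline\mu(E_s(x))|\leq \tfrac12K\sqrt{L_\rho/(\rho s^d)}\;C s^d+C\epsilon s^{d-1}.
\]
The second term is at most $C\eta s_{\min}L_\rho^{-1/2}s^{d-1}$. This is bounded by $C\eta\sqrt{L_\rho/(\rho s^d)}\,s^d$, since $s_{\min}L_\rho^{-1/2}/s\leq (L_\rho/(\rho s^d))^{1/2}$ for $s\geq s_{\min}$: both sides equal $L_\rho^{1/2}/(\rho^{1/d}s)$ up to the power $s^{d/2}$ versus $s$, and the inequality holds because $s\geq s_{\min}$ and $d\geq 2$.

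The point that needs checking is the bookkeeping between $\overline\mu(E_\pm\cap D)$ and $\overline\mu(E_s(x))$. The difference is at most $\overline\mu(E_+\setminus E_-)\leq C\epsilon s^{d-1}$, and this is absorbed by choosing $\eta$ small and $K_N$ large. The net is finite and $\epsilon$ is polynomial in $\rho^{-1}$, so the total failure probability is at most $\rho^{-N}$ for large $\rho$.
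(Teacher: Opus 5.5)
Your architecture matches the paper's: a Poisson (Bernstein) tail bound, a union bound over a center--height mesh fixed before sampling, and the sandwich of Lemma~\ref{lem:small-diamond-stability}, with the outer comparison sets intersected with $D$. Your (V2) argument and your tail estimates on the mesh diamonds are fine.

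The step that fails is the absorption of the discretization error. Your spacing $\epsilon=\eta s_{\min}L_\rho^{-1/2}$ is too coarse when $d\geq3$. The sandwich costs $C\epsilon s^{d-1}$ in volume, while the (V1) allowance is of order $K\sqrt{L_\rho s^d/\rho}$. Using $\rho s_{\min}^d=L_\rho^2$,
\[
\frac{C\epsilon s^{d-1}}{K\sqrt{L_\rho s^d/\rho}}=\frac{C\eta}{K}\Bigl(\frac{s}{s_{\min}}\Bigr)^{d/2-1}.
\]
Your inequality $s_{\min}L_\rho^{-1/2}/s\leq\sqrt{L_\rho/(\rho s^d)}$ is an equality at $s=s_{\min}$. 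It persists for larger $s$ only when $d=2$, because the left side decays like $s^{-1}$ and the right side like $s^{-d/2}$. For $d\geq3$ and $s$ near $s_0$, the ratio is of order $(s_0/s_{\min})^{d/2-1}\to\infty$, so no fixed choice of $\eta$ or $K_N$ absorbs it. For example, with $d=4$ and $s=s_0$, you have $\epsilon/s_0\sim\rho^{-1/4}$ against a relative tolerance of order $\rho^{-1/2}L_\rho^{1/2}$.

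The repair is cheap, since the union bound only needs polynomially many mesh points. Take $\epsilon\leq\eta\sqrt{L_\rho/\rho}$, or, as the paper does, $\epsilon_\rho=s_{\min}\rho^{-2}$. Then $\epsilon/s\leq\rho^{-2}$, which is negligible against every relative tolerance, since each is at least of order $\rho^{-1/2}$. The mesh then has $O(\rho^q)$ points, and $cK_N^2$ only has to exceed $N+q$.

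Two smaller bookkeeping points also need fixing.
\begin{enumerate}
\item When $s'\approx s_{\min}$, rounding the inner companion height $s'-A_{\mathrm{br}}\epsilon$ down can take it off your height mesh. Either extend the mesh to $[s_{\min}/2,2s_0]$, or attach the comparison sets $E_{s_k\pm A'\epsilon}(x_j)\cap D$ directly to each mesh pair, as the paper does.
\item Do not tie the mesh tolerance to the same $K$ you want in (V1) and then bound $\overline\mu(E_\pm)$ by $Cs^d$, since $\tfrac12KCs^d\leq K\overline\mu(E_s(x))$ need not hold. Either use a separate mesh constant and enlarge it to $K_N$ at the end, or keep the tolerance relative to $\overline\mu(E_\pm)=\overline\mu(E_s(x))(1+o(1))$.
\end{enumerate}
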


\begin{proof}
For $X\sim\mathrm{Poisson}(m)$, the centered exponential moment is $\mathbb E e^{u(X-m)}=\exp(m(e^u-1-u))$. The exponential Markov inequality, applied with positive and negative $u$ and optimized, gives
\[
\mathbb P\{|X-m|>2\sqrt{ma}+2a\}\leq2e^{-a},
\qquad a>0.
\]
For the upper tail use $e^u-1-u\leq u^2/[2(1-u/3)]$ for $0<u<3$; for the lower tail use $e^{-u}-1+u\leq u^2/2$. The additive term also covers small means.

Put $\epsilon_\rho=s_{\min}(\rho)\rho^{-2}$. Choose finitely many auxiliary centers $x_j$ in a fixed compact neighborhood of $D$ contained in $\Omega$, so that every center of a test diamond is within $h$-distance $\epsilon_\rho$ of some $x_j$. Such a collection is called an $\epsilon_\rho$-net. Choose also finitely many heights $s_k\in[s_{\min},s_0]$ so that every height in this interval differs from some $s_k$ by at most $\epsilon_\rho$. These heights form the height mesh. The centers and heights are chosen before the Poisson sample is drawn. Coordinate grids in the fixed finite chart cover and a subdivision of the height interval give at most
\[
C\epsilon_\rho^{-(d+1)}\leq C\rho^q, \qquad q=(d+1)(2+1/d).
\]
At each pair $(x_j,s_k)$ use the deterministic comparison sets
\[
B^-_{jk}=E_{s_k-A'\epsilon_\rho}(x_j)\cap D,
\qquad
B^+_{jk}=E_{s_k+A'\epsilon_\rho}(x_j)\cap D.
\]
Figure~\ref{fig:finite-diamond-mesh} illustrates the choice of center and height and the resulting comparison diamonds.

\begin{figure}[htbp]
\centering
\includegraphics[width=.94\linewidth]{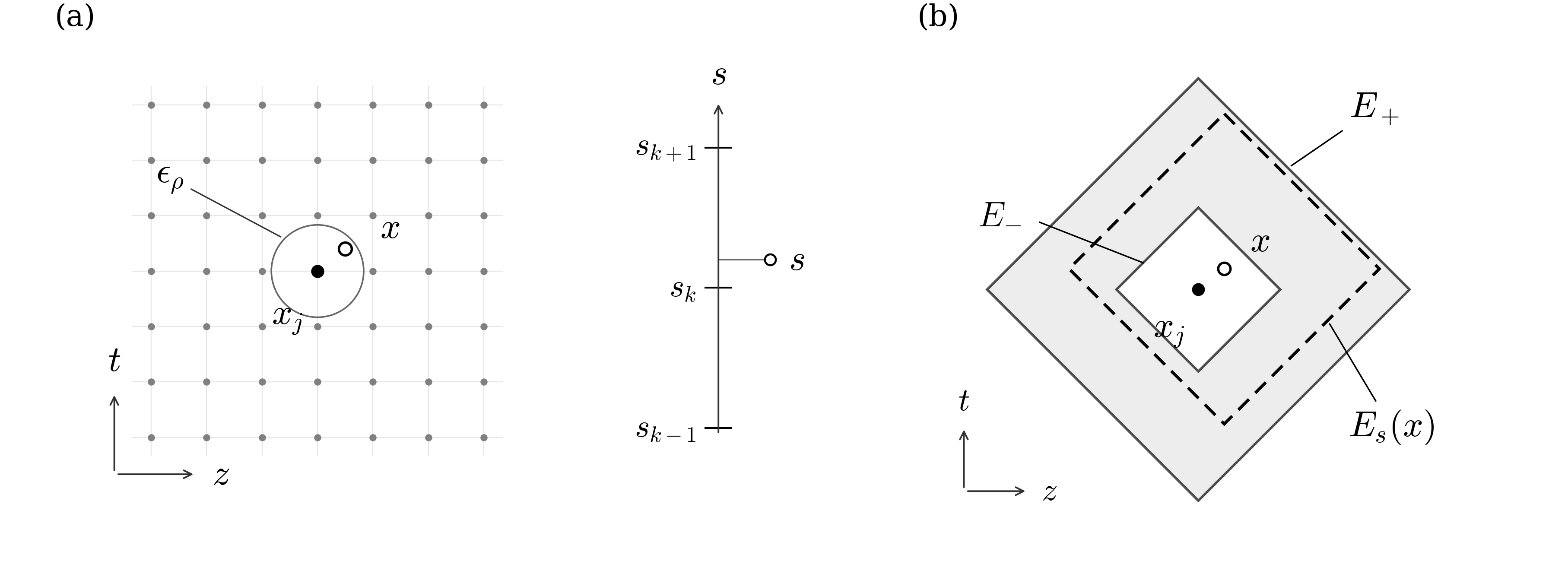}
\caption{Finite approximations to the center and height of a test diamond. (a) Auxiliary grid centers are shown in a flat spacetime chart. The hollow point $x$ lies within $h$-distance $\epsilon_\rho$ of the selected center $x_j$. The separate height scale shows an arbitrary height $s$ and a nearby mesh height $s_k$. These auxiliary centers are chosen before the Poisson sample. (b) The dashed diamond $E_s(x)$ lies between $E_-=E_{s_k-A'\epsilon_\rho}(x_j)$ and $E_+=E_{s_k+A'\epsilon_\rho}(x_j)$. Shading marks $E_+\setminus E_-$. Intersecting the diamonds with $D$ preserves the inclusions. Spacings and displacements are exaggerated.}
\label{fig:finite-diamond-mesh}
\end{figure}
For every test pair $(x,s)$, the chosen approximations satisfy $d_h(x,x_j)+|s-s_k|\leq2\epsilon_\rho$. Choose $A'$ to include this factor of two in the comparison heights supplied by Lemma~\ref{lem:small-diamond-stability}. For large $\rho$, the inner height is positive and the outer height is below $2s_0$. Lemma~\ref{lem:small-diamond-stability} brackets every contained test diamond between these sets and bounds the volume difference by $C\epsilon_\rho s^{d-1}$. Intersection with $D$ preserves both statements and allows the outer comparison diamond to extend beyond the sampled region.

Apply the tail bound to all comparison sets and to $D$ with $a=(N+q+3)L_\rho$. A union bound gives simultaneous success with probability at least $1-\rho^{-N}$ for large $\rho$. Monotonicity between the comparison sets then yields, uniformly over contained test diamonds,
\[
|\nu(E_s(x))-\overline\mu(E_s(x))|
\leq C_N\left(
\sqrt{\frac{s^dL_\rho}{\rho}}
+\frac{L_\rho}{\rho}
+\epsilon_\rho s^{d-1}
\right).
\]
Dividing by $\overline\mu(E_s(x))$, which is bounded above and below by positive multiples of $s^d$, gives
\[ 
\left|\frac{\nu(E_s(x))}{\overline\mu(E_s(x))}-1\right|
\leq C_N\left(
\sqrt{\frac{L_\rho}{\rho s^d}}
+\frac{L_\rho}{\rho s^d}
+\frac{\epsilon_\rho}{s}
\right). 
\]
Since $\rho s_{\min}^d=L_\rho^2$, the second term is absorbed by the first throughout $s_{\min}\leq s\leq s_0$. The third term is at most $\rho^{-2}$ and is also absorbed for $s\leq s_0$ at large $\rho$. Choosing $K_N$ to include these constants proves (V1). The same tail bound on $D$ proves (V2), with its additive $L_\rho/\rho$ term absorbed by $\sqrt{L_\rho/\rho}$.
\end{proof}

To compare samples at different densities, we construct a nested family on one probability space. Let $\mathcal P$ be a Poisson process on $D\times[0,\infty)$ with intensity $\mu\otimes du$. Each point $(x,u)$ carries a mark $u$, an additional nonnegative number recording the density at which the point $x$ enters the sample. At density $\rho$, retain the points with $u\leq\rho$ and project them to $D$. This gives a Poisson sample $P_\rho$ of intensity $\rho\mu$. Increasing $\rho$ adds points while keeping all previously retained points. This joint construction is called a marked-Poisson coupling; it makes statements about eventual behavior as $\rho$ grows refer to a single family of samples. At each density, use $P=P_\rho$ in the definition of $\nu$.

\begin{corollary}[Almost-sure eventual volume bounds]
\label{cor:poisson-eventual-volume-bounds}
Under this coupling, there is a deterministic $K<\infty$ such that (V1) and (V2) hold almost surely for all sufficiently large real $\rho$.
\end{corollary}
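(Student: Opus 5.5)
The plan is a Borel--Cantelli argument along a geometric sequence of densities, followed by interpolation to all real $\rho$ between consecutive sequence points using the monotonicity of the coupling. First I fix $N=2$ in Theorem~\ref{thm:poisson-volume-bounds}, but with the tolerance at density $\rho$ enlarged by a fixed factor. Put $\rho_n=2^n$. Under the marked-Poisson coupling each $P_{\rho_n}$ is a Poisson sample of intensity $\rho_n\mu$. The probability that (V1) or (V2) fails at $\rho_n$ with constant $K_2$ is at most $\rho_n^{-2}=4^{-n}$, which is summable. Borel--Cantelli then gives that almost surely the conditions hold at every $\rho_n$ with $n$ large.

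The point needing care is the passage to real $\rho\in[\rho_n,\rho_{n+1}]$. The count is not monotone in the normalized sense, so for a set $B$ I would use the sandwich
\[
\tfrac{\rho_n}{\rho}\,\nu_{\rho_n}(B)\le \nu_\rho(B)\le \tfrac{\rho_{n+1}}{\rho}\,\nu_{\rho_{n+1}}(B),
\]
which follows because $P_{\rho_n}\subset P_\rho\subset P_{\rho_{n+1}}$. With ratio $2$ this only gives a factor-$2$ error, which is far too crude. I would therefore refine the sequence to $\rho_n=(1+1/n)$-type spacing, for example $\rho_n=\exp(n^{1/2})$. The ratio $\rho_{n+1}/\rho_n-1$ must be $o\bigl(\sqrt{L_\rho/(\rho s^d)}\bigr)$ uniformly in $s\le s_0$, so it must be $\lesssim\sqrt{L_\rho/\rho}$. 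This forces $\rho_n$ to be spaced at relative gaps about $\rho^{-1/2}$, that is, roughly $\rho_n\approx n^2$. The number of densities up to $\rho$ is then polynomial, and the probability bound $\rho_n^{-N}\approx n^{-2N}$ with $N=1$ is still summable. So I take $\rho_n=n^2$, with relative gap $\rho_{n+1}/\rho_n-1\le 3/n\approx3\rho^{-1/2}$.

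On the almost-sure event, the sandwich gives, for each contained test diamond and each $\rho\in[\rho_n,\rho_{n+1}]$,
\[
|\nu_\rho(B)-\overline\mu(B)|\le |\nu_{\rho_{n}}(B)-\overline\mu(B)|+|\nu_{\rho_{n+1}}(B)-\overline\mu(B)|+C\rho^{-1/2}\,\nu_{\rho_{n+1}}(B).
\]
The first two terms are bounded by (V1) at the grid densities. Here $L_{\rho_n}\approx L_\rho$, and $s_{\min}(\rho_n)$ differs from $s_{\min}(\rho)$ by a factor $1+o(1)$. This is a subtlety: a test height $s\ge s_{\min}(\rho)$ might fall slightly below $s_{\min}(\rho_{n+1})$. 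To handle it, I would apply Theorem~\ref{thm:poisson-volume-bounds} with the lower height cutoff taken as $s_{\min}(\rho)/2$. Its proof goes through unchanged, since $\rho s^d\ge 2^{-d}L_\rho^2$ still absorbs the additive term. The third term is at most $C\rho^{-1/2}\overline\mu(B)$, which is dominated by $\sqrt{L_\rho/(\rho s^d)}\,\overline\mu(B)$ because $s\le s_0$. For (V2) the same interpolation error is $C\rho^{-1/2}\le C\sqrt{L_\rho/\rho}$. Collecting the constants gives a deterministic $K$.

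The main obstacle is this interpolation step: choosing the density grid fine enough that the normalization error $\rho_{n+1}/\rho_n-1$ fits inside the smallest relative tolerance, while keeping the failure probabilities summable. I also need to check the small mismatch in the height window $[s_{\min},s_0]$ between neighbouring densities.
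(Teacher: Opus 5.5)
Your proposal is correct and is essentially the paper's argument: Borel--Cantelli along a discrete density sequence, the two-sided sandwich from nestedness, and a lowered height cutoff at the grid densities. The paper uses integer densities $k$ with cutoff $s_{\min}(k+1)$ and $N>2$, so its interpolation error is only $\overline\mu(E_s(x))/\rho$; your sparser grid $\rho_n=n^2$ with $N=1$ instead uses a relative gap of $O(\rho^{-1/2})$, which is still absorbed by the (V1) tolerance since $s\le s_0$. One small slip: $s_{\min}$ is eventually decreasing, so the height-window mismatch occurs at the lower grid point $\rho_n$ rather than at $\rho_{n+1}$, but your reduced cutoff handles it either way.
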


\begin{proof}
Fix $N>2$ in the probability estimate of Theorem~\ref{thm:poisson-volume-bounds}. The exponent $N$ controls the decay of the probability of failure as the density increases. The proof also applies at integer density $k$ with lower height $s_{\min}(k+1)$. At this cutoff, $k\,s_{\min}(k+1)^d$ lies between fixed positive multiples of $L_k^2$. Together with the estimate at density $k+1$, a union bound gives probability at most $Ck^{-N}$ that either integer estimate fails, with one deterministic allowance constant. Since $\sum_{k\geq1}k^{-N}<\infty$, the Borel--Cantelli lemma shows that, with probability one, there is an integer $k_0<\infty$ such that both estimates hold for every $k\geq k_0$. The index $k_0$ may depend on the realized nested sample family; the allowance constant remains fixed.

For $\rho\in[k,k+1]$, eventually $s_{\min}(\rho)\geq s_{\min}(k+1)$. Nestedness places the count at $\rho$ between the counts at $k$ and $k+1$. Replacing either integer mean by $\rho\overline\mu(E_s(x))$ costs at most $\overline\mu(E_s(x))$ in count units. After division by $\rho$, this is absorbed by the (V1) tolerance uniformly in $s$, since $s\leq s_0$. The factors $L_k/k$ and $L_\rho/\rho$ are uniformly comparable. The same argument on $D$ proves (V2). Increasing the fixed allowance constant gives both conditions for all sufficiently large real densities.
\end{proof}

\subsection{Ambient averaging and discrepancy}

Condition (V1) controls the counts of regular test diamonds. To estimate counts in pasts, futures, and causal intervals, we average normalized diamond counts over centers in the region, keeping the test diamonds inside $D$. Fubini's theorem lets us evaluate the same average by summing the contributions from individual sample points. Each point inside the region and away from its boundary and $\partial D$ contributes approximately its counting weight $1/\rho$. The boundary estimates control the remaining contributions.

For a fixed height $t$, define
\[
K_t(x,p)=\frac{1_{E_t(x)}(p)}{\overline\mu(E_t(x))}.
\]
The indicator $1_{E_t(x)}(p)$ is one when $p\in E_t(x)$ and zero otherwise. Thus $K_t(x,p)$ assigns the same weight $1/\overline\mu(E_t(x))$ to every point of the diamond and is zero outside it. Integrating against the sample measure means summing over sample points, each with weight $1/\rho$. In particular,
\[
\nu_t(x):=\int_D K_t(x,p)\,d\nu(p)
=\frac1\rho\sum_{p\in P}K_t(x,p)
=\frac{\#(P\cap E_t(x))}{\rho\,\overline\mu(E_t(x))}.
\]
Here $d\nu(p)$ specifies the measure used in the integral. The quantity $\nu_t(x)$ is the sample count in the diamond divided by $\rho$ times its continuum volume. For a Poisson sample, the denominator is the expected count. Condition (V1) therefore says that this ratio is close to one.

Write $\delta_t=K\sqrt{L_\rho/(\rho t^d)}$ for the relative tolerance in (V1), and $e_{tot}=|\nu(D)-\mu(D)|$ for the total-volume error, bounded by (V2).

The terms row and column refer to the two variables of $K_t(x,p)$, as they would in a matrix with centers indexing rows and sample points indexing columns. A row calculation fixes the center $x$ and sums over the points in $E_t(x)$, giving $\nu_t(x)$. A column calculation fixes a point $p$ and varies the center $x$. It integrates over all centers whose diamonds contain $p$, with the volume weights
\[
\int_\Omega K_t(x,p)\,d\overline\mu(x)
=\int_{\{x\in\Omega:p\in E_t(x)\}}
\frac{d\overline\mu(x)}{\overline\mu(E_t(x))}.
\]
The centers in this integral range over the continuum neighborhood $\Omega$. The integral is a weighted volume of centers, rather than a count of finitely many balls. Lemma~\ref{lem:ambient-column-normalization} shows that it is close to one. Fubini's theorem relates the two calculations. For a measurable set of centers $A\subset\Omega$,
\[
\int_A\nu_t(x)\,d\overline\mu(x)
=\frac1\rho\sum_{p\in P}\int_A K_t(x,p)\,d\overline\mu(x).
\]

Under (V1), the row estimate is $|\nu_t(x)-1|\leq\delta_t$ whenever $E_t(x)\subset D$ and $t\in[s_{\min},s_0]$. The following geometric estimate controls the corresponding column integral.

The coordinate comparison below uses two set operations. For a closed set $Q\subset\R^d$, let $B_r$ be the closed Euclidean ball of radius $r$ centered at the origin, and define
\[
Q\ominus B_r=\{z:z+B_r\subset Q\},\qquad Q+B_r=\{q+u:q\in Q,\ u\in B_r\}.
\]
The erosion $Q\ominus B_r$ retains the centers whose radius-$r$ ball fits inside $Q$; the thickening $Q+B_r$ adds points within distance $r$ of $Q$. These balls and distances are Euclidean in the coordinate model, as illustrated in Figure~\ref{fig:erosion-dilation}.

\begin{figure}[htbp]
\centering
\includegraphics[width=.75\linewidth]{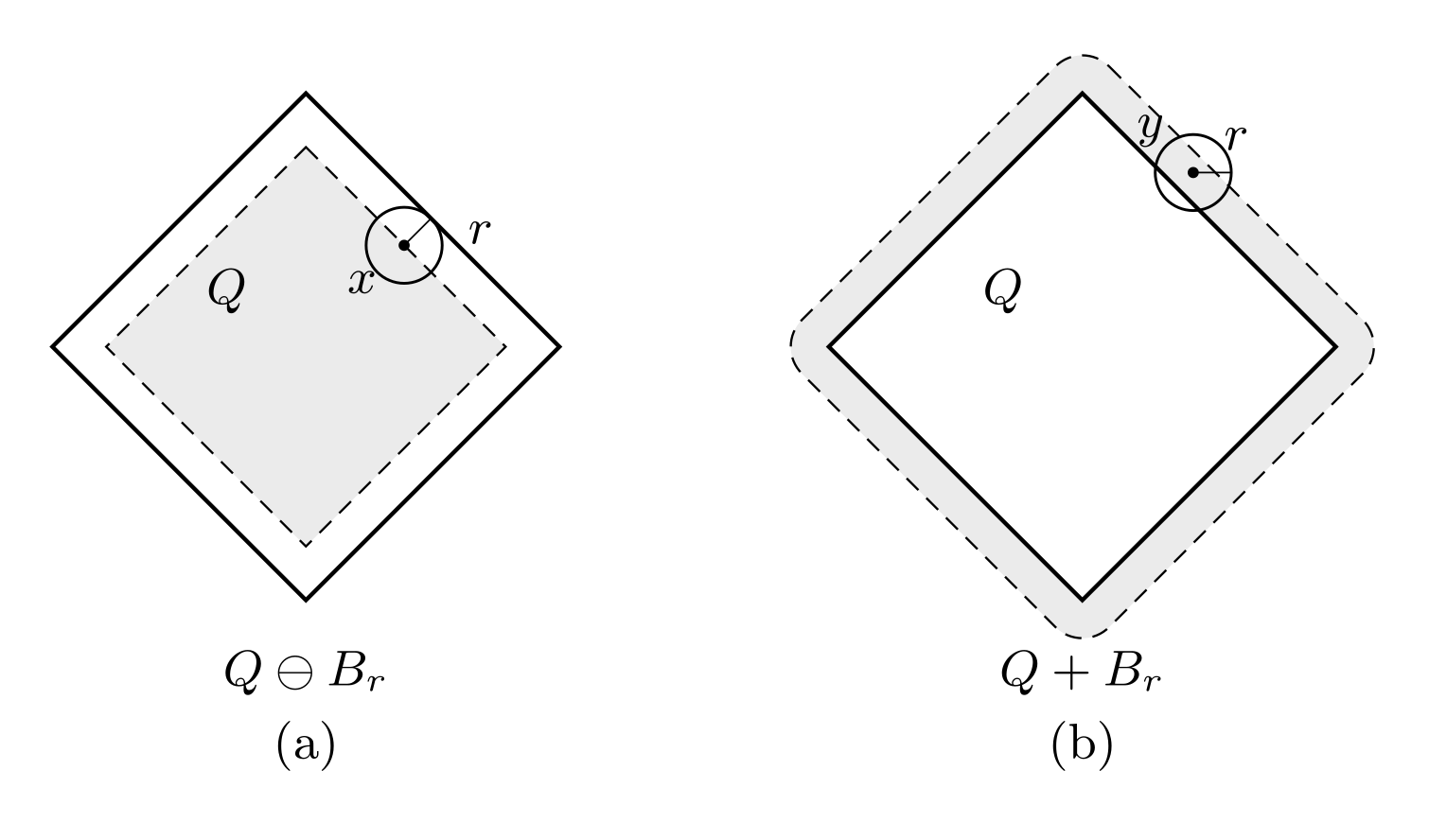}
\caption{(a) Erosion retains the centers $x$ of radius-$r$ balls contained in $Q$. (b) Thickening adds points $y$ whose radius-$r$ ball meets $Q$. The solid boundary belongs to $Q$ and the dashed boundary to the eroded or thickened set. Gray marks the retained region in (a) and the added region in (b). Balls and distances are Euclidean in the coordinate model.}
\label{fig:erosion-dilation}
\end{figure}

\begin{lemma}[Ambient column normalization]
\label{lem:ambient-column-normalization}
	Let $\Omega_0$ be an open neighborhood of $D$ with $\overline\Omega_0\subset\Omega$. For sufficiently small $t$, integrating over centers in $\Omega$ gives, uniformly for $p\in\overline\Omega_0$,
	\begin{equation}\label{eq:ambient-column-normalization}
w_t(p) := \int_\Omega K_t(x,p)d\overline \mu(x) = 1+O(t)
\end{equation}
\end{lemma}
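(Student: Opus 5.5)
The plan is to compute the column integral in normal coordinates centered at $p$ and compare it with the flat Minkowski value, which is exactly one. The set of centers is $S_t(p)=\{x\in\Omega:p\in E_t(x)\}$. By Lemma~\ref{lem:small-diamond-stability}, $E_t(x)\subset B_h(x,Ct)$, so every such center satisfies $d_h(x,p)\leq Ct$. For $t$ small and $p\in\overline\Omega_0$, the set $S_t(p)$ therefore lies in $B_h(p,Ct)\subset\Omega$. It also lies in one convex normal neighborhood $W_k$ from the finite cover used in that lemma. The restriction to $\Omega$ is thus harmless, which is why $p$ is taken in $\overline\Omega_0$ rather than $\overline\Omega$.

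\emph{Step 1: flat model.} In Minkowski space with a constant unit field $T$, $p\in E_t(x)$ holds exactly when $x\in p-E_t(0)$, by translation invariance. Hence $S_t(p)$ is a reflected copy of the diamond and has volume $\overline\mu(E_t(0))$, while the weight $1/\overline\mu(E_t(x))$ is constant. The integral therefore equals one exactly.

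\emph{Step 2: rescaled comparison.} Work in normal coordinates $z$ at $p$ and rescale by $z=t\zeta$. In these coordinates $g_{\alpha\beta}(t\zeta)=\eta_{\alpha\beta}+O(t^2|\zeta|^2)$, and the volume density is $1+O(t^2)$. The field $T$ is only smooth, so $T_{x}=T_p+O(t)$ for $x\in S_t(p)$. The tips are $x_\pm=x\pm(t/2)T_x+O(t^2)$. After rescaling, the diamond $t^{-1}E_t(x)$ has tips within $O(t)$ of the tips of the flat diamond with axis $T_p$ about $t^{-1}x$. Its light cones are the cones of a metric that is $O(t)$-close in $C^1$ to $\eta$ on a fixed ball. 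Lemma~\ref{lem:arrival-time-stability}, or equivalently the bracketing of Lemma~\ref{lem:small-diamond-stability} applied after rescaling, then gives nested flat diamonds about $t^{-1}x$, with heights $1\pm At$, that sandwich $t^{-1}E_t(x)$. Consequently
\[
x\in S_t(p)\ \Longrightarrow\ t^{-1}x\in -E^{\eta}_{1+At}(0),\qquad -E^{\eta}_{1-At}(0)\subset t^{-1}S_t(p).
\]
This is the same kind of statement as the erosion and thickening comparison $Q\ominus B_{Ct}\subset\cdot\subset Q+B_{Ct}$ with $Q$ the reflected flat unit diamond.

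\emph{Step 3: weights and conclusion.} The difference between the inner and outer flat diamonds has rescaled volume $O(t)$, since it is a thin shell around a Lipschitz boundary. In the same way, each $\overline\mu(E_t(x))=t^d\,\mathrm{vol}_\eta(E_1(0))(1+O(t))$, uniformly in $x$ and $p$. Combining the two gives $\overline\mu(S_t(p))=t^d\mathrm{vol}_\eta(E_1(0))(1+O(t))$, and the ratio integral is $1+O(t)$.

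The main obstacle is Step 2: proving the uniform $O(t)$ bracketing of the rescaled diamonds when the axis field $T_x$ varies with the center. Near the cone vertices, the arrival-time graphs are not smooth. I would first try to reduce this to the already proved tip-stability machinery. That means comparing $E_t(x)$ with the diamond $E'_t(x)$ whose tips use the frozen axis $T_p$ and whose cones are those of $g$. The tip displacement is $O(t\cdot t)=O(t^2)$, so Lemma~\ref{lem:small-diamond-stability} with $\epsilon=Ct^2$ gives a bracketing error of $O(t^{d+1})$. It then remains to compare $g$-cones with $\eta$-cones in normal coordinates. These agree to $O(t^2)$ in angle on the scale-$t$ ball, which is even better than needed.
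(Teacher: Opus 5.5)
Your proposal is correct and follows essentially the same route as the paper: rescale about $p$, sandwich the set of contributing centers between two flat model sets whose volumes differ by $O(t)$, and integrate the quotient using the uniform $1+O(t)$ expansions of $\overline\mu(E_t(x))$ and of the volume element. The differences are cosmetic. You use normal coordinates at $p$ and homothetic flat diamonds of heights $1\pm At$, where the paper uses a general chart with the tangent diamond $Q_p$ and its erosion and thickening by $B_{Ct}$; your homothetic sandwich makes the $O(t)$ volume difference immediate.
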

\begin{proof}
	Use a fixed smooth coordinate chart near $p$, write $d\overline \mu(x) = a(x)dx$, and put $x=p+tz$. The support estimate in \eqref{eq:regular-diamond-size} confines contributing $z$ to a fixed bounded set. Let $Q_p$ be the unit-height tangent diamond for $g(p),T(p)$ in these coordinates. It has uniform positive inradius, bounded diameter, and a uniformly Lipschitz boundary. The rescaled metrics and tips for $E_t(p+tz)$ differ from the translated flat data $z+Q_p$ by $O(t)$, uniformly for bounded $z$. 
	The arrival-graph argument in Lemma~\ref{lem:arrival-time-stability}, now also using the $O(t)$ upper and lower constant-cone comparison for the rescaled metric, places the rescaled diamond between the inner and outer parallel sets of $z+Q_p$ at distance $Ct$. In particular, the set of contributing centers $z$ lies between the erosion $-Q_p\ominus B_{Ct}$ and the thickening $-Q_p+B_{Ct}$. Their volume difference is $O(t)$ by the graph tube bound. Also, uniformly in these centers,
	\[ \overline \mu(E_t(p+tz))=a(p)t^d|Q_p|(1+O(t)), \qquad d\overline \mu(p+tz)=a(p)t^d(1+O(t))dz. \]
	Integrating the quotient over the contributing-center set proves \eqref{eq:ambient-column-normalization}. Every integral here is ambient.
\end{proof}

The mass assigned to a set by a measure is its total weight. For the sample measure $\nu$, each point has weight $1/\rho$, so the sample mass of a set $A$ is $\nu(A)=\#(P\cap A)/\rho$.

\begin{lemma}[Boundary mass and global set discrepancy]
\label{lem:global-set-discrepancy}
	Assume (V1) and (V2), with $\delta_t\leq 1/2$. Suppose $t$ and every fixed multiple of $t$ used as a test-diamond height in the counting covers lie in $[s_{\min},s_0]$. For a sufficiently large fixed $b$,
	\begin{equation}\label{eq:boundary-mass}
\nu\{p\in D: d_h(p,\partial D) \leq bt\} \leq C(t+\delta_t +e_{tot}).
\end{equation}
	Uniformly over restricted pasts, futures, their symmetric differences, and sets formed by a fixed finite number of unions, intersections, and differences of these sets,
	\begin{equation}\label{eq:global-set-discrepancy}
|\nu(A)-\mu(A)| \leq C(t+\delta_t + e_{tot}).
\end{equation}
	The constant for these set operations depends on the fixed number of constituent sets, not on their endpoints. 
\end{lemma}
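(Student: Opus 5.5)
The plan is to compare the sample mass of a set $A$ with its ambient-averaged diamond counts, using Fubini between the row and column calculations, and to absorb every error into thin boundary tubes controlled by Lemma~\ref{lem:achronal-tube-bounds}. I would first prove the boundary-mass bound \eqref{eq:boundary-mass}, then derive \eqref{eq:global-set-discrepancy} from it.

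\textbf{Step 1: interior lower bound and the boundary mass.} Let $D_t=\{x\in D: E_t(x)\subset D\}$. By Lemma~\ref{lem:small-diamond-stability}, $D_t$ contains every $x\in D$ with $d_h(x,\partial D)>Ct$. Apply Fubini to the row identity over $A=D_t$, using $\nu_t(x)\geq1-\delta_t$ there:
\[
\frac1\rho\sum_{p\in P}\int_{D_t}K_t(x,p)\,d\overline\mu(x)\geq(1-\delta_t)\,\overline\mu(D_t).
\]
For a point $p$ at distance greater than $(C+C')t$ from $\partial D$, every center $x$ with $p\in E_t(x)$ lies in $D_t$. Its column integral over $D_t$ is therefore the full $w_t(p)=1+O(t)$ from Lemma~\ref{lem:ambient-column-normalization}. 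For every other point the column integral is at most $1+O(t)$. Let $S$ be the set of sample points within $bt$ of $\partial D$, with $b$ large. For these points the integrand is supported on centers within $(b+C)t$ of $\partial D$. To obtain an upper bound for the left side, I would split it as
\[
\nu(D)(1+O(t))-\frac1\rho\sum_{p\in S}\Bigl(w_t(p)-\int_{D_t}K_t(x,p)\,d\overline\mu(x)\Bigr).
\]
The key geometric input is that for $p$ within $bt$ of $\partial D$, a fixed fraction $c_0>0$ of the column mass lies outside $D_t$. This holds because diamonds centered just outside $D$, or near its boundary, still contain $p$, and it follows by rescaling as in Lemma~\ref{lem:ambient-column-normalization}. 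Combining the two bounds gives
\[
c_0\,\nu(S)\leq \nu(D)-\overline\mu(D_t)+C(t+\delta_t)\leq e_{tot}+\mu(D\setminus D_t)+C(t+\delta_t).
\]
Here $\mu(D\setminus D_t)\leq Ct$ by the tube bound applied to the two achronal null boundaries of $D$. This yields \eqref{eq:boundary-mass}.

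\textbf{Step 2: general sets $A$.} Take $A$ to be a restricted past or future $J^\pm(q)\cap D$, or a finite Boolean combination of such sets. The boundary of $A$ lies in a fixed finite union of achronal sets, namely null cone pieces together with $\partial D$. I would use inner and outer sets $A^-\subset A\subset A^+$, defined by $h$-distance $bt$ from $\partial A$. Averaging $\nu_t$ over centers in $A^-\cap D_t$ and applying column normalization, exactly as in Step 1, gives
\[
\nu(A)\geq\overline\mu(A^-\cap D_t)(1-\delta_t)-Ct.
\]
For the upper bound I would average over centers in $A^+$. Points of $A$ away from $\partial D$ have full column mass inside $A^+$. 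Points near $\partial D$ contribute at most the boundary mass from Step 1. Centers in $A^+$ whose diamond leaves $D$ are handled by (V1) after covering the tube: the tube is covered by $C(\ell/t)^{d-1}$ diamonds by Lemma~\ref{lem:achronal-tube-bounds}, and (V1) bounds the counts in those diamonds that lie inside $D$. The differences $\overline\mu(A^+\setminus A^-)\leq Ct$ come from the global tube bound applied to the achronal pieces. Uniformity in the endpoints follows because every constant depends only on the number of achronal pieces. Symmetric differences and Boolean combinations follow the same pattern, since $\partial(A\cup B)$ and the other combined boundaries lie in $\partial A\cup\partial B$.

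\textbf{Main obstacle.} I expect the hardest step to be the boundary lower bound on the fraction of column mass escaping $D_t$ for points near $\partial D$. This is needed to convert the global count (V2) into control of the boundary mass $\nu(S)$. It requires a uniform rescaled argument near the corners and tips of $D$, where the boundary is not smooth. If a uniform $c_0$ fails at the tips $a$ and $b$, I would instead bound the mass near the tips directly by covering small neighborhoods of them with diamonds of height comparable to $t$ lying in $D$. The tip neighborhoods have volume $O(t^d)$, so their contribution is negligible.
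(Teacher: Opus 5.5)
Your Step~1 has a genuine gap: the key geometric input is false for the large $b$ the statement requires. You note yourself that a sample with $d_h(p,\partial D)>(C+C')t$ has its whole column inside $D_t$, so $w_t(p)-\int_{D_t}K_t(x,p)\,d\overline\mu(x)=0$ for that sample. Every point of $S$ in the shell $(C+C')t<d_h(p,\partial D)\leq bt$ therefore contributes nothing to the subtracted sum. Hence no uniform $c_0>0$ exists on $S$, and the inequality $c_0\,\nu(S)\leq\nu(D)-\overline\mu(D_t)+C(t+\delta_t)$ does not follow.

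A positive escaping fraction holds only for $p$ within a \emph{small} multiple $ct$ of $\partial D$. There, every diamond $E_t(x)\subset D$ containing $p$ has $p$ within $ct$ of its own boundary, and such centers carry only $O(c)$ of the column weight. That reasoning does not depend on the tips or corners. So the obstacle you flagged is not where the argument breaks, and your tip-covering fallback does not reach the shell where it does break. Your Step~2 upper bound relies on the boundary mass, so the gap propagates there as well.

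The missing idea is to bound the interior mass from below rather than the boundary mass from above. This is the mechanism of your own Step~2 lower bound, applied to $D$, and it is the paper's proof. Restrict the centers to $D_{b+C_0}(t)=\{x\in D:d_h(x,\partial D)>(b+C_0)t\}$, where $E_t(x)\subset B_h(x,C_0t)$. Each such center satisfies (V1). Every sample contributing to $\int_{D_{b+C_0}(t)}\nu_t\,d\overline\mu$ lies in $D_b(t)=\{p\in D:d_h(p,\partial D)>bt\}$. Fubini and Lemma~\ref{lem:ambient-column-normalization} then give $(1-\delta_t)\mu(D_{b+C_0}(t))\leq(1+Ct)\nu(D_b(t))$. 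Together with $\mu(D\setminus D_{b+C_0}(t))\leq Ct$, this gives $\nu(D_b(t))\geq\mu(D)-C(t+\delta_t)$. Subtracting from $\nu(D)\leq\mu(D)+e_{tot}$ yields \eqref{eq:boundary-mass} with no escaping-fraction estimate.

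With that repaired, your sandwich between $A^-$ and $A^+$ is a legitimate alternative to the paper's Step~2. The paper instead bounds the sample mass in the $Ct$-tube of $\partial A$ directly, using (V1) on $O(t^{1-d})$ covering diamonds, and compares $\nu(A)$ with a single average over $A\cap D_{C_0}(t)$. Your sandwich avoids that covering entirely. However, take the upper average over $A^+\cap D_t$ rather than all of $A^+$. (V1) says nothing about diamonds leaving $D$, and the samples of $A$ near $\partial D$ are exactly what the boundary-mass bound handles.
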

\begin{proof}
Put $D_b(t)=\{p\in D:d_h(p,\partial D)>bt\}$. Choose a fixed $C_0>0$ such that $E_t(x)\subset B_h(x,C_0t)$, as supplied by Lemma~\ref{lem:small-diamond-stability}. Thus $K_t(x,p)$ can be nonzero only when $d_h(x,p)\leq C_0t$. For centers in $D_{b+C_0}(t)$, (V1) applies and all contributing samples lie in $D_b(t)$. Fubini and Lemma~\ref{lem:ambient-column-normalization} give
\[
(1-\delta_t)\mu(D_{b+C_0}(t))\leq\int_{D_{b+C_0}(t)}\nu_t\,d\overline\mu\leq(1+Ct)\nu(D_b(t)).
\]
Since $\mu(D\setminus D_{b+C_0}(t))\leq Ct$ by Lemma~\ref{lem:achronal-tube-bounds}, this yields $\nu(D_b(t))\geq\mu(D)-C(t+\delta_t)$. Subtracting from $\nu(D)\leq\mu(D)+e_{tot}$ proves \eqref{eq:boundary-mass}.

Away from $\partial D$, Lemma~\ref{lem:achronal-tube-bounds} covers the $Ct$-neighborhood of $\partial A$ by $O(t^{1-d})$ regular diamonds of height $C't$. Choose $b$ large enough that these diamonds lie in $D$. Their total sample mass is $O(t)$ by (V1); the portion near $\partial D$ is bounded by \eqref{eq:boundary-mass}.

Compare $\nu(A)$ with $\int_{A\cap D_{C_0}(t)}\nu_t\,d\overline\mu$. By Fubini, a sample farther than $Ct$ from both boundaries contributes $1+O(t)$ if it lies in $A$, and zero otherwise. Only samples within distance $Ct$ of $\partial A$ or $\partial D$ can contribute differently, as illustrated in Figure~\ref{fig:averaging-boundary}; their mass was bounded above. The accumulated column error is $O(t)$ by (V2). The row estimate compares the integral with $\mu(A\cap D_{C_0}(t))$ with error $C\delta_t$, and removing the continuum boundary strip costs $Ct$. Combining these bounds proves \eqref{eq:global-set-discrepancy}.
\end{proof}

\begin{figure}[htbp]
\centering
\includegraphics[width=.45\linewidth]{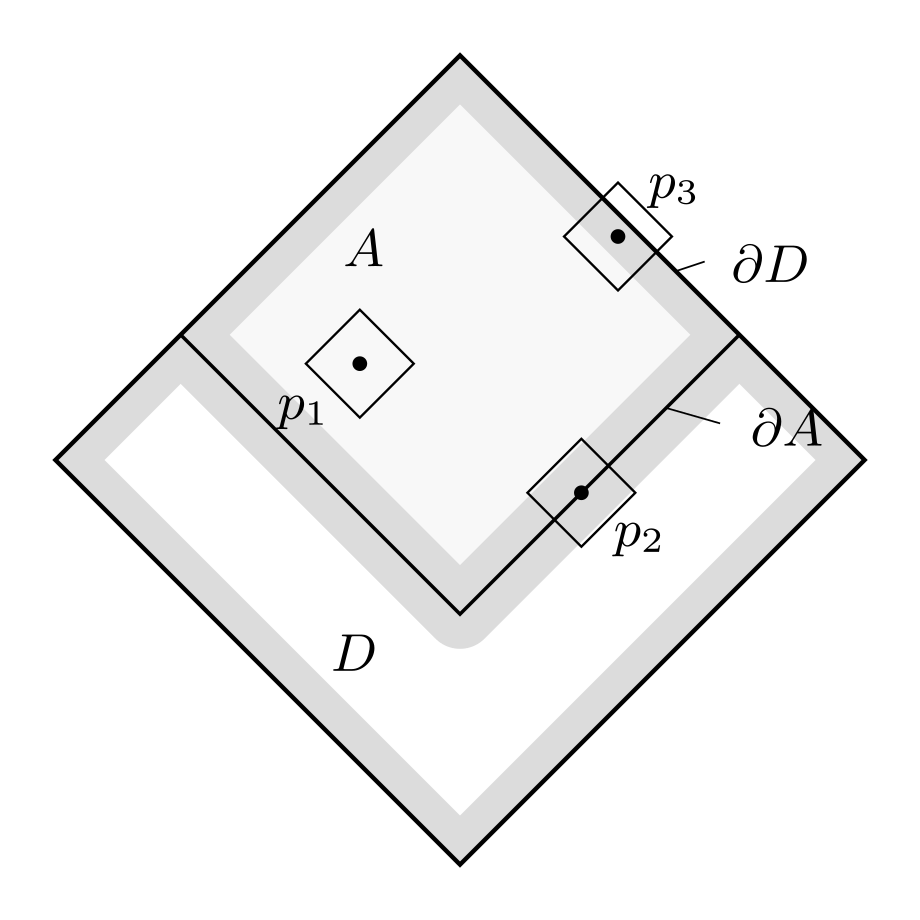}
\caption{Shading marks neighborhoods of the boundaries of $A$ and $D$. In this flat schematic, each thin diamond shows centers whose averaging diamond contains the marked sample. All contributing centers for $p_1$ lie in $A$ and away from $\partial D$. For $p_2$ or $p_3$, some lie across a boundary, so the contribution to the integral over $A\cap D_{C_0}(t)$ may differ. The proof bounds the sample mass in these boundary neighborhoods; exterior centers are not additional sample points.}
\label{fig:averaging-boundary}
\end{figure}

Let $K\subset\mathrm{int}D$ be a fixed compact set. Its points have a uniform positive $h$-distance from $\partial D$. For every fixed $L>0$, the balls $B_h(x,Ls)$ centered at $x\in K$ therefore remain in $D$ for sufficiently small $s$.

\begin{lemma}[Local discrepancy for sets and Lipschitz functions]
\label{lem:local-discrepancy}
	Assume (V1). Fix a compact set $K\subset\mathrm{int}D$. Let $U\subset B_h(x_0,L_0\ell)$ be a coordinate region, with $x_0\in K$. Assume the enlarged ball $B_h(x_0,L_1\ell)$ lies in the same coordinate chart and in $D$, where $L_1>L_0>0$ are fixed and the enlarged ball contains the averaging and covering diamonds used below. Suppose $s_{\min}\leq t\leq c\ell$, $\delta_t\leq1/2$, and every test-diamond height used in the averaging and counting covers lies in $[s_{\min},s_0]$. If $A\subset U$ and its local boundary tubes have volume at most $Ct\ell^{d-1}$, then
	\begin{equation}\label{eq:local-set-discrepancy}
| \nu(A)-\mu(A)| \leq C\ell^d \left(\delta_t + \frac{t}{\ell} \right).
\end{equation}
	For a Lipschitz test function supported in $U$ and extended by zero, one also has
	\begin{equation}\label{eq:smooth-test-discrepancy}
\left| \int \phi d\nu - \int \phi d\mu \right| \leq C \ell^d \left( \delta_t + \frac{t}{\ell}\right) (\|\phi\|_\infty + \ell \,\mathrm{Lip}\,\phi ).
\end{equation}
\end{lemma}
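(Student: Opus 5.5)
The plan is to repeat the ambient-averaging argument of Lemma~\ref{lem:global-set-discrepancy} in local form, inside the enlarged ball $B_h(x_0,L_1\ell)$. Since $x_0\in K$ and $\ell$ is small, this ball avoids $\partial D$. The boundary-mass step that used (V2) is therefore not needed.

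\textbf{Set estimate.} Let $C_0$ be the constant from Lemma~\ref{lem:small-diamond-stability}, so that $E_t(x)\subset B_h(x,C_0t)$. I compare $\nu(A)$ with $I=\int_A\nu_t\,d\overline\mu$.
\begin{enumerate}[(i)]
\item \emph{Row side.} Every center $x\in A$ has $E_t(x)\subset B_h(x_0,L_1\ell)\subset D$, because $t\leq c\ell$. Hence (V1) gives $|\nu_t(x)-1|\leq\delta_t$ for all such $x$. It follows that $|I-\mu(A)|\leq\delta_t\,\overline\mu(U)\leq C\ell^d\delta_t$.
\item \emph{Column side.} By Fubini, $I=\rho^{-1}\sum_p\int_AK_t(x,p)\,d\overline\mu(x)$. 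Suppose $p\in A$ lies at distance more than $C_0t$ from $\partial A$. Then every center with $p\in E_t(x)$ lies in $A$, so the inner integral equals $w_t(p)=1+O(t)$ by Lemma~\ref{lem:ambient-column-normalization}. If instead $p\notin A$ and $p$ is at distance more than $C_0t$ from $A$, the inner integral vanishes. For the remaining points, near $\partial A$, the inner integral is bounded by $w_t(p)\leq2$. This gives
\[
|I-\nu(A)|\leq Ct\,\nu(A)+C\,\nu(N^h_{C_0t}(\partial A)).
\]
\item \emph{Tube mass.} I bound $\nu(N^h_{C_0t}(\partial A))$ by covering the tube with regular diamonds of height $C't$ inside the enlarged ball. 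Lemma~\ref{lem:achronal-tube-bounds} supplies such a cover when the boundary is a finite union of achronal sets. In general I would use the coordinate-grid argument from its proof: a grid of side comparable to $t$ meeting the tube has at most $C(\ell/t)^{d-1}$ cells, because the tube volume is at most $Ct\ell^{d-1}$. Each diamond has sample mass at most $(1+\delta_t)Ct^d\leq Ct^d$ by (V1). The tube mass is therefore at most $Ct\ell^{d-1}$.
\item \emph{Absorbing $\nu(A)$.} The term $Ct\,\nu(A)$ requires a bound on $\nu(A)$. I obtain it by covering $U$ with $O((\ell/t)^d)$ diamonds, which gives $\nu(A)\leq C\ell^d$, so this term is $O(t\ell^d)$. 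This is absorbed into $C\ell^{d-1}t$ since $\ell\leq 1$.
\end{enumerate}
Summing the three errors yields \eqref{eq:local-set-discrepancy}.

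\textbf{Lipschitz estimate.} For $\phi$, I use the same Fubini identity with $\int\phi\,\nu_t\,d\overline\mu$. The row side gives
\[
\Bigl|\int\phi\,\nu_t\,d\overline\mu-\int\phi\,d\mu\Bigr|\leq\delta_t\|\phi\|_\infty C\ell^d.
\]
On the column side, each sample $p$ contributes $\rho^{-1}\int\phi(x)K_t(x,p)\,d\overline\mu(x)$. I decompose this as
\[
\rho^{-1}\Bigl(\phi(p)\,w_t(p)+\int(\phi(x)-\phi(p))K_t(x,p)\,d\overline\mu(x)\Bigr).
\]
Here $w_t(p)=1+O(t)$, and the second integral is at most $C\,t\,\mathrm{Lip}\,\phi$ because $K_t(x,p)$ is supported on $d_h(x,p)\leq C_0t$. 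Summing over the samples in the $C_0t$-neighborhood of $U$ uses the mass bound $C\ell^d$. This gives an error
\[
C\ell^d\bigl(t\|\phi\|_\infty+t\,\mathrm{Lip}\,\phi\bigr)\leq C\ell^d\,(t/\ell)\,\bigl(\|\phi\|_\infty+\ell\,\mathrm{Lip}\,\phi\bigr),
\]
using $\ell\leq1$. Because $\phi$ is supported in $U$ and extended by zero, no separate boundary term is needed: $\phi$ is globally Lipschitz, with the stated constant measured on the whole chart. As an alternative, one could write $\phi$ as a layer-cake integral over its superlevel sets, but their boundaries need not be controlled, so the direct Fubini route is better.

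\textbf{Main obstacle.} The delicate step is (iii), the sample-mass bound on the boundary tube of $A$. It requires turning the hypothesized continuum tube volume $Ct\ell^{d-1}$ into a cover by $O((\ell/t)^{d-1})$ admissible test diamonds of heights in $[s_{\min},s_0]$, all contained in the enlarged ball. I would first try the grid-cell argument of Lemma~\ref{lem:achronal-tube-bounds}, applied to the $Ct$-enlarged tube. This uses the hypothesis with a slightly larger fixed constant, and the enlarged tube is still assumed to have volume $O(t\ell^{d-1})$.
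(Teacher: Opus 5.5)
Your proposal is correct and follows essentially the same route as the paper's proof. Both arguments apply Fubini to $\int_A\nu_t\,d\overline\mu$ (and to $\int\phi\,\nu_t\,d\overline\mu$), take the row error $C\delta_t\ell^d$ from (V1), and take interior column weights $1+O(t)$ from Lemma~\ref{lem:ambient-column-normalization}. They then cover the boundary tube by $C(\ell/t)^{d-1}$ diamonds of height $C't$ with total mass $Ct\ell^{d-1}$, and absorb the $Ct\ell^d$ column term using $\ell\leq1$. The only difference is cosmetic: you bound the total mass by $C\ell^d$ using $O((\ell/t)^d)$ diamonds at scale $t$, where the paper uses a bounded number at scale proportional to $\ell$. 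Your choice is equally valid under the stated height hypothesis.
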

\begin{proof}
	Since $K_t(x,p)\ne0$ implies $d_h(x,p)\leq C_0t$ and $t\leq c\ell$, the contributing samples lie in $B_h(x_0,L_2\ell)$ for a fixed $L_2>L_0$. A finite regular-diamond cover of this ball, with heights proportional to $\ell$, gives $\nu(B_h(x_0,L_2\ell))\leq C\ell^d$. The enlarged ball $B_h(x_0,L_1\ell)$ in the hypotheses contains these covering diamonds. The count tolerances for this cover are bounded by a constant multiple of $\delta_t$. Cover the boundary tube by $C(\ell/t)^{d-1}$ contained regular diamonds at scale $C't$. Their total empirical mass is at most $Ct\ell^{d-1}$. Apply Fubini to $\int_A \nu_t d\overline \mu$. Interior samples contribute their column weight, and the exceptional samples lie in that tube. The column error is at most $Ct\ell^d$ and the row error at most $C\delta_t \ell^d$. In fixed units with $\ell \leq 1$, the column term is absorbed by $Ct\ell^{d-1}$. This proves \eqref{eq:local-set-discrepancy}.
	
	For \eqref{eq:smooth-test-discrepancy}, apply Fubini to $\int \phi(x)\nu_t(x)d\overline\mu(x)$. Replacing $\phi(x)$ by $\phi(p)$ in each column costs at most $Ct\mathrm{Lip}(\phi)$ per unit column mass. Only samples in the covered ball $B_h(x_0,L_2\ell)$ contribute. Replacing the remaining column weight by one costs $Ct\|\phi\|_\infty$ per sample mass. Together with the row error, this bounds the discrepancy by $C\ell^d [(\delta_t + t) \| \phi \|_\infty + t\mathrm{ Lip} \phi ]$, which implies the claimed estimate.
\end{proof}

Local causal intervals, even arbitrarily thin ones, satisfy the tube hypothesis of Lemma~\ref{lem:local-discrepancy}: their boundaries are contained in two achronal boundaries. 

\begin{corollary}[A deterministic sample net]
\label{cor:sample-net}
	Assume (V1). For every fixed compact set $K\subset\mathrm{int}D$ and sufficiently large $\rho$, every $x\in K$ lies within $u_\rho=Cs_{\min}(\rho)$ of a sample point, with distance measured by $h$. 
\end{corollary}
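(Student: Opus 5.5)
The plan is to place a single test diamond of minimal admissible height around each point of $K$ and read off from (V1) that it contains a sample point. Fix $x\in K$ and take height $s=s_{\min}(\rho)$. Since $K\subset\mathrm{int}D$ is compact, it has a uniform positive $h$-distance $\eta$ from $\partial D$. By Lemma~\ref{lem:small-diamond-stability}, $E_s(x)\subset B_h(x,C_1 s)$, so for $\rho$ large enough that $C_1 s_{\min}(\rho)<\eta$ and $s_{\min}(\rho)\leq s_0$, the diamond $E_s(x)$ is contained in $D$ and its height lies in $[s_{\min},s_0]$. Thus (V1) applies to it.

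At this height the relative tolerance is
\[
\delta_{s_{\min}}=K\sqrt{L_\rho/(\rho s_{\min}^d)}=K/\sqrt{L_\rho},
\]
which tends to zero, since $\rho s_{\min}^d=L_\rho^2$. For $\rho$ large this is at most $1/2$, and (V1) gives
\[
\nu(E_s(x))\geq \tfrac12\overline\mu(E_s(x))\geq \tfrac c2 s_{\min}^d>0,
\]
using the volume lower bound in \eqref{eq:regular-diamond-size}. Equivalently, $\#(P\cap E_s(x))\geq \tfrac c2 L_\rho^2\geq1$. So some sample point $p$ lies in $E_s(x)\subset B_h(x,C_1 s_{\min})$, and hence $d_h(x,p)\leq C_1 s_{\min}(\rho)$. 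Taking $u_\rho=C_1 s_{\min}(\rho)$ proves the claim.

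There is essentially no obstacle, since (V1) is assumed uniformly over all centers and admissible heights. The only care needed is that the largeness threshold on $\rho$ is uniform in $x\in K$. This follows because the constants $c$ and $C_1$ from Lemma~\ref{lem:small-diamond-stability} are uniform on $\overline\Omega$, and $\eta$ depends only on $K$. If one wanted more margin one could use height $2s_{\min}$, but this is unnecessary.
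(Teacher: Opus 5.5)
Your proof is correct and follows the paper's argument: place $E_{s_{\min}}(x)$ at each $x\in K$, note that it lies in $D$ and in $B_h(x,Cs_{\min})$ for large $\rho$, and use (V1) with $\delta_{s_{\min}}=K/\sqrt{L_\rho}<1$ to get a positive count. Your extra details, namely the uniform distance to $\partial D$ and the explicit bound $\#(P\cap E_s(x))\geq \tfrac c2 L_\rho^2$, are consistent with the paper's shorter proof; positivity of the count alone would already suffice.
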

\begin{proof}
	For each $x\in K$, the regular diamond $E_{s_{\min}}(x)$ is contained in $D$ at sufficiently large density, lies in $B_h(x,Cs_{\min})$, and has positive count by (V1) once $\delta_{s_{\min}}<1$. Thus $B_h(x,Cs_{\min})$ contains a sample point. All constants are uniform on $K$. 
\end{proof}

Figure~\ref{fig:sample-net} illustrates the covering property supplied by Corollary~\ref{cor:sample-net}.
\begin{center}
\begin{minipage}{\linewidth}
    \centering
    \captionsetup{hypcap=false}
    \includegraphics[width=.64\linewidth]{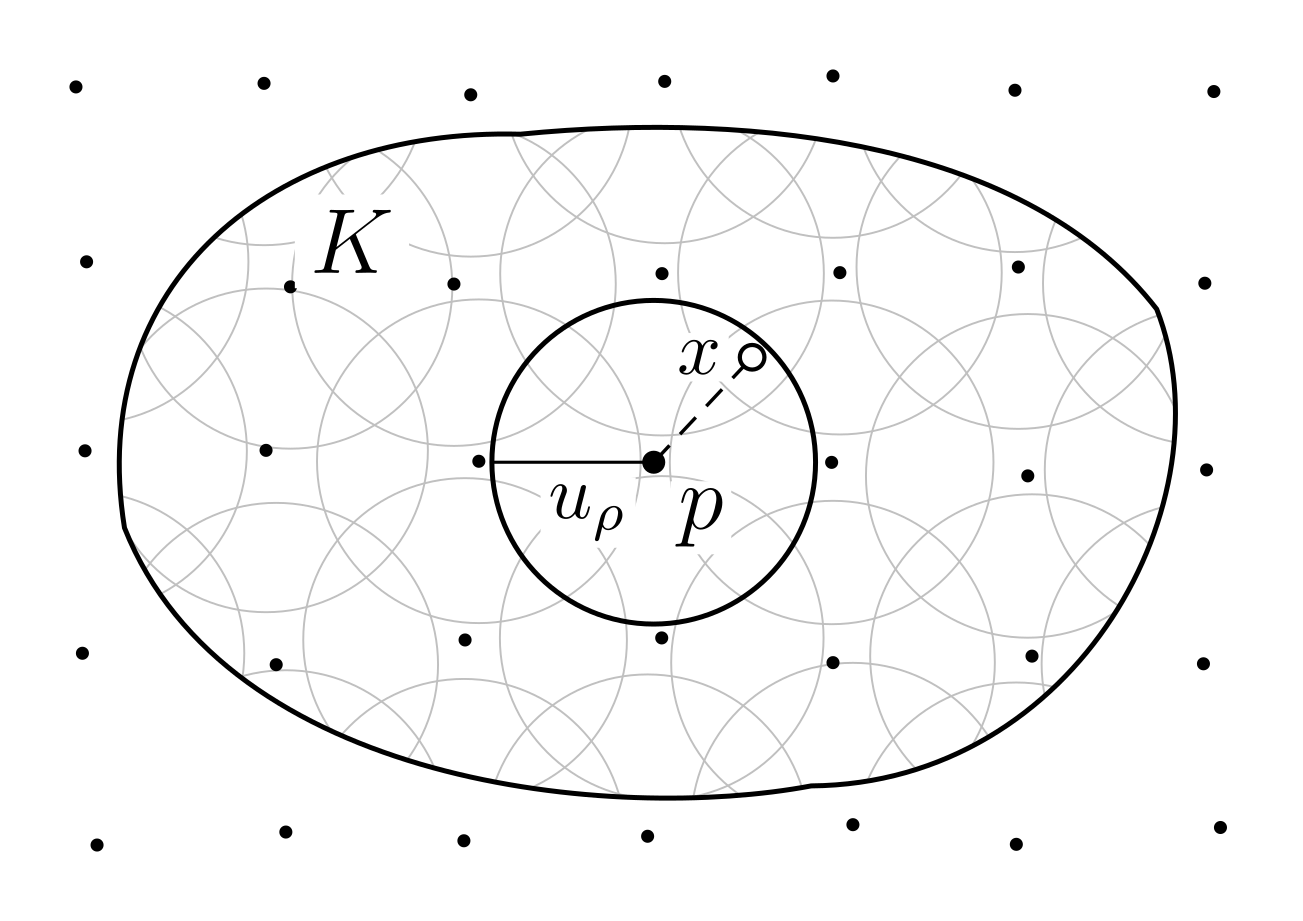}
    \captionof{figure}{A compact set $K\subset\mathrm{int}D$ covered by $h$-balls of radius $u_\rho$ centered at sample points. For every $x\in K$, some sample $p$ satisfies $d_h(x,p)\leq u_\rho$. Filled points are samples and the hollow point is a continuum location. Only the portions of the light balls inside $K$ are drawn. Balls are circular in this flat illustration.}
    \label{fig:sample-net}
\end{minipage}
\end{center}

\subsection{Causal profiles control the observer distance}

For two sets $A,B$, their symmetric difference is $A\triangle B=(A\setminus B)\cup(B\setminus A)$, the set of points belonging to exactly one of them. Define
\[ V^\pm(p) = \mu(J^\pm(p) \cap D), \qquad v(p)=\min\{V^-(p), V^+(p)\}, \]
\[ d_P(p,q)=\mu((J^+(p)\triangle J^+(q))\cap D) + \mu((J^-(p)\triangle J^-(q))\cap D). \]
Thus $d_P(p,q)$ is the sum of the future- and past-disagreement volumes in $D$, as illustrated in Figure~\ref{fig:causal-profiles}. It compares membership, not merely the total past and future volumes. The intersection with $D$ in these formulas is retained for clarity, although $\mu$ is already restricted.

\begin{figure}[htbp]
\centering
\includegraphics[width=.80\linewidth]{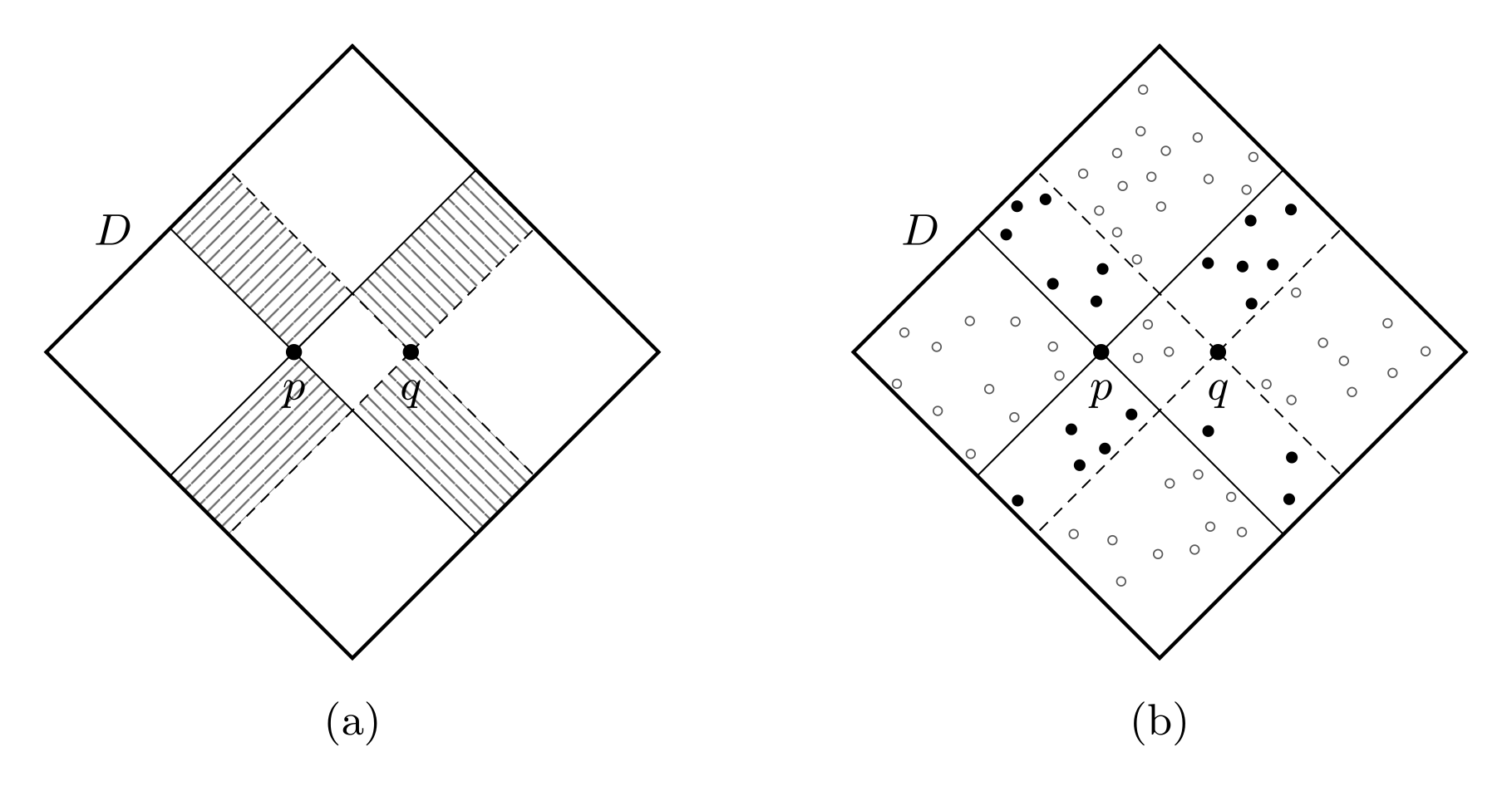}
\caption{(a) Hatching marks points whose future or past relation differs between $p$ and $q$. (b) Filled sample points witness this disagreement; open circles have the same relation to both endpoints. Solid and dashed null boundaries belong to $p$ and $q$, respectively. The larger dots mark the endpoints themselves. All relations are restricted to $D$.}
\label{fig:causal-profiles}
\end{figure}

The smooth weights used below are auxiliary tools in the proof. A weight equal to one near $D$ lets us apply the regularity result for weighted volumes, while local weights isolate small cone patches without introducing sharp edges. Neither changes the counting measure $\nu$.

\begin{proposition}[Differentiability of weighted causal volumes]
\label{prop:weighted-causal-volumes}
	In a globally hyperbolic spacetime with smooth metric, if $0\leq \phi \in C_c^\infty(M)$, then
	\[ F_\phi^{\pm}(x) = \int_{J^\pm(x)}\phi d\overline \mu \]
	is continuously differentiable. This is the compact-support case of Chru\'sciel--Grant--Minguzzi~\cite[Lemmas~3.1 and~3.3]{ChruscielGrantMinguzzi2016}. Their derivative formula is expressed by variation of null generators. We use only the stated $C^1$ conclusions for the global upper estimate below.
\end{proposition}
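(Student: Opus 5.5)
The statement is the compact-support case of the Chru\'sciel--Grant--Minguzzi lemmas, so the plan is to reduce directly to their setting rather than reprove the null-generator variation formula. By time reversal it suffices to treat $F_\phi^+$. The first step is to localize. Since $\phi\in C_c^\infty(M)$ and $(M,g)$ is globally hyperbolic, for $x$ ranging over a compact neighborhood $U$ of a given point, the sets $J^+(x)\cap\operatorname{supp}\phi$ all lie in the compact set $J^+(\overline U)\cap\operatorname{supp}\phi$. Hence only a fixed compact portion of the future cones matters. Near each $x$, the boundary $\partial J^+(x)$ is then the achronal null hypersurface generated by future null geodesics from $x$, up to their cut points.

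Second, I would write $F_\phi^+(x)$ as an integral over the future null cone in exponential-type coordinates. Parametrize points of $I^+(x)$ as $\exp_x(\lambda\theta)$, with $\theta$ in the future unit null directions (normalized using $T_x$) and $\lambda$ an affine parameter along the generator. A Fubini decomposition by a time function writes
\[
F_\phi^+(x)=\int \phi\,1_{J^+(x)}\,d\overline\mu .
\]
Differentiating in $x$ moves the boundary $\partial J^+(x)$. The formal derivative is a boundary integral of $\phi$ times the normal velocity of the generators, which is the ``variation of null generators'' formula in \cite{ChruscielGrantMinguzzi2016}. The boundary volume itself is controlled by Lemma~\ref{lem:achronal-tube-bounds}, and this keeps difference quotients bounded. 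That boundedness gives Lipschitz continuity at once.

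Third, continuity of the derivative. Away from the cut locus the generators depend smoothly on $(x,\theta,\lambda)$, so the integrand of the derivative formula is continuous. The main obstacle is the cut locus and the vertex, where $\partial J^+(x)$ fails to be smooth. Two facts should handle this. First, the set of cut points of $x$ (the terminal points of generators) has $\overline\mu$-measure zero. Second, the endpoint map is lower semicontinuous and continuous almost everywhere in $\theta$. With these, dominated convergence transfers continuity to the derivative, and this is exactly the content of \cite[Lemmas~3.1 and~3.3]{ChruscielGrantMinguzzi2016}. The vertex contributes a term of order $\lambda^{d-1}$ that vanishes as $\lambda\to0$, in the spirit of Lemma~\ref{lem:arrival-time-stability}, so it causes no difficulty. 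I would therefore cite those lemmas for this step and verify only that their hypotheses hold: global hyperbolicity, a smooth metric, and a compactly supported smooth nonnegative weight. This yields $F_\phi^\pm\in C^1$, and only this conclusion is used later.
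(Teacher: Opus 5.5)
Your proposal takes the same approach as the paper, which gives no independent argument and identifies the statement as the compact-support case of Chru\'sciel--Grant--Minguzzi, Lemmas~3.1 and~3.3; your final step of citing those lemmas after checking global hyperbolicity, smoothness of $g$, and $0\leq\phi\in C_c^\infty(M)$ is exactly that. The heuristic sketch before it is not needed and would not stand alone. Lemma~\ref{lem:achronal-tube-bounds} bounds the volume of a thin tube about $\partial J^+(x)$, but nothing you give places $J^+(x)\triangle J^+(y)$ inside a tube of width $O(d_h(x,y))$. Also, a boundary-integral derivative formula requires the cut locus to be negligible for surface measure on the cone, not merely $\overline\mu$-null.
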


\begin{lemma}[Comparison of causal-profile and observer distances]
\label{lem:causal-profile-distance}
	The functions $V^\pm$ are Lipschitz on $D$. For every compact $K \Subset \mathrm{int} D$ there are $c_K, C_K > 0$ such that
	\begin{equation}\label{eq:profile-distance-comparison}
c_K d_h(p,q) \leq d_P(p,q) \leq C_K d_h(p,q), \qquad p\in K, \quad q\in D.
\end{equation}
	Moreover, $K_\theta = \{ p\in D : v(p) \geq \theta \}$ is a compact subset of $\mathrm{int} D$ for each $\theta > 0$.
\end{lemma}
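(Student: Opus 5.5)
Three things must be shown: $V^\pm$ are Lipschitz on $D$, the two-sided comparison \eqref{eq:profile-distance-comparison} holds, and the superlevel sets $K_\theta$ are compact in $\mathrm{int}D$. We treat them in that order, and the upper bound in \eqref{eq:profile-distance-comparison} is essentially the Lipschitz statement.

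\emph{Lipschitz bounds and the upper estimate.} Fix $\phi\in C_c^\infty(\Omega)$ with $0\le\phi\le1$ and $\phi=1$ on a neighborhood of $D$. The main step is to control $\mu((J^+(p)\triangle J^+(q))\cap D)$ by $C\,d_h(p,q)$; the past term and $V^\pm$ follow in the same way. First reduce to nearby points: if $d_h(p,q)$ exceeds a fixed $r_0$, then $d_P\le 2\mu(D)\le C d_h(p,q)$ trivially. If $d_h(p,q)\le r_0$, join $p$ and $q$ by a short curve. Along it, choose tips $p',p''$ at $h$-distance $O(d_h(p,q))$ along the $T$-direction with $p'\preceq p,q\preceq p''$. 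This is possible by uniform cone comparison, as in the proof of Lemma~\ref{lem:small-diamond-stability}, at least inside $\Omega$. Then both $J^+(p)$ and $J^+(q)$ lie between $J^+(p'')$ and $J^+(p')$, so the symmetric difference is contained in $J^+(p')\setminus J^+(p'')$. Its $\mu$-measure is at most $F^+_\phi(p')-F^+_\phi(p'')$. This difference is bounded by $C\,d_h(p',p'')\le C\,d_h(p,q)$, because $F^+_\phi$ is $C^1$ on the compact set $\overline\Omega_0$ by Proposition~\ref{prop:weighted-causal-volumes}. The same bracketing gives $|V^+(p)-V^+(q)|\le C\,d_h(p,q)$. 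The tips may leave $D$, which is harmless because $\phi=1$ near $D$ and $\Omega$ leaves room.

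\emph{Lower estimate.} Take $p\in K$ and $q\in D$ with $q\neq p$. First consider $d_h(p,q)$ small. Let $v$ be the displacement from $p$ to $q$ in normal coordinates at $p$. Because $J^+(p)$ and $J^+(q)$ are then close to translated light cones, their symmetric difference contains a region near some null generator of $\partial J^+(p)$ that is not shared with $q$. We choose the generator direction $k$ to maximize $|g(v,k)|$. Then either the future or the past boundary of one cone is displaced, in the direction transverse to that generator, by at least $c|v|$. This holds over a length of order one, independently of $|v|$, because $K$ has positive distance from $\partial D$ and the generator segment stays in $D$ for a fixed length $\ell_K$. The resulting slab has volume at least $c\,\ell_K^{d-1}|v|$. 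To make this rigorous, we work in the cone comparison used for Lemma~\ref{lem:arrival-time-stability}. There the arrival-time functions $\alpha_p,\alpha_q$ of the two future cones satisfy $|\alpha_p-\alpha_q|\ge c|v|$ on a set of spatial directions of measure bounded below. A cheaper route is to exploit that $\partial J^+(p)$ and $\partial J^-(p)$ together encode $p$ in all directions. Whenever $q\ne p$, at least one of $J^+$ and $J^-$ differs on a wedge whose opening is bounded below. The case of large separation, $d_h(p,q)\ge r_1$, is handled by compactness. The function $d_P$ is continuous on $K\times D$ by the upper bound, and it vanishes only when $p=q$, since the past and future sets distinguish points in a globally hyperbolic $D$ with $p$ interior. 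So $d_P$ has a positive minimum on $\{d_h\ge r_1\}$, and we divide by the bounded diameter. \textbf{The main obstacle} is the uniform small-scale lower bound when $q-p$ is nearly null. In that case one cone's boundary nearly slides along itself. We expect the transverse displacement along the orthogonal generators still to be of order $|v|$, and that the opposite cone, past versus future, picks up the displacement. The first thing to try is a first-order expansion of the arrival-time graphs in $v$, with coefficients bounded below after summing the future and past contributions.

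\emph{Compactness of $K_\theta$.} The function $v$ is continuous by the Lipschitz bound, so $K_\theta$ is closed in the compact set $D$. On $\partial D$, every point lies on $\partial J^+(a)$ or $\partial J^-(b)$. Then $J^-(p)\cap D$ or $J^+(p)\cap D$ lies in an achronal boundary piece, so $v(p)=0$. Continuity therefore keeps $K_\theta$ at positive distance from $\partial D$, and hence $K_\theta\subset\mathrm{int}D$.
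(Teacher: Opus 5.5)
You handle three parts the same way the paper does, and they are fine: the upper bound and the Lipschitz property of $V^\pm$ (bracketing both cones between $T$-shifted tips and using the $C^1$ weighted volumes), the far-from-diagonal lower bound (positivity of $d_P$ plus compactness), and the compactness of $K_\theta$. The gap is the lower bound $d_P(p,q)\geq c\,d_h(p,q)$ for $q$ near $p\in K$, which is the substantive part of the lemma. There you only sketch a slab near a generator maximizing $|g(v,k)|$. You assert without proof that the arrival-time graphs differ by at least $c|v|$ on a set of directions of measure bounded below. You then leave the nearly null case explicitly open (``we expect'', ``the first thing to try''). In particular, nothing in the proposal shows that the first-order shift dominates the higher-order terms uniformly in $p\in K$ and in the direction of $v$.

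The missing idea, used in the paper, is to linearize through finitely many localized weighted volumes rather than estimate slabs directly. At $p_0\in K$, take $d$ linearly independent past-directed null vectors $k_1,\dots,k_d$. Take smooth weights $0\leq\phi_j\leq1$ supported in small disjoint neighborhoods of points on the corresponding short null geodesics, away from the vertex and inside a convex normal neighborhood in $\mathrm{int}D$. Set $W_j(x)=\int_{J^-(x)}\phi_j\,d\overline\mu$. Write the past cone over each patch as a graph $t=\beta_x(z)$ and differentiate $\sigma(x,(\beta_x(z),z))=0$; this shows that $D_x\beta_x(z)$ is a positive multiple of $g_x(\log_x^g y,\cdot)$. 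After shrinking the patches, $DW_j(p_0)$ is therefore a positive multiple of a covector arbitrarily close to $g(k_j,\cdot)$. These covectors form a basis, so $W=(W_1,\dots,W_d)$ has derivative close to a fixed invertible matrix on a small ball about $p_0$. This gives $|W(p)-W(q)|\geq c\,d_h(p,q)$ there, while $|W_j(p)-W_j(q)|\leq d_P(p,q)$ for each $j$, and a finite cover of $K$ makes the constant uniform.

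This also shows your ``main obstacle'' is not real. Since the $g(k_j,\cdot)$ span $T^*_{p_0}M$, every displacement, null or not, is detected by some $W_j$; a null $v$ is invisible only to the generator parallel to it. So past cones alone suffice, and no combination of past and future contributions is needed.
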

\begin{proof}
	Choose a smooth compactly supported function $\phi$, with $0\leq\phi\leq1$, equal to one on a neighborhood of $D$. By Proposition~\ref{prop:weighted-causal-volumes}, its weighted past and future volumes have bounded derivatives on a compact neighborhood of $D$. If $d_h(p,q)=\epsilon$ is small, move a fixed multiple of $\epsilon$ along the $T$-flow to the past and future of $p$ to obtain points $p_-$ and $p_+$. Choose the multiplier uniformly large enough that both $p$ and $q$ lie chronologically after $p_-$ and before $p_+$, so that $p_-\ll p,q\ll p_+$. Since $h(T,T)=1$, the two flow segments give $d_h(p_-,p_+)\leq C\epsilon$. The future symmetric difference of $p,q$ restricted to $D$ is contained in $J^+(p_-) \setminus J^+(p_+)$; the past version is contained in $J^-(p_+) \setminus J^-(p_-)$. The weighted-volume derivative bound gives $d_P(p,q) \leq C\epsilon$. The same argument bounds $|V^\pm(p)-V^\pm(q)|$. 

For the remaining pairs, fix $\epsilon_0>0$ below which the local argument applies. If $d_h(p,q)\geq\epsilon_0$, the bound $d_P(p,q)\leq2\mu(D)$ gives
\[
d_P(p,q)\leq\frac{2\mu(D)}{\epsilon_0}\,d_h(p,q).
\]
Thus the upper bound and Lipschitz assertions hold globally on $D$. 
	
	We give the local lower argument explicitly. At an interior point $p_0$, choose $d$ independent past-directed null covectors. Realize their directions by $d$ short null segments ending at $p_0$, entirely inside one convex normal neighborhood in $\mathrm{int} D$. Choose small, disjoint neighborhoods about points on these segments, away from their tips, and smooth weights $0 \leq \phi_j \leq 1$ supported there, each equal to one on a smaller neighborhood meeting its null segment. 

Use coordinates $(t,z)$ in the chosen normal neighborhood, with $\partial_t$ future timelike, $t$ the time coordinate, and $z$ the spatial coordinates, and write $d\overline\mu=a(t,z)\,dt\,dz$. Each patch stays away from the cone vertex. For $x$ near $p_0$, its past cone is therefore a smooth graph $t=\beta_x(z)$. At a fixed spatial position $z$, the value $\beta_x(z)$ is the time at which the vertical coordinate line meets the past cone of $x$. Within the patch, points with $t\leq\beta_x(z)$ lie in $J^-(x)$. Figure~\ref{fig:cone-arrival-graph} shows this graph. Its weighted volume is
	\[ W_j(x) = \int dz \int_{-\infty}^{\beta_x(z)}\phi_j(t,z)a(t,z) dt .\]
	Hence
	\[ DW_j(x) = \int \phi_j(\beta_x(z),z)a(\beta_x(z),z)D_x \beta_x (z) dz. \]
	To compute how $\beta_x(z)$ changes with $x$, describe the cone by Synge's world function. Within a convex normal neighborhood,
\[
\sigma(x,y)=\tfrac12 g_x(\log_x^g y,\log_x^g y),
\]
where $\log_x^g y=(\exp_x^g)^{-1}(y)$ is the initial velocity of the unique geodesic from $x$ to $y$, parameterized on $[0,1]$. The world function is half the signed squared geodesic interval and vanishes for null-related points. On the selected past-cone patch, $y=(\beta_x(z),z)$ therefore satisfies
\[
\sigma\bigl(x,(\beta_x(z),z)\bigr)=0.
\]
Implicit differentiation, together with the endpoint-variation formula, gives
\[
D_x\beta_x(z)
=-\frac{D_x\sigma(x,y)}{\partial_t\sigma(x,y)}
=\frac{g_x(\log_x^g y,\cdot)}{\partial_t\sigma(x,y)}.
\]
Here $\partial_t$ differentiates the time coordinate of the second argument. The denominator is nonzero because a timelike coordinate line is transverse to the null cone away from its vertex. On this past-cone patch it is positive, so $D_x\beta_x(z)$ is a positive multiple of the initial null covector. The first-variation formula and this transversality argument are given in Appendix~\ref{app:light-cone-stability}. By shrinking the patches, the $d$ row covectors $DW_j(p_0)$ become arbitrarily close, after positive row rescaling, to the selected independent null covectors. They are therefore independent. Such $d$ directions exist because one may choose $d$ affinely independent points on the spatial unit sphere. 
	
	The map $W=(W_1, \dots , W_d)$ is a local coordinate system. After shrinking to a convex coordinate ball, its derivative stays close to one invertible matrix, so $|W(p)-W(q)| \geq cd_h(p,q)$ there. On the other hand, $|W_j(p)-W_j(q)| \leq d_P(p,q)$, since each weight is at most one and supported in $D$. A finite nested cover of $K$ gives the desired uniform lower bound near the diagonal. This local construction uses smooth cone patches before any cut point, not a global smoothness claim about null boundaries.
	
	For pairs away from the diagonal, first show that $d_P(p,q)>0$ whenever $p$ is interior and $q\ne p$. If $p\not\leq q$, a small neighborhood of $p$ avoids the closed set $J^-(q)$ and contains a nonempty open subset of $I^-(p)\cap D$. These points belong to the past of $p$ but not to the past of $q$. If $p\leq q$ and $p\ne q$, causality gives $p\notin J^+(q)$; a small neighborhood of $p$ avoids $J^+(q)$ and contains a nonempty open subset of $I^+(p)\cap D$. These points belong to the future of $p$ but not to the future of $q$. In either case the disagreement has positive volume. Continuity of $d_P$ and compactness give a positive minimum on pairs in $K\times D$ whose $h$-distance is at least any fixed $\epsilon_0>0$. Since $d_h$ is bounded on this product, this yields the lower estimate for the remaining pairs and completes \eqref{eq:profile-distance-comparison}.
	
	Finally, on $\partial D$ at least one of the restricted past and future lies in a null boundary and has zero volume by Lemma~\ref{lem:achronal-tube-bounds}. Thus $v=0$ there. Continuity of $v$ and compactness of $D$ imply that $K_\theta$ is a compact subset of $\mathrm{int} D$ for every $\theta>0$.
\end{proof}

\begin{figure}[htbp]
\centering
\includegraphics[width=.62\linewidth]{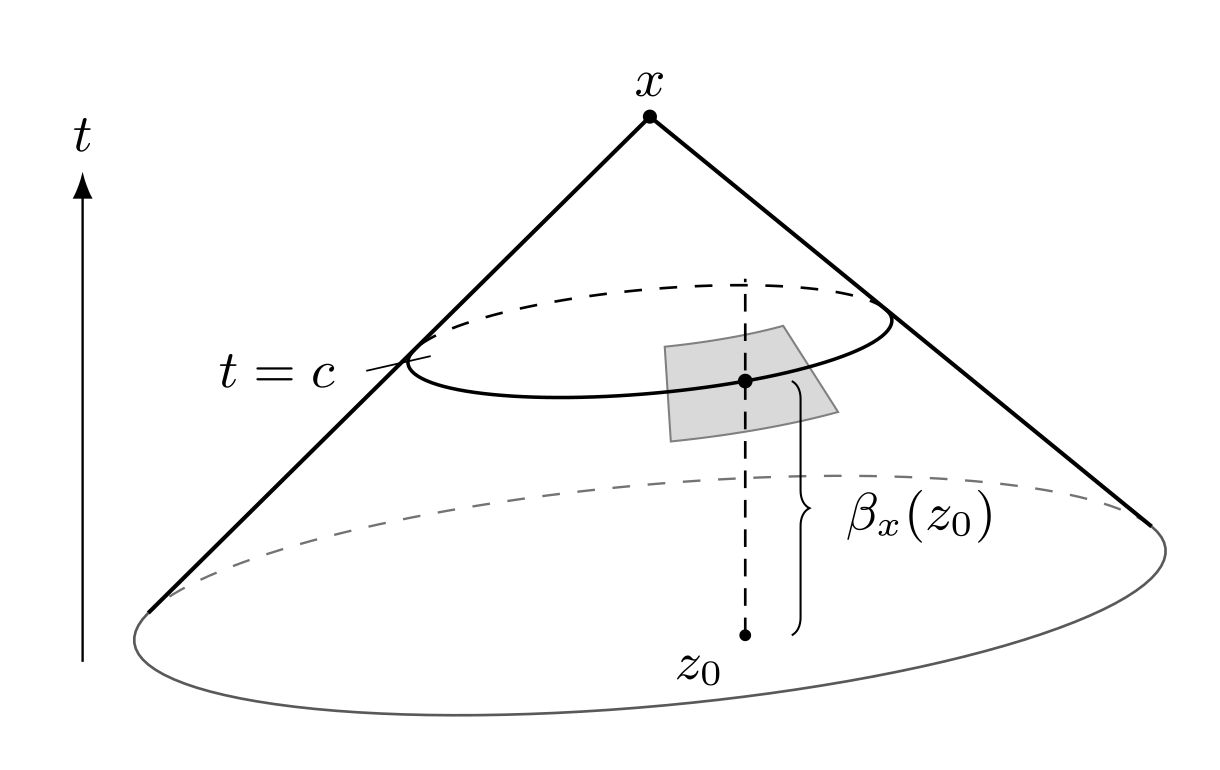}
\caption{A past null cone in flat $2+1$-dimensional coordinates. The vertical line over one spatial point $z_0$ meets the cone at time $\beta_x(z_0)$. The circular section $t=c$ instead consists of all spatial points $z$ with $\beta_x(z)=c$. Shading marks a cone patch within a neighborhood supporting a local smooth weight. In curved coordinates these sections need not be circles.}
\label{fig:cone-arrival-graph}
\end{figure}

Figure~\ref{fig:volume-superlevel} illustrates the compact sets $K_\theta\subset\mathrm{int}D$: requiring both past and future volumes to be at least $\theta$ keeps a point away from $\partial D$.

\begin{figure}[htbp]
\centering
\includegraphics[width=.78\linewidth]{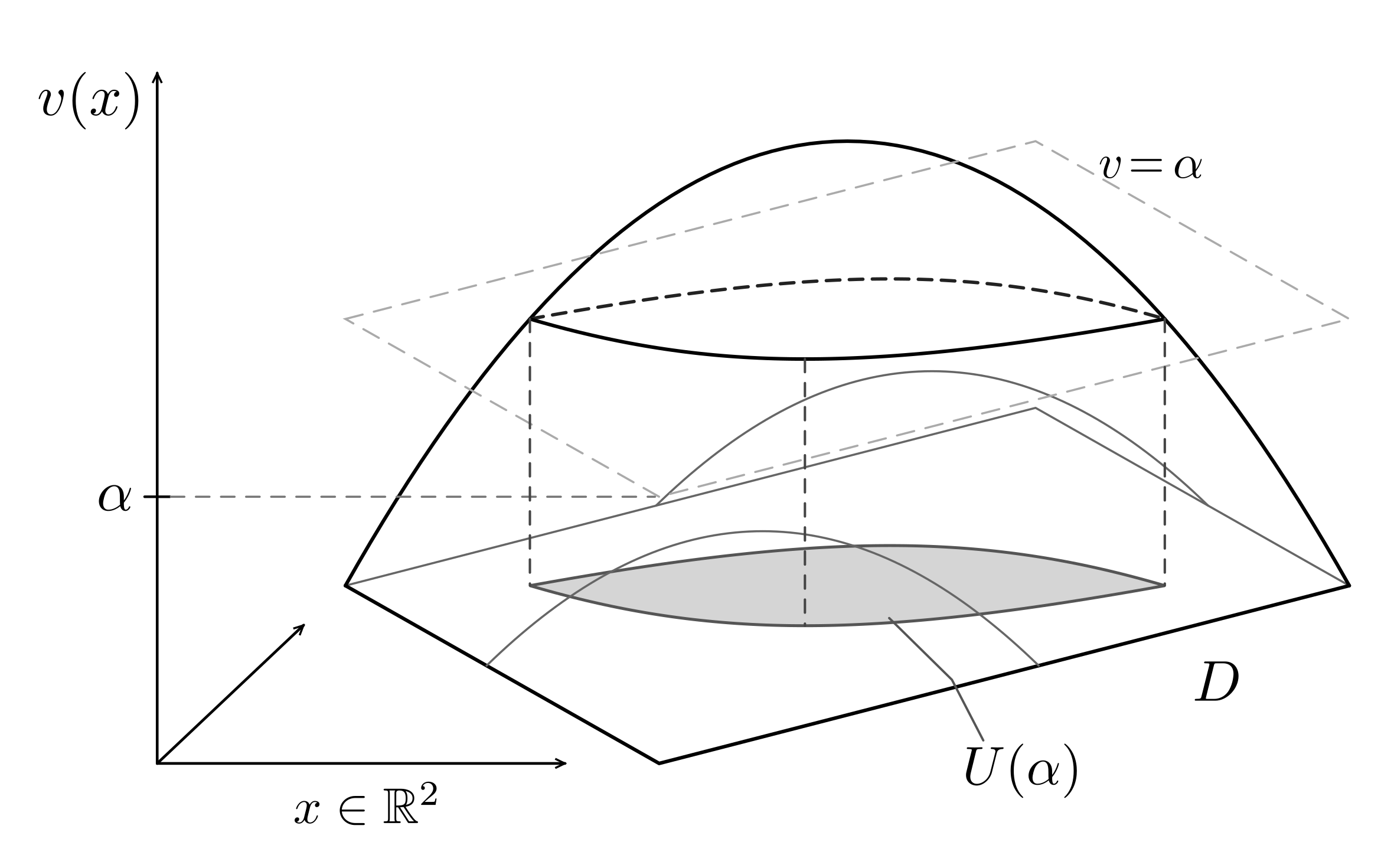}
\caption{A schematic graph of $v$ over $D$. Cutting it at height $\alpha>0$ and projecting the portion above this level gives $U(\alpha)=\{p\in D:v(p)\geq\alpha\}=K_\alpha$. The vertical coordinate is causal volume, not spacetime time. The drawing illustrates the superlevel-set construction; it does not assert that $v$ is smooth or that its level sets have this shape.}
\label{fig:volume-superlevel}
\end{figure}

Let $d_\nu$ be the count-based profile distance obtained by replacing $\mu$ with $\nu$ in the definition of $d_P$. For a sample point $p\in P$, define its profile ball by
\[
B_\nu(p,R)=\{q\in P:d_\nu(p,q)\leq R\}.
\]
The distance $d_\nu(p,q)$ adds the weighted counts of samples whose future membership differs between $p$ and $q$ and samples whose past membership differs. Past and future disagreements are counted separately. Thus $B_\nu(p,R)$ consists of the sample points whose total disagreement with $p$ is at most $R$. Its definition uses the causal order and $\rho$. Corollary~\ref{cor:profile-localization} shows that, for the stated points and scales, it contains the samples in a smaller $h$-ball and is contained in a larger one.

\begin{corollary}[Localization from causal counts]\label{cor:profile-localization}
	Assume V1 and V2 and set 
	{\interlinepenalty=10000 \postdisplaypenalty=10000
	\[ s_\rho = (L_\rho /\rho)^{1/(d+2)}, \qquad \eta_\rho = C(s_\rho + \sqrt{L_\rho / \rho}). \]
	The individual past/future counts and empirical symmetric-difference profile satisfy, uniformly over all continuum endpoints,
	\[ |\nu(J^\pm(p))-V^\pm(p)| \leq \eta_\rho, \qquad |d_\nu(p,q)-d_P(p,q)| \leq \eta_\rho, \]
	}
	after changing $C$. Fix $\theta>0$. If a sample point has both empirical past and future volumes at least $2\theta$ and $\eta_\rho < \theta$, it belongs to $K_\theta$. For such a point $p$ and every $q\in P$,
	\begin{equation}\label{eq:empirical-profile-localization}
c d_h(p,q)-\eta_\rho \leq d_\nu(p,q) \leq C d_h(p,q) + \eta_\rho .
\end{equation}
	Fix $A>0$, independently of $\rho$ and $\ell$. If $\eta_\rho=o(\ell)$, then, for sufficiently large $\rho$,
	\[
P\cap B_h(p,c_A\ell)\subseteq B_\nu(p,A\ell)\subseteq P\cap B_h(p,C_A\ell).
\]
	The positive constants $c_A,C_A$ depend only on $A$, $\theta$, and the fixed geometry. These estimates hold simultaneously for every sample pair whose first endpoint satisfies the stated past- and future-count bounds.
\end{corollary}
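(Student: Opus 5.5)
The plan is to apply Lemma~\ref{lem:global-set-discrepancy} once, at a single averaging height chosen to balance its geometric and statistical errors. Everything else then follows from Lemma~\ref{lem:causal-profile-distance} by elementary inequalities.

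\emph{Choice of height.} The error in \eqref{eq:global-set-discrepancy} is $C(t+\delta_t+e_{tot})$, with $\delta_t=K\sqrt{L_\rho/(\rho t^d)}$. I take $t=s_\rho=(L_\rho/\rho)^{1/(d+2)}$. Then $L_\rho/(\rho s_\rho^d)=s_\rho^{d+2}/s_\rho^d=s_\rho^2$, so $\delta_{s_\rho}=Ks_\rho$. By (V2), $e_{tot}\leq K\sqrt{L_\rho/\rho}$, so the total error is at most $C(s_\rho+\sqrt{L_\rho/\rho})=\eta_\rho$. The hypotheses of Lemma~\ref{lem:global-set-discrepancy} must be checked for large $\rho$:
\begin{itemize}
\item $\delta_{s_\rho}=Ks_\rho\leq 1/2$ once $\rho$ is large.
\item Every fixed multiple of $s_\rho$ used in the covers lies in $[s_{\min},s_0]$. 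Since $s_\rho\to0$, the upper bound is automatic. For the lower bound compare exponents: $s_{\min}^{d+2}=\rho^{-(d+2)/d}L_\rho^{2(d+2)/d}$ decays faster in $\rho$ than $s_\rho^{d+2}=L_\rho/\rho$, so $s_\rho/s_{\min}\to\infty$.
\end{itemize}

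\emph{Discrepancy bounds.} The sets $J^\pm(p)\cap D$ are restricted pasts and futures. The quantity $d_\nu(p,q)-d_P(p,q)$ is a sum of two differences $\nu(A)-\mu(A)$, where each $A$ is the symmetric difference of two restricted futures or of two restricted pasts. Lemma~\ref{lem:global-set-discrepancy} is uniform over these families, with constants independent of the endpoints. This gives $|\nu(J^\pm(p))-V^\pm(p)|\leq\eta_\rho$ and $|d_\nu(p,q)-d_P(p,q)|\leq\eta_\rho$ simultaneously for all continuum $p,q\in D$, after enlarging $C$. Nothing probabilistic enters, since (V1) and (V2) are deterministic.

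\emph{Localization.} Suppose a sample point has $\nu(J^\pm(p))\geq2\theta$. Then $V^\pm(p)\geq2\theta-\eta_\rho>\theta$, so $p\in K_\theta$. By Lemma~\ref{lem:causal-profile-distance}, $K_\theta$ is a compact subset of $\mathrm{int}D$. Applying \eqref{eq:profile-distance-comparison} with $K=K_\theta$ and adding the $\eta_\rho$ error gives \eqref{eq:empirical-profile-localization}, with constants depending only on $\theta$.

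For the ball inclusions, let $c,C$ be the constants of \eqref{eq:empirical-profile-localization}, and take $\rho$ large enough that $\eta_\rho\leq\min(A/2,c)\,\ell$, which is possible because $\eta_\rho=o(\ell)$.
\begin{itemize}
\item Set $c_A=A/(2C)$. If $d_h(p,q)\leq c_A\ell$, then $d_\nu(p,q)\leq A\ell/2+\eta_\rho\leq A\ell$.
\item If $d_\nu(p,q)\leq A\ell$, then $d_h(p,q)\leq(A\ell+\eta_\rho)/c\leq C_A\ell$ with $C_A=(A+c)/c$.
\end{itemize}
Both inclusions use only the uniform estimates above, so they hold simultaneously for every qualifying first endpoint and every $q\in P$.

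The main obstacle is the bookkeeping behind the application of Lemma~\ref{lem:global-set-discrepancy}. One must confirm that its constant is genuinely uniform over all endpoints, including endpoints near $\partial D$, where only the boundary-mass estimate \eqref{eq:boundary-mass} controls the contribution. One must also confirm that all the auxiliary cover heights, which are fixed multiples of $s_\rho$, stay in the admissible range. I would first verify the exponent comparison $s_\rho\gg s_{\min}$ and then cite the uniformity statement of that lemma directly.
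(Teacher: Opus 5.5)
Your proposal is correct and follows the paper's proof essentially step for step. You apply Lemma~\ref{lem:global-set-discrepancy} at $t=s_\rho$ with $\delta_{s_\rho}=Ks_\rho$, use the individual count bounds to place the point in $K_\theta$, invoke Lemma~\ref{lem:causal-profile-distance}, and convert to ball inclusions with $c_A=A/(2C)$; the only difference is the cosmetic choice $C_A=(A+c)/c$ (using $\eta_\rho\leq c\ell$) in place of the paper's $C_A=2A/c$ (using $\eta_\rho\leq A\ell/2$).
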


\begin{proof}
	Apply Lemma~\ref{lem:global-set-discrepancy} with $t=s_\rho$; $\delta_t = K s_\rho$. The relevant constant multiples of this scale exceed $s_{\min}$ and are below $s_0$ eventually. The two individual count estimates place the selected point in $K_\theta$, and \eqref{eq:empirical-profile-localization} follows from Lemma~\ref{lem:causal-profile-distance}. For $R>\eta_\rho$, this comparison gives
	\[
P\cap B_h\!\left(p,\frac{R-\eta_\rho}{C}\right)\subseteq B_\nu(p,R)\subseteq P\cap B_h\!\left(p,\frac{R+\eta_\rho}{c}\right).
\]
	Take $R=A\ell$. Since $\eta_\rho=o(\ell)$, eventually $\eta_\rho\leq A\ell/2$, so $c_A=A/(2C)$ and $C_A=2A/c$ suffice.
\end{proof}

Figure~\ref{fig:profile-ball-localization} illustrates these inclusions.

\begin{figure}[htbp]
\centering
\includegraphics[width=.62\linewidth]{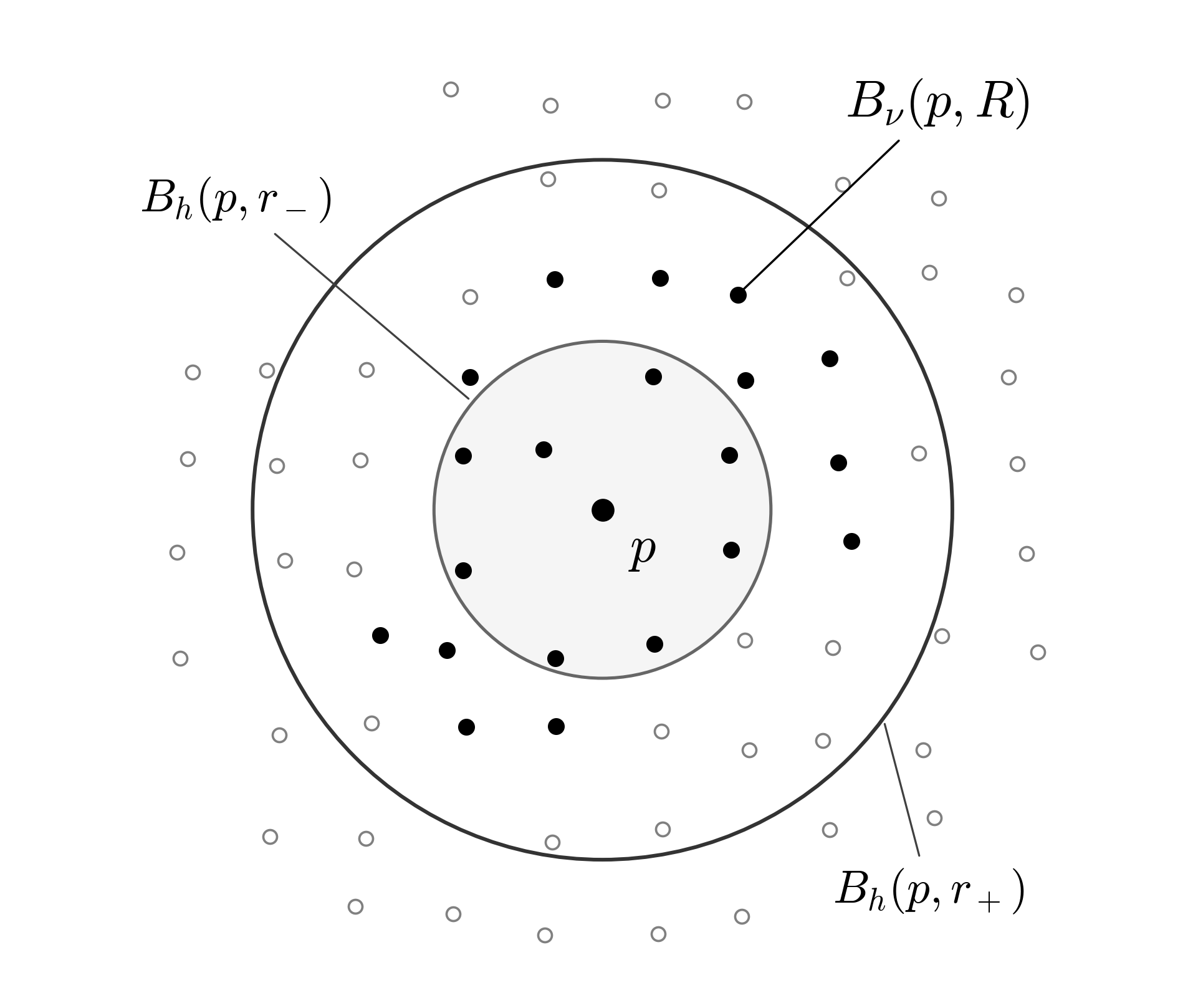}
\caption{The count-defined profile ball $B_\nu(p,R)$ under the hypotheses of Corollary~\ref{cor:profile-localization}. Filled points belong to it and open points are other samples. For $R>\eta_\rho$, put $r_-=(R-\eta_\rho)/C$ and $r_+=(R+\eta_\rho)/c$. Every sample in the smaller $h$-ball belongs to $B_\nu(p,R)$, and every member of $B_\nu(p,R)$ lies in the larger $h$-ball. Membership in the intervening annulus can vary. The circles depict the observer balls in a flat schematic; the profile ball itself is the discrete set of filled points.}
\label{fig:profile-ball-localization}
\end{figure}

For an exact order embedding, $d_\nu$ at sample endpoints and the individual past/future counts are computed from the finite order and $\rho$. 

\subsection{Whole-interval localization and proper-time estimates}

In this subsection $I(p,q) = J^+(p) \cap J^-(q)$ denotes the closed causal interval. For comparable sample points $p,q\in P$, define
\[
N(p,q)=\#\bigl(P\cap I(p,q)\bigr)=\rho\nu(I(p,q)).
\]
This count includes both endpoints. Since $f$ is an exact order embedding, it is also the number of causal-set elements between the corresponding labels, including those labels.

\begin{lemma}[Containment of small causal intervals]
\label{lem:whole-interval-localization}
	For a fixed compact set $K\subset\mathrm{int}D$ and $C_0 < \infty$, there are $C_1, \ell_0 > 0$ such that if $p\in K$, $p\leq q$, and $d_h(p,q) \leq C_0 \ell$ with $0<\ell < \ell_0$, then the entire global interval $J^+(p) \cap J^-(q)$ lies in a controlled $g$-normal chart and in an observer ball of radius $C_1 \ell$ about $p$. A fixed multiple of this ball remains inside $D$. 
\end{lemma}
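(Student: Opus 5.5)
The plan is to reuse the nested-neighborhood structure from the proof of Lemma~\ref{lem:small-diamond-stability} and localize the whole interval through causal convexity.

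First, fix a finite cover of $K$ by sets $U_k\Subset V_k\Subset W_k$ as before. Each $V_k$ is causally convex in $M$. Each $W_k$ is convex normal and lies inside $\mathrm{int} D$, which is possible because $K$ has positive $h$-distance from $\partial D$. Put $m_k=\operatorname{dist}_h(\overline U_k,\partial V_k)>0$ and $m=\min_k m_k$.

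Choose $\ell_0$ with $C_0\ell_0<m$. Then for $p\in U_k$ and $d_h(p,q)\le C_0\ell$, both endpoints lie in $V_k$. Since $J^+(p)\cap J^-(q)$ consists of points on causal curves from $p$ to $q$, causal convexity places the whole global interval in $V_k\subset W_k$. This is the controlled $g$-normal chart. Global hyperbolicity makes the interval compact, but we only need that every point lies on some causal curve between the endpoints.

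Second, we obtain the size bound. In the temporal coordinates of $W_k$, future causal vectors satisfy $|v|_{\mathrm{Eucl}}\le C\,dt(v)$, exactly as in Lemma~\ref{lem:small-diamond-stability}. For $y\in I(p,q)$, integrate along a causal curve from $p$ through $y$ to $q$ that stays in $V_k$. The Euclidean displacement from $p$ to $y$ is then at most $C(t(q)-t(p))$. The coordinate time difference is at most $C\,d_h(p,q)\le CC_0\ell$, and Euclidean and $h$ distances are uniformly comparable on $\overline V_k$. Hence $d_h(p,y)\le C_1\ell$.

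The final assertion comes from shrinking $\ell_0$ further. Given any fixed multiple $L$, require $LC_1\ell_0<\operatorname{dist}_h(K,\partial D)$, and also that this enlarged ball stays inside the chart region $V_k$.

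The main obstacle is the one flagged earlier in the paper: a small proper time alone does not bound the $h$-diameter near null directions. Here that issue is avoided because the hypothesis bounds $d_h(p,q)$, not $\tau_g$. The delicate point is making sure the causal curves defining the global interval cannot leave the chart and return. Causal convexity of $V_k$ in $M$, rather than only within $W_k$, is exactly what rules this out. I would verify that the Minguzzi--S\'anchez neighborhoods can be chosen inside $\mathrm{int} D$ with uniform cone constants, and take all constants from the finite cover.
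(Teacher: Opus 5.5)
Your proposal is correct and follows essentially the same route as the paper. The steps match: a finite nested cover $U_k\Subset V_k\Subset W_k$, causal convexity of $V_k$ in $M$ to trap the entire interval, integration of $|v|_{\mathrm{Eucl}}\le C\,dt(v)$ along causal curves to get the $C_1\ell$ bound, and a further shrinking of $\ell_0$ using $\operatorname{dist}_h(K,\partial D)>0$ so that the enlarged ball stays in $D$. The only difference is cosmetic: you build the cover with $W_k\subset\mathrm{int}D$, whereas the paper reuses the cover of $\overline\Omega$ from Lemma~\ref{lem:small-diamond-stability} and handles containment in $D$ only at the end.
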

\begin{proof}
	Use the finite cover $U_k\Subset V_k\Subset W_k$ from Lemma~\ref{lem:small-diamond-stability}, with the sets $U_k$ covering $K$. For sufficiently small $\ell$, if $p\in U_k$, then $q\in V_k$. Since $V_k$ is causally convex, $I(p,q)\subset V_k\subset W_k$.

	In temporal coordinates on $W_k$, every future causal tangent vector satisfies $|v|_{\mathrm{Eucl}}\leq C\,dt(v)$. Integrating along causal curves from $p$ to $q$, whose coordinate time separation is $O(\ell)$, shows that every point of $I(p,q)$ lies in $B_h(p,C_1\ell)$.

	Normal-coordinate neighborhoods centered at points of $K$ have a uniformly available radius. Thus, for any fixed $L>C_1$, the ball $B_h(p,L\ell)$ lies in such a neighborhood when $\ell$ is sufficiently small. It also lies inside $D$, because $K$ has positive $h$-distance from $\partial D$.
\end{proof}

\begin{lemma}[Cone and proper-time comparison]
\label{lem:normal-cone-comparison}
	Let $K\subset\mathrm{int}D$ be compact, and let $z$ be normal coordinates centered at $o\in K$, defined by a Lorentz-orthonormal frame adapted to $T(o)$. Fix $a>0$. For sufficiently small $\ell$, any $p\leq q$ with $|z_p|,|z_q|\leq a\ell$ admits constant wider and narrower comparison metrics $g_w,g_n=\eta+O(\ell^2)$ whose flat endpoint proper times satisfy
	\begin{equation}\label{eq:absolute-cone-comparison}
0\leq \tau_n \leq \tau_g(p,q) \leq \tau_w, \qquad \tau_w \leq \tau_g(p,q) + C\ell^2, \qquad \mu(I(p,q)) \leq C\tau_w^d.
\end{equation}
	Set $\tau_n=0$ if the endpoints are not causally related for $g_n$. In the same coordinates,
	\begin{equation}\label{eq:common-chart-squared-interval}
|\eta(z_q-z_p,z_q-z_p)+\tau_g(p,q)^2|\leq C\ell^4.
\end{equation}
	All estimates are uniform for $o\in K$ and such pairs, including pairs close to the null cone.
\end{lemma}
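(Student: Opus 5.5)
The approach is to compare the metric in the normal chart with constant metrics. In Riemann normal coordinates centered at $o$ we have $g_{\alpha\beta}(z)=\eta_{\alpha\beta}+O(|z|^2)$, with constants uniform for $o\in K$ by compactness and smooth dependence of the exponential map on the base point. By Lemma~\ref{lem:whole-interval-localization} (after adjusting constants), the whole interval $I(p,q)$ and every causal curve from $p$ to $q$ lie in the coordinate ball $|z|\leq C\ell$. On this ball $|g-\eta|\leq C\ell^2$. We then take
\[
g_w=\eta-\lambda\ell^2\,dt^2,\qquad g_n=\eta+\lambda\ell^2\,dt^2 ,
\]
with $t=z^0$ and $\lambda$ large. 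For such $\lambda$, every $g$-causal vector in the ball is $g_w$-timelike or null, and every $g_n$-causal vector is $g$-causal. Both metrics are $\eta+O(\ell^2)$.

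\emph{Ordering of proper times.} The inequality $\tau_g\leq\tau_w$ follows because each $g$-causal curve from $p$ to $q$ stays in the ball and is $g_w$-causal. Its $g$-length is at most its $g_w$-length up to a factor $1+O(\ell^2)$, which we arrange to be at most $1$ by also including a multiplicative rescaling of $\eta$ in $g_w$. The flat $g_w$-length is maximized by the straight segment, giving $\tau_g\leq\tau_w$. For $\tau_n\leq\tau_g$, the straight $g_n$-causal segment is $g$-causal and lies in the ball by convexity. Its $g$-length is at least its $g_n$-length after the analogous rescaling. If the endpoints are not $g_n$-related, $\tau_n=0$ and there is nothing to prove.

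\emph{The bound $\tau_w\leq\tau_g+C\ell^2$.} This is the step I expect to be the main obstacle, because the estimate must hold uniformly near the null cone. There, $\tau$ is a non-Lipschitz function of the endpoints: $\sqrt{\cdot}$ of the squared interval. I would work with squared intervals. Write $\Delta=z_q-z_p$, with $|\Delta|\leq 2a\ell$. Then
\[
\tau_w^2=-g_w(\Delta,\Delta)=-\eta(\Delta,\Delta)+O(\ell^2|\Delta|^2)=-\eta(\Delta,\Delta)+O(\ell^4).
\]
Similarly, $\tau_g^2=-\eta(\Delta,\Delta)+O(\ell^4)$, which is \eqref{eq:common-chart-squared-interval}. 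This comes from expanding Synge's world function in normal coordinates: $\sigma(p,q)=\tfrac12\eta(\Delta,\Delta)+O(|z|^2|\Delta|^2)$. The correction arises from the quadratic deviation of the metric and vanishes to second order in $\Delta$. It can be justified by integrating the geodesic equation with Christoffel symbols of size $O(\ell)$. Both squares thus differ by $O(\ell^4)$. Since $\tau_w\geq\tau_g\geq0$, we get
\[
\tau_w-\tau_g\leq\sqrt{\tau_w^2-\tau_g^2}\leq C\ell^2 .
\]
This is exactly the square-root trick that is uniform up to the null cone.

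\emph{Volume bound.} For $\mu(I(p,q))\leq C\tau_w^d$, we use $I_g(p,q)\subset I_{g_w}(p,q)$, which follows from the cone inclusion. A flat Alexandrov interval with proper time $\tau_w$ has $\eta$-volume $c_d\tau_w^d$, and the $g_w$-volume is comparable. The volume densities of $g$ and $\eta$ agree to within a factor $1+O(\ell^2)$. The comparison holds even when $\tau_w$ is tiny, since the flat interval then degenerates to thin volume. Uniformity in $o\in K$ comes from the uniform normal-neighborhood radius used in Lemma~\ref{lem:whole-interval-localization}.
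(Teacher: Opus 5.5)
Your proof is correct, but its key step takes a different route from the paper's.

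\textbf{The paper's route.} The paper takes $g_w=\eta-\epsilon I$ and $g_n=\eta+\epsilon I$, with $\epsilon=C\ell^2$ bounding the operator norm of $g-\eta$ on the localizing ball. Then $g_w\le g\le g_n$ hold as quadratic forms on all vectors, so the cone inclusions and length comparisons are immediate. Writing $b=-\eta(v,v)$ and $u=\epsilon|v|^2=O(\ell^4)$, the flat squared times are explicitly $\tau_w^2=b+u$ and $\tau_n^2=\max\{b-u,0\}$. The sandwich $\tau_n\le\tau_g\le\tau_w$ therefore yields both $|\tau_g^2-b|\le u$, which is \eqref{eq:common-chart-squared-interval}, and $\tau_w^2-\tau_g^2\le 2u$. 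The same square-root step you use then gives $\tau_w\le\tau_g+C\ell^2$.

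\textbf{Your route.} You obtain the lower bound on $\tau_g^2$ independently, from the world-function expansion $2\sigma(p,q)=\eta(\Delta,\Delta)+O(\ell^2|\Delta|^2)$. This is valid. Writing the geodesic as the coordinate segment plus $w$, the geodesic equation with $\Gamma=O(\ell)$ gives $\|w\|,\|\dot w\|=O(\ell|\Delta|^2)$. Since $\int_0^1\eta(\Delta,\dot w)\,d\lambda=0$, the expansion follows at once. However, this route needs two facts the paper's argument avoids:
\begin{enumerate}
\item the identification $\tau_g(p,q)^2=-2\sigma(p,q)$, which requires maximality of the radial geodesic among causal curves in a convex normal neighborhood, together with the localization of every causal curve from $p$ to $q$ inside that neighborhood;
\item the deviation estimate for the geodesic relative to the coordinate segment.
\end{enumerate}
In your version the narrow metric only certifies $\tau_n\le\tau_g$. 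In the paper it supplies half of \eqref{eq:common-chart-squared-interval}. The paper's argument thus stays entirely within flat comparisons and the flat reverse triangle inequality.

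\textbf{A minor point about your comparison metrics.} Your metrics $\eta\mp\lambda\ell^2\,dt^2$ satisfy the needed inequalities only on near-causal vectors, where $|v|^2\le 3v_0^2$. That suffices, but the correct length comparison is the pointwise inequality $g_w(v,v)\le g(v,v)$ for $g$-causal $v$ (and $g(v,v)\le g_n(v,v)$ for $g_n$-causal $v$). Large $\lambda$ already guarantees this. A multiplicative factor $1+O(\ell^2)$ on lengths is not the right form near the null cone, and the extra rescaling of $\eta$ you mention is unnecessary.
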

\begin{proof}
	Choose a compact $K'\subset\mathrm{int}D$ containing a neighborhood of $K$. Uniform equivalence of coordinate distance and $d_h$ puts $p,q$ in $K'$ and gives $d_h(p,q)\leq C\ell$ for small $\ell$. Lemma~\ref{lem:whole-interval-localization} then places their entire global interval inside the coordinate ball $|z|\leq L\ell$ about $o$, for some fixed $L$. Increase $L$ to include the flat comparison intervals as well: metrics sufficiently close to $\eta$ have uniformly bounded causal speeds, and the endpoint coordinate time difference is $O(\ell)$. For small $\ell$, this larger ball remains inside both the normal coordinate domain and $D$.

	On this ball, $g=\eta+O(\ell^2)$ and its volume density is $1+O(\ell^2)$, uniformly over $K$. Take $g_w=\eta-\epsilon I$ and $g_n=\eta+\epsilon I$, where $I$ is the Euclidean identity matrix and $\epsilon=C\ell^2$ bounds the operator norm of the metric error there. Write $I_w,I_n$ for their flat causal intervals, identified with subsets of the coordinate chart; take $I_n=\varnothing$ if its endpoints are not causally related. The quadratic-form inequalities $g_w\leq g\leq g_n$ give
	\[ I_n\subset I(p,q)\subset I_w, \qquad \tau_n\leq\tau_g(p,q)\leq\tau_w. \]
	Indeed, every actual causal curve is wider-causal and has no greater proper length for $g$ than for $g_w$. Conversely, the straight segments defining $I_n$ are $g$-causal, and the direct narrow-causal segment gives the lower proper-time bound. All these curves stay in the ball on which the metric comparison holds.

	Put $v=z_q-z_p$, $b=-\eta(v,v)$, and $u=\epsilon|v|^2=O(\ell^4)$. The flat squared times are $\tau_w^2=b+u$ and $\tau_n^2=\max\{b-u,0\}$, so
	\[ \max\{b-u,0\}\leq\tau_g(p,q)^2\leq b+u. \]
	This gives $|\tau_g(p,q)^2-b|\leq u$ and $0\leq\tau_w^2-\tau_g(p,q)^2\leq 2u$. Taking square roots proves the additive bound in \eqref{eq:absolute-cone-comparison} and the squared-interval estimate \eqref{eq:common-chart-squared-interval}, without dividing by a possibly small proper time. Finally, the coordinate volume of $I_w$ is $\kappa_d\tau_w^d/\sqrt{|\det g_w|}$, where $\kappa_d$ is the volume of the flat unit-height Alexandrov interval. The interval inclusion and the actual volume density give $\mu(I(p,q))\leq C\tau_w^d$.
\end{proof}

Curvature corrections to small causal-diamond volumes were studied by Myrheim~\cite{Myrheim1978} and Gibbons--Solodukhin~\cite{GibbonsSolodukhin2007}. The following consequence gives the relative volume estimate used in the count-based proper-time reconstruction.

\begin{corollary}[Volume of a timelike interval]
\label{cor:relative-interval-volume}
	Under the hypotheses of Lemma~\ref{lem:normal-cone-comparison}, if $\tau_g(p,q)\geq c_0\ell$ for a fixed $c_0>0$, then
	\begin{equation}\label{eq:relative-interval-volume}
\mu(I(p,q))=\kappa_d\tau_g(p,q)^d[1+O(\ell^2)],
\end{equation}
	uniformly over such pairs, where $\kappa_d$ is the volume of the flat unit-height Alexandrov interval.
\end{corollary}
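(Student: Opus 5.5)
The plan is to sandwich $I(p,q)$ between the two flat comparison intervals $I_n\subset I(p,q)\subset I_w$ from Lemma~\ref{lem:normal-cone-comparison}. Their volumes are explicit, so the argument reduces to comparing $\tau_n$ and $\tau_w$ with $\tau_g(p,q)$ in relative terms. The assumption $\tau_g(p,q)\geq c_0\ell$ makes this possible. Once the proper time is bounded below, the additive errors of order $\ell^2$ (in $\tau$) and $\ell^4$ (in $\tau^2$) become relative errors of order $\ell$ and $\ell^2$. The real task is to secure the relative error $O(\ell^2)$ rather than only $O(\ell)$.

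First, I would keep the notation of the proof of Lemma~\ref{lem:normal-cone-comparison}. With $v=z_q-z_p$, $b=-\eta(v,v)$ and $u=\epsilon|v|^2=O(\ell^4)$, we have $\tau_w^2=b+u$, $\tau_n^2=\max\{b-u,0\}$, and $|\tau_g^2-b|\leq u$. Since $\tau_g^2\geq c_0^2\ell^2$, it follows that $b\geq c_0^2\ell^2-C\ell^4$, which is comparable to $\ell^2$ for small $\ell$. Then
\[
\tau_w^2=\tau_g^2\bigl(1+O(\ell^4/\ell^2)\bigr)=\tau_g^2\bigl(1+O(\ell^2)\bigr),
\]
and the same holds for $\tau_n^2$, which is positive for small $\ell$. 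Raising to the power $d/2$ gives $\tau_{w}^d=\tau_g^d(1+O(\ell^2))$ and $\tau_{n}^d=\tau_g^d(1+O(\ell^2))$, uniformly.

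Second, I would convert these flat proper times into actual $\mu$-volumes. The flat interval $I_w$ has coordinate volume $\kappa_d\tau_w^d/\sqrt{|\det g_w|}$. Since $g_w=\eta-\epsilon I$ with $\epsilon=C\ell^2$, we get $\sqrt{|\det g_w|}=1+O(\ell^2)$, and likewise for $g_n$. All these sets lie in the coordinate ball $|z|\leq L\ell$, where the density of $\mu$ relative to $dz$ is $1+O(\ell^2)$ because of normal coordinates. Therefore
\[
\mu(I_n)=\kappa_d\tau_n^d(1+O(\ell^2)),\qquad \mu(I_w)=\kappa_d\tau_w^d(1+O(\ell^2)).
\]
Monotonicity under $I_n\subset I(p,q)\subset I_w$ then yields \eqref{eq:relative-interval-volume}. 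Here $I(p,q)\subset D$ by Lemma~\ref{lem:whole-interval-localization}, so $\mu$ agrees with $\overline\mu$ on these sets.

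The main point to verify is that the metric error in normal coordinates is genuinely $O(\ell^2)$ on the whole ball containing $I_w$, uniformly in $o\in K$. Lemma~\ref{lem:normal-cone-comparison} already supplies this bound. A cruder comparison that only used tangent cones at $p$, with error $O(\ell)$, would give only a relative error of $O(\ell)$. A second thing to watch is that the operator-norm bound $u\leq\epsilon|v|^2$ uses the Euclidean norm $|v|=O(\ell)$ rather than the proper time. Even so, both $u$ and the discrepancy $\tau_w^2-\tau_n^2\leq 2u$ are $O(\ell^4)$ against $\tau_g^2\gtrsim\ell^2$. This is exactly where the timelike lower bound is used, and near-null pairs would fail this step.
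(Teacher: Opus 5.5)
Your proposal is correct and follows the paper's proof. The paper likewise uses the squared-time bounds of Lemma~\ref{lem:normal-cone-comparison} together with $\tau_g(p,q)\geq c_0\ell$ to get $\tau_n,\tau_w=\tau_g(p,q)[1+O(\ell^2)]$ with $\tau_n>0$. It then converts the coordinate volumes $\kappa_d\tau^d/\sqrt{|\det g|}$ using determinant and volume-density factors $1+O(\ell^2)$, and concludes from the inclusions $I_n\subset I(p,q)\subset I_w$; your explicit bookkeeping with $b$ and $u=O(\ell^4)$ only spells out what the paper leaves implicit.
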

\begin{proof}
	The squared-time bounds in the lemma give $\tau_n,\tau_w=\tau_g(p,q)[1+O(\ell^2)]$ when $\tau_g(p,q)\geq c_0\ell$; in particular, $\tau_n>0$ for small $\ell$. The coordinate volumes of $I_n$ and $I_w$ are $\kappa_d\tau_n^d/\sqrt{|\det g_n|}$ and $\kappa_d\tau_w^d/\sqrt{|\det g_w|}$, respectively. Both determinant magnitudes and the actual volume density are $1+O(\ell^2)$. The inclusions $I_n\subset I(p,q)\subset I_w$ therefore give \eqref{eq:relative-interval-volume}.
\end{proof}

\begin{theorem}[Proper time from interval counts]
\label{thm:local-count-certificate}
	Assume V1. Fix a compact set $K\subset\mathrm{int}D$ and constants $C_0,n_0>0$. There exist $\ell_0,c_*,\epsilon_*>0$ such that the following holds. Let $p\in P\cap K$, $q\in P$, and $0<\ell\leq\ell_0$ satisfy $p\leq q$, $d_h(p,q)\leq C_0\ell$, and
	\begin{equation}\label{eq:count-certificate}
N(p,q)\geq n_0\rho\ell^d.
\end{equation}
	Let the test height $t$ satisfy
	\[
s_{\min}\leq t\leq c_*\ell,\qquad \delta_t+t/\ell\leq\epsilon_*.
\]
	Then uniformly over all such pairs
	\begin{equation}\label{eq:certified-interval-size}
c \ell \leq \tau_g(p,q) \leq C\ell, \qquad \mathrm{diam}_h I(p,q) \leq C\ell,
\end{equation}
	and the data estimator
	\[ \hat{\tau}(p,q) = \left( \frac{N(p,q)}{\rho \kappa_d} \right)^{1/d} \]
	satisfies
	\begin{equation}\label{eq:certified-time-error}
\left| \frac{\hat \tau(p,q)}{\tau_g(p,q)} -1 \right| \leq C\left( \ell^2 + \delta_t + \frac{t}{\ell} \right).
\end{equation}
	In any common-base normal chart satisfying Lemma~\ref{lem:normal-cone-comparison}, its negative square differs from $\eta(z_q-z_p,z_q-z_p)$ by at most $C\ell^2(\ell^2+\delta_t + t/\ell)$. 
\end{theorem}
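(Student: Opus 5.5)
The plan is to first localize the interval geometrically, then use the count certificate to force a proper-time lower bound, and finally invert the relative volume formula. By Lemma~\ref{lem:whole-interval-localization}, since $p\in K$, $p\leq q$ and $d_h(p,q)\leq C_0\ell$, the whole interval $I(p,q)$ lies in a controlled normal chart inside $B_h(p,C_1\ell)$, with a fixed enlargement of this ball still in $D$. This gives $\mathrm{diam}_h I(p,q)\leq C\ell$ immediately. Taking normal coordinates centered at $o=p$, both endpoints satisfy $|z|\leq a\ell$, so Lemma~\ref{lem:normal-cone-comparison} applies. The upper bound $\tau_g(p,q)\leq C\ell$ follows from $\tau_g\leq\tau_w$ and $|z_q-z_p|\leq C\ell$.

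For the lower bound on $\tau_g$, I will apply the local set discrepancy \eqref{eq:local-set-discrepancy} of Lemma~\ref{lem:local-discrepancy} to $A=I(p,q)$, with $U$ a coordinate region in $B_h(p,L_0\ell)$. The tube hypothesis holds because the boundary of a causal interval lies in two achronal sets, and Lemma~\ref{lem:achronal-tube-bounds} gives volume at most $Ct\ell^{d-1}$. This yields
\[
|\nu(I(p,q))-\mu(I(p,q))|\leq C\ell^d(\delta_t+t/\ell)\leq C\epsilon_*\ell^d.
\]
Combining this with \eqref{eq:count-certificate} gives $\mu(I(p,q))\geq (n_0-C\epsilon_*)\ell^d\geq (n_0/2)\ell^d$ once $\epsilon_*$ is small. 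The last inequality in \eqref{eq:absolute-cone-comparison} then gives $\tau_w\geq c\ell$, and $\tau_w\leq\tau_g+C\ell^2$ yields $\tau_g\geq c\ell/2$ for $\ell_0$ small. This proves \eqref{eq:certified-interval-size}.

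With $\tau_g\geq c_0\ell$, Corollary~\ref{cor:relative-interval-volume} gives $\mu(I(p,q))=\kappa_d\tau_g^d[1+O(\ell^2)]$. The discrepancy bound, divided by $\mu(I)\geq c\ell^d$, shows that $N/(\rho\mu(I))=1+O(\delta_t+t/\ell)$. Hence
\[
\hat\tau^d/\tau_g^d=1+O(\ell^2+\delta_t+t/\ell).
\]
Taking the $d$-th root, which is Lipschitz near $1$ because the error is at most a small constant, proves \eqref{eq:certified-time-error}. For the final claim I write
\[
\hat\tau^2-\tau_g^2=\tau_g^2\bigl((\hat\tau/\tau_g)^2-1\bigr)=O\bigl(\ell^2(\ell^2+\delta_t+t/\ell)\bigr)
\]
and combine it with \eqref{eq:common-chart-squared-interval}, whose error $C\ell^4$ is absorbed. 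In a general common-base chart, the endpoints still lie within $a\ell$ of the base, so the same lemma applies.

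The main obstacle is checking the hypotheses of Lemma~\ref{lem:local-discrepancy} uniformly. This includes the tube bound for possibly thin, nearly null intervals, which is exactly where the achronal-boundary argument is needed. It also requires that all averaging and cover heights, which are fixed multiples of $t$ and $\ell$, lie in $[s_{\min},s_0]$ with the diamonds inside $D$. I would secure this by choosing $c_*$ and $\ell_0$ small, depending on $K$. Before any count is used, the lower bound on $\tau_g$ rests only on the volume lower bound, so the argument never divides by a possibly small proper time.
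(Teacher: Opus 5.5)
Your proposal is correct and follows essentially the same route as the paper: you localize the whole interval via Lemma~\ref{lem:whole-interval-localization}, then use the thin-interval discrepancy bound of Lemma~\ref{lem:local-discrepancy} (with achronal tube control) to get $\mu(I(p,q))\geq (n_0/2)\ell^d$, hence $\tau_w\geq c\ell$ and $\tau_g\geq c'\ell$, and finish with Corollary~\ref{cor:relative-interval-volume}, a $d$th root, and squaring against \eqref{eq:common-chart-squared-interval}. The hypothesis checks you flag are exactly the ones the paper secures by the same small choices of $c_*$, $\ell_0$, and $\epsilon_*$: test heights in $[s_{\min},s_0]$, covering diamonds inside $D$, and $\delta_t\leq 1/2$.
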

\begin{proof}
	Lemma~\ref{lem:whole-interval-localization} supplies containment and the diameter bound before any proper time estimate. Fix the covering constants in Lemma~\ref{lem:local-discrepancy} so that all test diamonds have heights between $t$ and $B\ell$ for a fixed $B\geq1$. Choose $c_*$ small enough for that lemma and $\ell_0$ small enough that $B\ell_0\leq s_0$ and all these diamonds remain inside $D$. Their heights then lie in $[s_{\min},s_0]$, so V1 applies.

	The two causal boundaries satisfy Lemma~\ref{lem:achronal-tube-bounds}, so the local discrepancy estimate \eqref{eq:local-set-discrepancy} applies even if the interval is thin. Choose $\epsilon_*$ small enough that the discrepancy is at most $(n_0/2)\ell^d$ and $\delta_t\leq1/2$. The lower count in \eqref{eq:count-certificate} gives $\mu(I(p,q))\geq(n_0/2)\ell^d$. The absolute volume upper bound in \eqref{eq:absolute-cone-comparison} therefore forces $\tau_w\geq c\ell$. Its additive comparison gives $\tau_g\geq c'\ell$ for small $\ell$. The upper proper time bound follows from the wider cone and endpoint localization, without an upper-count assumption. This proves \eqref{eq:certified-interval-size}.
	
	Now apply \eqref{eq:relative-interval-volume}. Local count discrepancy and the volume expansion show $N(p,q)/(\rho\kappa_d)=\tau_g^d[1+O(\ell^2+\delta_t+t/\ell)]$. Taking the $d$th root gives \eqref{eq:certified-time-error}. The final assertion follows by squaring this estimate and using \eqref{eq:common-chart-squared-interval}. 
\end{proof}

\subsection{Choice of reconstruction scales}\label{sec:reconstruction-scales}

To estimate interval volumes within a region of radius $\ell$, we use the counting bounds on test diamonds of height $t$. Making $t$ smaller reduces the boundary error $t/\ell$, but increases the relative counting tolerance $\delta_t$. We first balance these errors, then choose reconstruction scales for which the counting error is no larger than the geometric error $O(\ell^2)$.

Balancing $\sqrt{L_\rho/(\rho t^d)}$ and $t/\ell$ gives the test height
\[ t_*=(L_\rho\ell^2/\rho)^{1/(d+2)}, \qquad \delta_{t_*}+t_*/\ell\leq C\left(\frac{L_\rho}{\rho\ell^d}\right)^{1/(d+2)}. \]
This height lies well above $s_{\min}$ when $\rho\ell^d\gg L_\rho^{(d+4)/2}$, and well below $\ell$ when $\rho\ell^d/L_\rho\to\infty$. If this choice fails the height or smallness requirements of Theorem~\ref{thm:local-count-certificate}, one must choose a height satisfying those requirements, and the displayed rate need not hold. Under these conditions, the relative proper-time error in Theorem~\ref{thm:local-count-certificate} is bounded by a constant times
\[ E_\ell=\ell^2+\left(\frac{L_\rho}{\rho\ell^d}\right)^{1/(d+2)}. \]

We use a fine scale $r$ for local coordinates and short-edge transport, and a larger scale $H$ for curvature loops. Equating the counting term in $E_r$ with $r^2$ determines $r$; the relation $H=\sqrt r$ will be justified by the curvature error budget. Thus put
\[ r=(L_\rho/\rho)^{1/(3d+4)}, \qquad H=\sqrt r. \]
Then $t_*(r)=r^3$ and
\[ \sqrt{\frac{L_\rho}{\rho r^{3d}}}=r^2, \qquad \frac{1}{\rho r^d}=\frac{r^{2d+4}}{L_\rho}=o(r^2). \]
Moreover
\[ s_\rho=r^{3-2/(d+2)}, \quad s_{\min}=r^{3+4/d}L_\rho^{1/d}, \quad s_{\min}\ll r^3\ll s_\rho\ll r^2\ll r. \]
Thus the net radius is $o(r^3)$ and the profile error is $o(r^2)$. At both reconstruction scales $r$ and $H$, the corresponding test heights satisfy the height and smallness requirements above, and
\[ E_r=O(r^2), \qquad E_H=O(H^2). \]
Indeed the counting term at $H=\sqrt r$ is $r^{(5d+8)/(2d+4)}=o(r)$.

The length scales used in the construction are summarized below. The variables $t$ and $\ell$ enter the local counting estimates, while $r$ and $H$ are the chosen reconstruction scales.
\begin{center}
\small
\renewcommand{\arraystretch}{1.25}
\begin{tabular}{@{}lp{.28\linewidth}p{.53\linewidth}@{}}
\textbf{Symbol} & \textbf{Value or choice} & \textbf{Role} \\
\hline
$s_{\min}$ & $\rho^{-1/d}L_\rho^{2/d}$ & \raggedright Minimum test-diamond height in (V1). \tabularnewline
$s_\rho$ & $(L_\rho/\rho)^{1/(d+2)}$ & \raggedright Averaging height used to estimate past, future, and profile counts. \tabularnewline
$t$ & Variable & \raggedright Test-diamond height in the local counting estimate. \tabularnewline
$\ell$ & Variable & \raggedright Geometric scale of the region containing the interval being estimated. \tabularnewline
$r$ & $(L_\rho/\rho)^{1/(3d+4)}$ & \raggedright Fine-chart scale and scale of the short-edge subdivision. \tabularnewline
$H$ & $\sqrt r$ & \raggedright Side length of the target coordinate rectangles used for curvature reconstruction. \tabularnewline
\hline
\end{tabular}
\end{center}
When its height requirements hold, the balanced choice is $t=t_*(\ell)$, and in particular $t_*(r)=r^3$.

The preceding results now provide the ingredients for reconstruction: counts select sample points in a compact subset of $\mathrm{int}D$, profile distances locate nearby points, and the sample forms a sufficiently fine net there. Local count estimates and estimates comparing sample sums with volume integrals control the regions used in reconstruction, while the interval-count test gives proper-time and squared-interval estimates for the pairs that pass it.

\section{Normal coordinates from causal order and interval counts}
\label{sec:sampled-charts}

By sampled normal coordinates we mean coordinate maps defined on sample labels that approximate continuum normal coordinates at the same base, up to one bounded Lorentz transformation for each chart. Proposition~\ref{prop:accepted-chart-soundness} gives the precise error bound. The inner active set is fixed before choosing anchors, and every label in that set is retained. This supplies overlaps containing sufficiently many points spread across independent coordinate directions. Lemma~\ref{lem:empirical-conditioning} uses this spread to bound the covariance inverse and keep the overlap fits stable.

\subsection{Eligible chart and loop bases}\label{sec:eligible-bases}
Choose a fixed $\theta>0$ such that $\{x\in D:v(x)>6\theta\}$ is nonempty. For $k=2,4$, define the computable sets
\[
C_\rho[k\theta]=\left\{p\in C_\rho:\rho^{-1}\min\bigl(|J_C^-(p)|,|J_C^+(p)|\bigr)\geq k\theta\right\}.
\]
The set $C_\rho[k\theta]$ consists of labels whose past and future within the sampled diamond each have estimated volume at least $k\theta$. These sets are computable from the finite order and the specified density $\rho$ by counting the labels in each past and future and dividing by $\rho$. Under the volume conditions, for sufficiently large $\rho$, these thresholds keep the selected bases in a fixed interior region of $D$.

We construct charts at bases in $C_\rho[2\theta]$ and start curvature loops at bases in $C_\rho[4\theta]$. The larger threshold leaves room to move from a loop base to nearby fine-chart bases while preserving the smaller past- and future-count requirement. By the uniform individual count estimates in Corollary~\ref{cor:profile-localization}, every chart base lies in $K_\theta$ for sufficiently large $\rho$. The corresponding neighborhood statement for loop bases is established in Section~\ref{sec:raw-chart-interfaces}.

For sufficiently large $\rho$, the loop-base set is nonempty. Indeed, a small fixed ball contained in $\{v>6\theta\}$ contains a sample point by Corollary~\ref{cor:sample-net}, and the individual count estimates of Corollary~\ref{cor:profile-localization} place that label in $C_\rho[4\theta]$.

\subsection{Nested profile balls and the finite tests}
Compute from the poset
\[ d_C(p,c) = \rho^{-1}(|J_C^+(p)\triangle J_C^+(c)| + |J_C^-(p)\triangle J_C^-(c)|). \]
The past and future sets in this profile are taken in the entire sampled diamond $D$, so neighborhood selection uses causal relations throughout $D$. Under the counting and scale assumptions below, the profile comparison places the selected labels in an $h$-ball of radius $O(s)$. The anchor and active points lie in a fixed larger ball of the same order, and the interval counts used for coordinates are localized there by Theorem~\ref{thm:local-count-certificate}.
Under the deterministic volume conditions, \eqref{eq:empirical-profile-localization} gives, for every base $p\in C_\rho[2\theta]$ and label $c\in C_\rho$, 
\begin{equation}\label{eq:sample-profile-localization}
c_P d_h(f(p),f(c))-C\eta_\rho \leq d_C(p,c) \leq C_P d_h(f(p),f(c))+C\eta_\rho .
\end{equation}
The lower bound controls the $h$-distance from $f(p)$ to every label selected by a profile-distance threshold. Choose $a>0$, then a sufficiently large fixed $A_{\mathrm{cand}} > a$, and put
\[ \mathcal A_{p,s} =\{c : d_C(p,c) \leq as \}, \quad \mathcal{B}_{p,s} = \{c : d_C(p,c) \leq A_{\mathrm{cand}}s\}. \]
For $\eta_\rho = o(s)$ there are uniform $c_{in}, C_{act}, C_{out} > 0$ such that 
\[ \{ c: d_h(f(p),f(c)) \leq c_{in} s\} \subset \mathcal A_{p,s} \subset \{ c: d_h(f(p),f(c)) \leq C_{act} s\}, \]
and $\mathcal B_{p,s}$ lies in the physical $C_{out}s$ ball. For example $c_{in} = a/(2C_P)$ works for large density, after a harmless further decrease if necessary. 

The inner set $\mathcal A_{p,s}$ contains the sample points to which we will assign coordinates. The larger set $\mathcal B_{p,s}$ supplies candidates for reference points, called anchors. Figure~\ref{fig:anchor-construction} illustrates this arrangement; the construction itself uses only profile distances, causal order, and interval counts, not the displayed continuum coordinates.
\begin{center}
\begin{minipage}{\linewidth}
    \centering
    \captionsetup{hypcap=false}
    \includegraphics[width=.52\linewidth]{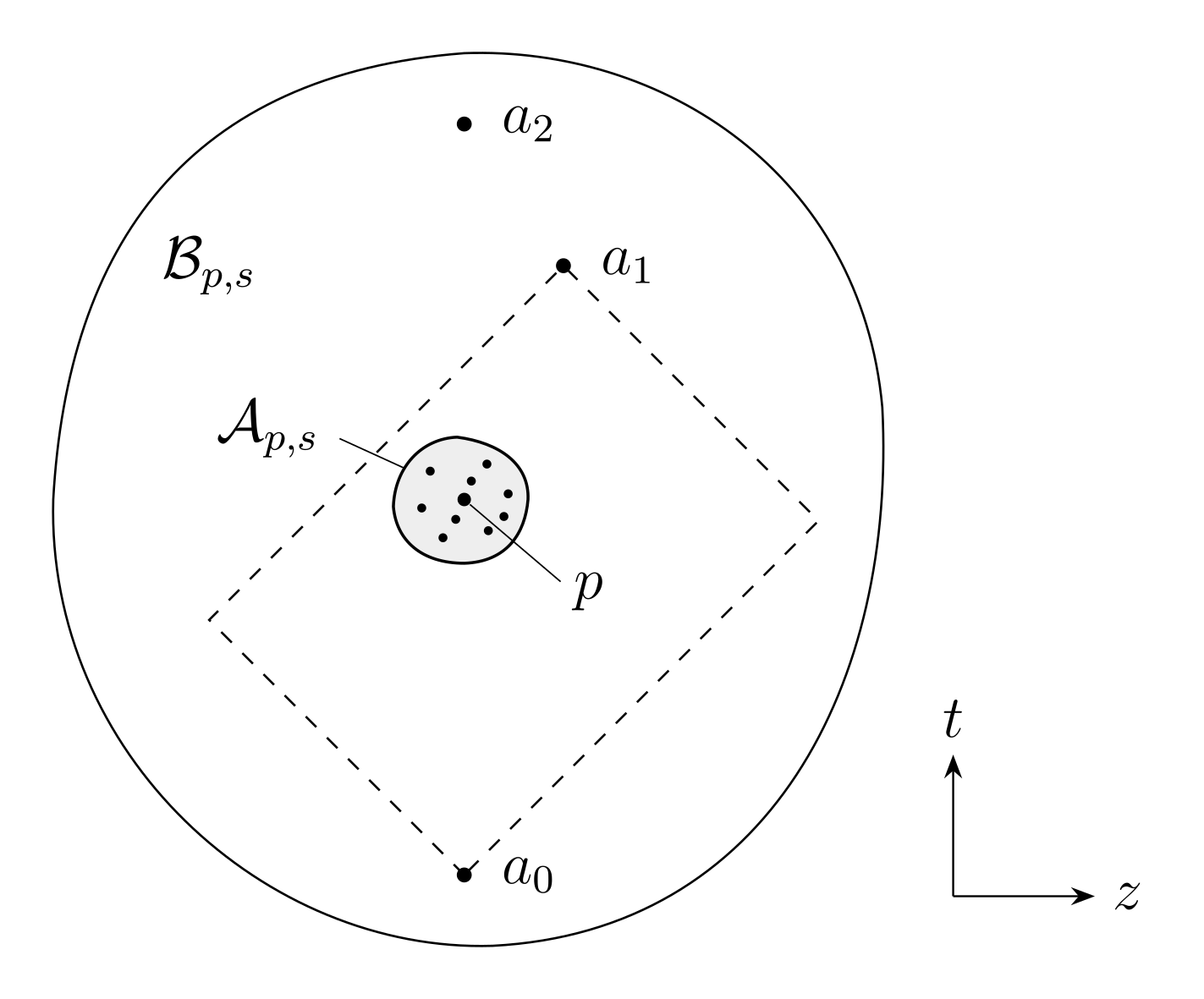}
    \captionof{figure}{An anchor tuple in a two-dimensional spacetime illustration. The sample points in $\mathcal A_{p,s}$ lie between $a_0$ and $a_1$, while $a_0\prec a_1\prec a_2$ are chosen from $\mathcal B_{p,s}\setminus\mathcal A_{p,s}$. Dashed lines show the null boundaries of $I(a_0,a_1)$. The profile-set boundaries are schematic; their shapes are not assumed to be Euclidean balls. This picture illustrates causal comparability, not the interval-count or Gram-matrix tests.}
    \label{fig:anchor-construction}
\end{minipage}
\end{center}

For a distinct comparable pair $c,c'$, let $N(c,c')$ count the labels in the closed order interval from the earlier label to the later one. Thus $N(c,c')=N(c',c)$.

Enumerate distinct order chains $\mathbf a = (a_0 \prec a_1 \prec \dots \prec a_d)$ in $\mathcal B_{p,s} \setminus \mathcal A_{p,s}$. This enumeration uses only the poset order. Theorem~\ref{thm:local-count-certificate} shows that pairs passing the lower count test are timelike in the realization. Require
\begin{itemize}
	\item[(C1)] Every anchor-anchor and anchor-active pair is comparable, including every pair with $c=p$.
	\item[(C2)] Every such distinct pair obeys $n_0 \rho s^d \leq N(c,c') \leq n_1 \rho s^d$, where $0 < n_0 < n_1 < \infty$ are fixed. Counts use the full order interval.
	\item[(C3)] With \[\hat D(c,c') = -\left( \frac{N(c,c')}{\rho \kappa_d} \right)^{2/d}, \quad \hat D(c,c)=0, \]
	and $\kappa_d$ the volume of the flat unit-height Alexandrov interval, the matrix
	\[ \hat G_{ij} = \frac{\hat D(a_0,a_i) + \hat D(a_0,a_j)-\hat D(a_i,a_j)}{2}, \qquad 1\leq i,j\leq d, \]
	has one negative and $d-1$ positive eigenvalues, and every eigenvalue of $\hat G/s^2$ has absolute value in $[g_0,g_1]$, for fixed $0<g_0 < g_1 < \infty$.
\end{itemize}

Condition (C1) makes every required interval available from the order. By Theorem~\ref{thm:local-count-certificate}, the lower count in (C2), together with localization, makes its proper time comparable to $s$. This excludes intervals that are too close to null at that scale. Condition (C3) gives the Gram matrix Lorentzian signature and a uniformly bounded inverse after rescaling by $s^2$, allowing stable coordinate reconstruction.

These tests are finite. Under the deterministic volume conditions, Proposition~\ref{prop:accepted-scaffolds} proves that a passing tuple exists when the constants are chosen with a fixed positive margin in the test inequalities. The accuracy bound below holds for every passing tuple, including tuples selected using the sample. 

The first diagonal element has the following lower bound, which keeps the factorization below away from a zero time pivot:
\begin{equation}\label{eq:time-pivot-bound}
-\hat G_{11} \geq (n_0/\kappa_d)^{2/d}s^2
\end{equation}

\subsection{Gram-matrix factorization and coordinate reconstruction}
The coordinate formula comes from expressing inner products in terms of squared intervals. Put $\eta=\operatorname{diag}(-1,1,\ldots,1)$. First consider flat Minkowski coordinates $z_c$, with $z_p=0$, and let $S$ have the independent anchor displacements $z_{a_i}-z_{a_0}$ as columns. Their Gram matrix is
\[
G=S^T\eta S,
\]
the pullback of the Minkowski metric $\eta$ under the linear map $S$. Thus $G$ expresses Minkowski inner products in terms of coefficients in the anchor basis. Polarization determines the inner products
\[
w_i(c)=\eta(z_{a_i}-z_{a_0},z_c-z_{a_0})
\]
from the squared intervals between the anchors and $c$. Subtracting the value at $p$ gives
\[
w(c)-w(p)=S^T\eta z_c,
\qquad
G^{-1}[w(c)-w(p)]=S^{-1}z_c.
\]
Thus the inverse Gram matrix recovers the coefficients of $z_c$ in the anchor basis.

To express these coefficients in Minkowski coordinates, we seek a matrix $B(G)$ satisfying
\[
B(G)^T\eta B(G)=G.
\]
For any such matrix,
\[
B(G)G^{-1}[w(c)-w(p)]=B(G)S^{-1}z_c.
\]
The matrix $\Lambda=B(G)S^{-1}$ satisfies $\Lambda^T\eta\Lambda=\eta$. The recovered coordinates therefore differ from $z_c$ by a Lorentz transformation. We now specify $B(G)$ by an explicit factorization.

For a symmetric matrix $G$ with one negative and $d-1$ positive eigenvalues and with $G_{11}<0$, define
\[ a_* = \sqrt{-G_{11}}, \quad t_j=-G_{1j}/a_* \quad (2\leq j \leq d), \quad S_{sp} = G_{2:d,2:d}-G_{2:d,1}G_{1,2:d}/G_{11} \]
To see why $S_{sp}$ is positive definite, write $G=\begin{pmatrix}\alpha&b^T\\ b&Q\end{pmatrix}$, where $\alpha=G_{11}<0$. Completing the square gives
\[ \begin{pmatrix}u\\v\end{pmatrix}^{T}G\begin{pmatrix}u\\v\end{pmatrix} = \alpha\left(u+\frac{b^Tv}{\alpha}\right)^2+v^T\left(Q-\frac{bb^T}{\alpha}\right)v. \]
An invertible linear change of variables preserves the numbers of positive and negative directions of a quadratic form. The first term accounts for the unique negative direction of $G$, so the remaining form, whose matrix is $S_{sp}=Q-bb^T/\alpha$, is positive definite. This is the Schur complement inertia identity in the case used here.

The positive-definite matrix $S_{sp}$ has a unique Cholesky factorization $R^TR=S_{sp}$ with $R$ upper triangular and its diagonal entries positive. Thus $v^TS_{sp}v=|Rv|^2$. The matrix $R$ converts the remaining positive quadratic form into Euclidean spatial coordinates. The triangular and positive-diagonal requirements specify a unique factor, which varies smoothly with $S_{sp}$. Combining it with the negative direction, set
\[ B(G) = \begin{pmatrix}a_* & t^T\\ 0 & R \end{pmatrix}, \quad B(G)^T \eta B(G) = G \]
The identity follows by block multiplication: the off-diagonal block is $-a_* t^T = G_{1,2:d}$ and the lower block is $-tt^T + S_{sp} = G_{2:d,2:d}$. Restrict $G/s^2$ to matrices with the eigenvalue bounds in (C3) and the corresponding bound on $-G_{11}/s^2$ from \eqref{eq:time-pivot-bound}. These normalized matrices form a compact set on which $B$ is smooth and both $B$ and its inverse are uniformly bounded. Consequently
\[ \|B(G)\| \leq Cs, \quad \|B(G)^{-1}\| \leq C/s, \quad \|B(G')-B(G)\| \leq (C/s)\|G'-G\| \]
for sufficiently close normalized matrices. The final assertion follows from bounded derivatives on a fixed slightly enlarged compact class; the segment remains there for sufficiently small perturbations.

We now replace the exact squared intervals by their estimates. For $c \in \mathcal A_{p,s}$, define
\[ \hat w_i(c) = \frac12 (\hat G_{ii}-\hat D(a_i,c) + \hat D(a_0,c)) \]
and
\begin{equation}\label{eq:sample-coordinate-formula}
X_{p,s,\mathbf a}(c) = B(\hat G)\hat G^{-1} [\hat w(c)-\hat w(p)]
\end{equation}
The subtraction centers the chart exactly at its base, so $X_{p,s,\mathbf a}(p)=0$.

The chart construction uses $s=r$ for fine charts and $s=C_HH$ for coarse charts. Here $C_H>0$ is a fixed enlargement constant chosen so that the coarse chart contains the loop and the neighborhoods needed along it in Section~\ref{sec:transport-curvature}. We use the superscript $H$ for these coarse coordinates.

For comparison, at each base $p$ let $e_p:\R^d\to T_{f(p)}M$ be a Lorentz-orthonormal frame whose time vector is $T_{f(p)}$, and write $\chi_p=e_p^{-1}\log_{f(p)}^g$ for its normal chart. This reference chart is used to prove accuracy.

\begin{proposition}[Uniform accuracy of reconstructed charts]
\label{prop:accepted-chart-soundness}
	Assume the deterministic volume conditions of Definition~\ref{def:deterministic-volume-event}, and let $s=r$ or $s=C_H H$. At every base $p\in C_\rho[2\theta]$, every passing tuple defines coordinates satisfying
	\[ X_{p,s,\mathbf a}(c) =\Lambda_{p,s,\mathbf a} \chi_p(f(c))+e(c), \qquad \sup_{c\in \mathcal A_{p,s}} |e(c)| \leq CE_s s, \]
	where $\Lambda^T \eta \Lambda = \eta$ and $\|\Lambda\| + \| \Lambda^{-1} \| \leq C$, uniformly over all choices.  
\end{proposition}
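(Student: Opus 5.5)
We compare the reconstruction formula \eqref{eq:sample-coordinate-formula} with the same formula evaluated on exact data in the normal chart $\chi_p$. We then propagate the squared-interval errors of Theorem~\ref{thm:local-count-certificate} through the Gram inverse and the factor $B$.

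\emph{Step 1: localization and exact squared intervals.} By the nested profile inclusions, every anchor and active label lies within $h$-distance $C_{out}s$ of $f(p)$. Since $p\in C_\rho[2\theta]$ places $f(p)$ in $K_\theta$, Lemma~\ref{lem:whole-interval-localization} and Lemma~\ref{lem:normal-cone-comparison} apply in the common-base chart $z=\chi_p$ with $\ell$ a fixed multiple of $s$. Condition (C1) supplies comparability of every required pair, and the lower count in (C2) gives \eqref{eq:count-certificate} with $\ell\asymp s$. The final assertion of Theorem~\ref{thm:local-count-certificate}, applied with the balanced test height of Section~\ref{sec:reconstruction-scales}, then gives
\[
|\hat D(c,c')-\eta(z_{c'}-z_c,z_{c'}-z_c)|\leq Cs^2E_s
\]
for every distinct anchor--anchor and anchor--active pair. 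The case $c=p$ is included in (C1), and diagonal entries are exact. The requirement $E_s=O(s^2)$ at $s=r$ and $s=C_HH$ is verified in Section~\ref{sec:reconstruction-scales}.

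\emph{Step 2: perturbation of the Gram data.} Let $G$, $w$ be the exact Minkowski-polarized quantities built from $z_c$, with $z_p=0$. Each entry of $\hat G$, $\hat w(c)$ and $\hat w(p)$ is a fixed linear combination of three $\hat D$ values, so these entries differ from their exact counterparts by $O(s^2E_s)$. Condition (C3) gives $\|\hat G^{-1}\|\leq C/s^2$. Since $E_s\to0$, the perturbation $\|G-\hat G\|/s^2=O(E_s)$ is small, so $G/s^2$ stays in the slightly enlarged compact class. Consequently $G$ is invertible with $\|G^{-1}\|\leq C/s^2$, has the same inertia, and has $-G_{11}$ bounded below as in \eqref{eq:time-pivot-bound}. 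The Lipschitz bound for $B$ gives $\|B(\hat G)-B(G)\|\leq CsE_s$, and the resolvent identity gives $\|\hat G^{-1}-G^{-1}\|\leq CE_s/s^2$. The vectors $w(c)-w(p)$ have size $O(s^2)$ because all points lie in a ball of radius $Cs$. Expanding the product of three factors, each error term is $O(s\cdot s^{-2}\cdot s^2E_s)=O(sE_s)$. Hence
\[
X_{p,s,\mathbf a}(c)=B(G)G^{-1}[w(c)-w(p)]+O(sE_s).
\]

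\emph{Step 3: identification of the Lorentz transformation.} By the flat identity derived before the proposition, $B(G)G^{-1}[w(c)-w(p)]=\Lambda z_c$ with $\Lambda=B(G)S^{-1}$ and $\Lambda^T\eta\Lambda=\eta$. Here $G=S^T\eta S$ is the exact Minkowski Gram matrix of the chart displacements, which is invertible by Step 2. The bound $\|\Lambda\|\leq\|B(G)\|\,\|S^{-1}\|$ requires $\|S^{-1}\|\leq C/s$. To obtain it, note that $\|S\|\leq Cs$ and that $S^{-1}=G^{-1}S^T\eta$, so $\|S^{-1}\|\leq (C/s^2)(Cs)$. This also gives $\|\Lambda^{-1}\|=\|SB(G)^{-1}\|\leq C$. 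Setting $\Lambda_{p,s,\mathbf a}=\Lambda$ and $e(c)=X(c)-\Lambda\chi_p(f(c))$ proves the claim. Uniformity holds because every constant depends only on the fixed thresholds $n_0,n_1,g_0,g_1,a,A_{\mathrm{cand}}$ and the geometry, not on the selected tuple.

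\emph{Main obstacle.} The delicate point is Step 1. Pairs close to null at scale $s$, and pairs involving a label near $p$ itself, must still receive the \emph{absolute} squared-interval error $O(s^2E_s)$. A relative proper-time error would be useless for these pairs. This is why the final clause of Theorem~\ref{thm:local-count-certificate} is used in place of \eqref{eq:certified-time-error}, and why (C2) is imposed on every anchor--active pair. I would check carefully that the upper count in (C2), together with localization, keeps $\ell\asymp s$ with constants independent of the tuple, and that the test-height requirements hold uniformly in $\ell\in[cs,Cs]$.
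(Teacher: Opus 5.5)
Your proposal is correct and follows essentially the same route as the paper's proof. Both obtain absolute squared-interval errors $O(s^2E_s)$ from Theorem~\ref{thm:local-count-certificate}, perturb $\hat G$, $\hat w$ and $B(\hat G)$ using $\|G^{-1}\|\leq C/s^2$, and identify $\Lambda=B(G)S^{-1}$ via $S^{-1}=G^{-1}S^T\eta$. One small correction to your ``main obstacle'' remark: the lower count in (C2) already forces every tested pair to have proper time at least $cs$, so near-null pairs never occur and the relative bound \eqref{eq:certified-time-error} would serve equally well; nor is the upper count in (C2) needed to keep $\ell\asymp s$.
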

\begin{proof}
	Take normal coordinates at $f(p)$ from a Lorentz-orthonormal frame whose time vector is $T_{f(p)}$, and write $z_c = \chi_p(f(c))$. By \eqref{eq:sample-profile-localization} all anchors and active labels are $O(s)$ from $f(p)$. The whole intervals, not just their endpoints, are confined to a controlled $O(s)$ chart by Lemma~\ref{lem:whole-interval-localization}. The thin-interval count discrepancy and the lower count test force each used interval to have volume at least $cs^d$. The wider-cone argument then forces its proper time to be at least $c's$, before applying the relative small-diamond volume expansion. Thus Theorem~\ref{thm:local-count-certificate} gives, uniformly for every tested pair,
	\[ \hat D(c,c') = \eta(z_c -z_{c'}, z_c -z_{c'}) + O(E_s s^2). \]
	The $O(s^4)$ normal-coordinate squared-interval error is included in $E_s s^2$. This is the point at which long thin intervals are excluded, without having assumed their coordinate diameter from their proper time.
	
	Let $S$ have columns $z_{a_i}-z_{a_0}$, and let $G=S^T \eta S$. Then $\|\hat G - G \| \leq C E_s s^2$ and $\| S \| \leq Cs$. The certificate and smallness of $E_s$ imply $\|G^{-1}\| \leq C/s^2$, so $S$ is invertible and 
	\[ S^{-1} = G^{-1} S^T \eta, \qquad \| S^{-1} \| \leq C/s \]
	The exact Lorentz matrix
	\[ \Lambda = B(G)S^{-1} \]
	satisfies the asserted bounds and $\Lambda^T \eta \Lambda = \eta$. For exact flat squared differences, polarization gives
	\[ w_i(c) = \eta(z_{a_i}-z_{a_0}, z_c - z_{a_0}), \qquad w(c) - w(p) = S^T \eta z_c. \]
	The scalar error yields $\| \hat w(c) -w(c) \| \leq CE_s s^2$, uniformly in $c$. Moreover
	\[ \|\hat G^{-1} - G^{-1}\| \leq CE_s /s^2, \qquad \|B(\hat G)-B(G)\| \leq CE_s s. \]
	Insert these into \eqref{eq:sample-coordinate-formula}; each error is bounded by $CE_s s$. The exact term is $B(G)G^{-1} S^T \eta z_c = B(G) S^{-1} z_c = \Lambda z_c$.
\end{proof}

\subsection{Existence of accepted anchor tuples}

\begin{proposition}[Existence of an accepted anchor tuple]
\label{prop:accepted-scaffolds}
	Assume the deterministic volume conditions of Definition~\ref{def:deterministic-volume-event}. The constants in the certification tests can be chosen in advance, with a fixed positive margin in the test inequalities, so that for all sufficiently large $\rho$ every base $p\in C_\rho[2\theta]$ has a passing tuple at both prescribed scales.
\end{proposition}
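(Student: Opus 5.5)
We will build an explicit continuum template at each base and then show that the sample contains a tuple close enough to it to pass the tests. Work in the reference normal chart $\chi_p$ at $f(p)$, which by the choice of $\theta$ and Corollary~\ref{cor:profile-localization} lies in the fixed compact $K_\theta\subset\mathrm{int}D$. Fix once and for all a flat template $\zeta_0\prec\zeta_1\prec\dots\prec\zeta_d$ in $\R^d$, independent of $\rho$ and $p$, with the following properties. The displacements $\zeta_i-\zeta_0$ are linearly independent. Every pair among the $\zeta_i$ is timelike separated with proper time in a fixed compact subinterval of $(0,\infty)$. The whole closed ball $|z|\leq 2C_{act}$ lies in the interior of the flat interval $I(\zeta_0,\zeta_1)$, with uniform timelike margin to both $\zeta_0$ and $\zeta_1$. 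Each $\zeta_i$ lies at Euclidean distance comparable to but larger than the profile radius of the inner set, so that it falls in the annulus $\mathcal B\setminus\mathcal A$.

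For example, take $\zeta_0=-\lambda e_0$ and $\zeta_1=\lambda e_0$ with $\lambda$ large. Take $\zeta_i=\lambda' e_0+\beta e_{i-1}$ for $i\ge2$, with $\lambda'>\lambda$ and $\beta$ small enough to keep the chain ordered. Then choose $a$ small relative to $\lambda$ and $A_{\mathrm{cand}}$ large, using \eqref{eq:sample-profile-localization}. The flat Gram matrix $G_0$ of this template has Lorentzian signature, and $|\mathrm{eig}(G_0)|$ lies in a compact subinterval of $(0,\infty)$. All flat interval volumes $\kappa_d\tau^d$ lie strictly inside a range $[2n_0,n_1/2]$ after fixing $n_0,n_1$. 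We then choose $g_0,g_1$ with a factor-two margin around the spectrum of $G_0$.

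Next we locate sample points near the scaled template. The continuum points $\chi_p^{-1}(s\zeta_i)$ lie in $K'\subset\mathrm{int}D$. By Corollary~\ref{cor:sample-net} there are sample labels $a_i$ within $h$-distance $u_\rho=Cs_{\min}=o(s^3)$ of them. Relative perturbations of order $u_\rho/s\to0$ preserve, with the margins built in, all of the following:
\begin{itemize}
\item strict chronological order among the anchors;
\item membership of every point of $\mathcal A_{p,s}$, which lies within $C_{act}s$, in the interior of $I(a_0,a_1)$, so that $a_0\prec c\prec a_1$ for every active $c$ and in particular for $c=p$;
\item comparability of each active point with $a_2,\dots,a_d$, which lie to its future;
\item membership of the anchors in $\mathcal B_{p,s}\setminus\mathcal A_{p,s}$, by the two-sided profile comparison \eqref{eq:sample-profile-localization} with $\eta_\rho=o(s)$.
\end{itemize}
This gives (C1). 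For (C2), every tested pair is timelike with proper time between $c s$ and $C s$. Lemma~\ref{lem:whole-interval-localization}, Corollary~\ref{cor:relative-interval-volume} and the local discrepancy \eqref{eq:local-set-discrepancy} give $N=\rho\kappa_d\tau^d[1+O(E_s)]$. The counts therefore lie in $[n_0\rho s^d,n_1\rho s^d]$ for large $\rho$, since $E_s\to0$ at both scales $s=r$ and $s=C_HH$ by Section~\ref{sec:reconstruction-scales}. For (C3), Theorem~\ref{thm:local-count-certificate} together with \eqref{eq:common-chart-squared-interval} gives $\hat D=\eta(\Delta z,\Delta z)+O(E_s s^2)$. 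Hence $\hat G/s^2=G_0+O(E_s+u_\rho/s)$, and eigenvalue continuity (Weyl) keeps the signature and the bounds $[g_0,g_1]$.

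The main obstacle is uniformity. All constants ($a$, $A_{\mathrm{cand}}$, $n_0$, $n_1$, $g_0$, $g_1$, the template) must be fixed before $\rho$, independently of the base, and they must be compatible with the profile-ball sandwich constants $c_P,C_P$. The active set is only known to lie between $h$-balls of radii $c_{in}s$ and $C_{act}s$, where $C_{act}$ grows with $a$. The anchors, by contrast, must sit beyond radius $C_{act}s$ in profile distance while remaining inside $\mathcal B_{p,s}$.

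I would resolve this in the following order. First fix $a$. This fixes $C_{act}=C(a/c_P)$. Then choose the template radius $\lambda\gg C_{act}$ in $h$-units, so that $d_C(p,a_i)\ge c_P\lambda s-o(s)>as$. Finally set $A_{\mathrm{cand}}>C_P C\lambda'$. The comparison between Euclidean norm in $\chi_p$ and $d_h$ is uniform over $K_\theta$ because the frames $e_p$ are adapted to $T$. Because every step uses only uniform estimates on $K_\theta$, the same $\rho$-threshold works for all bases simultaneously.
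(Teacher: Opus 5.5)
Your proof follows the same route as the paper's proof. You build an explicit flat anchor template at scale $s$ in the normal chart at $f(p)$. You place it far enough out that the two-sided profile comparison puts it in $\mathcal B_{p,s}\setminus\mathcal A_{p,s}$, fixing $a$, then the template size, then $A_{\mathrm{cand}}$, which is the paper's order for $\gamma$ and $A_{\mathrm{cand}}$. You then replace the ideal anchors by net samples at distance $u_\rho=o(s)$. Finally, (C2) follows from the interval-count estimate and (C3) from eigenvalue stability under the $O(E_s+u_\rho/s)$ perturbation, with margins fixed in advance.

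One concrete correction is needed. For $d\geq 3$, your explicit example does not have the properties you list. The anchors $\zeta_i=\lambda' e_0+\beta e_{i-1}$, $2\leq i\leq d$, all share the time coordinate $\lambda'$. Hence $\zeta_i-\zeta_j=\beta(e_{i-1}-e_{j-1})$ is spacelike, so these anchors are neither a chain nor pairwise comparable. This violates (C1), and with it (C2), for every $\beta>0$; taking $\beta$ small does not help.

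Staggering the times repairs this. For example, take $\zeta_i=(\lambda'+4(i-2)\beta)e_0+\beta e_{i-1}$ with $\lambda'-\lambda>\beta$, or use the paper's $(4i\sigma,\sigma e_i)$. Either choice gives pairwise timelike separation with a fixed margin and keeps the displacements from $\zeta_0$ linearly independent. With that change, the rest of your argument goes through as written.
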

\begin{proof}
	By \eqref{eq:sample-profile-localization} the active set lies inside a fixed coordinate ball $|z| \leq C_1 s$. Choose $\sigma=\gamma s$, with $\gamma$ sufficiently large compared with $C_1$ and the profile comparison constants. The constant $\gamma$ separates the ideal anchors from the active neighborhood. In the proof-only normal chart consider the ideal points
	\[ z_{a_0} = (-4d\sigma , 0), \qquad z_{a_i} = (4i\sigma , \sigma e_i) \quad (1\leq i < d), \qquad z_{a_d} = (4d\sigma, 0) \]
	They are chronologically ordered, separated from the active ball, and every anchor is timelike related to every point in that ball with a fixed timelike margin. The determinant magnitude of the column matrix of differences from $a_0$ is $8d\sigma^d$. All relevant squared intervals and Gram eigenvalues therefore satisfy the desired normalized upper and lower bounds with strict positive margins. Curved geometry changes these statements by $O(s^2)$ in normalized variables.
	
	The lower profile bound places every ideal anchor outside the active set if $\gamma$ was sufficiently large. After fixing $\gamma$, take $A_{\mathrm{cand}}$ sufficiently large that the upper profile bound places the anchors and their nearby sampled replacements in the candidate set $\mathcal B_{p,s}$. Both constants are fixed independently of density. The entire anchor configuration remains inside a fixed compact subset of $\mathrm{int}D$ for small $s$, since its radius is a fixed multiple of $s$.
	
	Apply Corollary~\ref{cor:sample-net} on a fixed compact subset of $\mathrm{int}D$ containing the ideal anchors. Its sample-net radius is $u_\rho=o(s)$ at the scales used here. Choose a sample within $u_\rho$ of each ideal anchor, as illustrated in Figure~\ref{fig:anchor-replacement}. Smallness of $u_\rho/s$ preserves chronological and geometric margins and distinctness. The local count discrepancy puts every actual anchor–active interval count in a fixed range $[n_0 \rho s^d, n_1 \rho s^d]$: choose $n_0$ below the uniform ideal minimum and $n_1$ above the maximum. The scalar approximation similarly puts the empirical Gram spectrum inside $[g_0, g_1]$ in absolute value, with the right inertia. All inequalities are uniform over the full active set, not just a selected finite subcollection. Hence this sampled tuple passes every test.
\end{proof}

\begin{center}
\begin{minipage}{\linewidth}
    \centering
    \captionsetup{hypcap=false}
    \includegraphics[width=.72\linewidth]{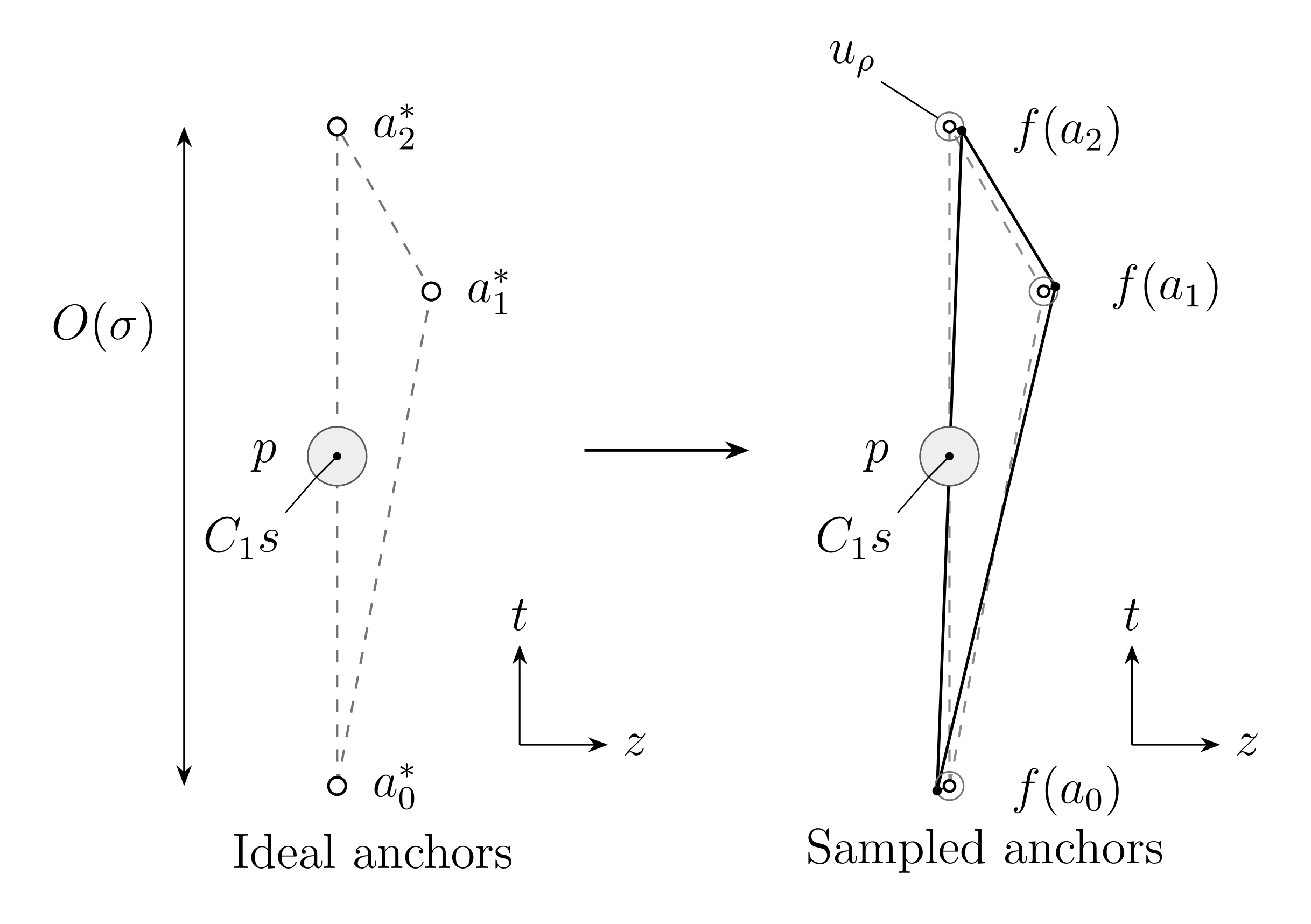}
    \captionof{figure}{Ideal anchors (hollow points) and their sampled replacements (filled points) in a schematic normal chart. The shaded coordinate ball has radius $C_1s$ and contains the active points. Anchor separations are of order $\sigma=\gamma s$, while each small circle has $h$-radius $u_\rho$ in the flat model. Dashed and solid segments join the two configurations, not null boundaries. Since $u_\rho=o(s)$, replacement preserves the strict causal separations and affine independence. The ideal locations are used only in the proof, not in the finite search.}
    \label{fig:anchor-replacement}
\end{minipage}
\end{center}

\subsection{Uniform conditioning on chart overlaps}

\begin{lemma}[Covariance bounds on chart overlaps]\label{lem:empirical-conditioning}
	Assume the deterministic volume conditions of Definition~\ref{def:deterministic-volume-event}, let $s=r$ or $s=C_HH$, and use coordinates from passing tuples. If bases $p,q\in C_\rho[2\theta]$ obey $d_C(p,q)\leq as/2$, their full common active set $\mathcal O = \mathcal A_{p,s} \cap \mathcal A_{q,s}$ is nonempty. Uniformly weighted coordinates of either chart on $\mathcal O$ have centered covariance
	\[ M_X = \frac{1}{|\mathcal O|} \sum_{c \in \mathcal O} (X(c)-\overline X)(X(c)-\overline X)^T \geq \kappa s^2 I \]
	for a uniform $\kappa >0$. Also $|X(c)|\leq Cs$ after harmless translation and $c\rho s^d \leq | \mathcal O | \leq C\rho s^d$. For the same-center fine/coarse fit, either coordinate map on $\mathcal A_{p,r}$ has covariance at least $\kappa r^2I$, provided $\mathcal A_{p,r}\subset\mathcal A_{p,C_HH}$. 
\end{lemma}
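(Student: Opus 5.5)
The plan is to reduce everything to the continuum normal chart $\chi_p$, prove the covariance bound there for the true positions, and then transfer it to the reconstructed coordinates using Proposition~\ref{prop:accepted-chart-soundness}.

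\emph{Step 1: a fixed ball inside the overlap.} Put $x_0=f(p)$. By \eqref{eq:sample-profile-localization} and $d_C(p,q)\leq as/2$, we get $d_h(f(p),f(q))\leq C(as/2+\eta_\rho)$. I would find a small ball $B_h(x_*,\beta s)$ whose sample points lie in both $\mathcal A_{p,s}$ and $\mathcal A_{q,s}$. The natural choice is to take $x_*$ on the segment between the two bases.

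The cleanest route goes directly through profile distances. The function $d_C(p,\cdot)$ is a sum of symmetric-difference counts, so it satisfies the triangle inequality. If $d_C(p,c)\leq as/4$, then $d_C(q,c)\leq 3as/4<as$, so $c\in\mathcal A_{q,s}$. Hence $\mathcal O\supset\{c:d_C(p,c)\leq as/4\}$, and this set contains $P\cap B_h(f(p),c'_{in}s)$ by the same argument that gave $c_{in}$.

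This yields the ball without locating $x_*$ at all. It also gives nonemptiness, via Corollary~\ref{cor:sample-net} with $u_\rho=o(s)$. The upper bound $|\mathcal O|\leq C\rho s^d$ follows from $\mathcal A_{p,s}\subset B_h(f(p),C_{act}s)$ and a regular-diamond cover counted with (V1). The lower bound follows from a diamond $E_{cs}(f(p))\subset B_h(f(p),c'_{in}s)$ together with (V1), using $\delta_{cs}\leq1/2$.

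\emph{Step 2: continuum covariance in $\chi_p$.} Let $z_c=\chi_p(f(c))$. The centered covariance is $\min_{\bar z}$-minimizing, so for any unit vector $u$,
\[
u^TM_zu=\min_m\frac{1}{|\mathcal O|}\sum_{c\in\mathcal O}\bigl(u\cdot z_c-m\bigr)^2
\geq\frac{1}{|\mathcal O|}\sum_{c\in\mathcal O'}\bigl(u\cdot z_c-m\bigr)^2,
\]
where $\mathcal O'$ is the inner-ball subset. In the inner ball of radius $\beta s$, the slab $\{|u\cdot z-m|<\beta s/4\}$ misses a region $R_{u,m}$ of volume at least $c s^d$. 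I can take $R_{u,m}$ to be a small coordinate box, or a regular diamond $E_{c s}(y)$ contained in the ball.

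Applying (V1) to $E_{cs}(y)$, that region contains at least $c\rho s^d$ samples, each contributing at least $(\beta s/4)^2$. Dividing by $|\mathcal O|\leq C\rho s^d$ gives $u^TM_zu\geq\kappa_0s^2$, uniformly in $u$ and $m$. Finitely many heights suffice because all diamonds have height comparable to $s$, which lies in $[s_{\min},s_0]$. The bound $|z_c|\leq Cs$ comes from the active-set localization.

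\emph{Step 3: transfer to reconstructed coordinates.} By Proposition~\ref{prop:accepted-chart-soundness}, $X(c)=\Lambda z_c+e(c)$ with $|e|\leq CE_ss=o(s)$ and $\|\Lambda^{-1}\|\leq C$. Then $M_X=\Lambda M_z\Lambda^T+O(s\cdot E_ss)$, and $\Lambda M_z\Lambda^T\geq\kappa_0\|\Lambda^{-1}\|^{-2}s^2I$. The error is $o(s^2)$, which gives $\kappa s^2$ for large $\rho$. The same holds for the $q$-chart, since the argument is symmetric with $q$'s normal chart and $d_h(f(p),f(q))=O(s)$. The bound $|X(c)|\leq Cs$ follows from $\|\Lambda\|\leq C$.

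For the same-center fine/coarse fit, I would apply the same computation to $\mathcal A_{p,r}$ itself, which contains $P\cap B_h(f(p),c_{in}r)$. For the coarse map, the error is $CE_HH=O(H^3)=O(r^{3/2})=o(r)$. The transferred covariance error is then $O(r\cdot r^{3/2})=o(r^2)$, so the bound at scale $r$ survives.

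\emph{Main obstacle.} The delicate point is Step 2's lower count in the off-slab region, uniformly over all directions $u$ and offsets $m$. I would handle it by a fixed finite family of candidate diamonds of height $cs$ inside the inner ball, arranged so that every slab of width $\beta s/2$ misses at least one of them.

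The coarse-error comparison is also delicate: one must check $E_HH=o(r)$, which holds because $E_H=O(H^2)=O(r)$.
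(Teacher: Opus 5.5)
Your proposal is correct and follows essentially the paper's argument. The shared ingredients are the pseudometric inclusion of the profile ball of radius $as/4$ in both active sets, (V1) counts making every $s$-sized diamond inside that ball carry a fixed fraction of $|\mathcal O|$, and transfer through the single bounded Lorentz matrix $\Lambda$ with $o(s)$ coordinate error (including your check that $E_{C_HH}H=O(H^3)=o(r)$ for the coarse map on $\mathcal A_{p,r}$). The remaining differences are packaging: you use the min-over-offsets (slab) form of the variance in $\chi_p$ and then perturb $\Lambda M_z\Lambda^T$, whereas the paper places $d+1$ clusters at simplex vertices and applies the pairwise-difference identity directly to the reconstructed coordinates. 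Your ``main obstacle'' is not actually an obstacle, since (V1) holds simultaneously for all centers and the off-slab diamond may therefore be chosen depending on $(u,m)$.
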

\begin{proof}
	Symmetric-difference distance is a pseudometric, so the profile ball of radius $as/4$ about $p$ is contained in both active sets. It contains a physical observer ball of radius $c_s s$, by \eqref{eq:sample-profile-localization}. Choose $d+1$ tiny disjoint fixed-axis diamonds of size $\epsilon s$ near the vertices of a Euclidean simplex in this ball. The simplex has all directional widths bounded below by a fixed multiple of $s$. First fix sufficiently small $\epsilon > 0$, independent of density. V1 then gives at least $c_3 \rho s^d$ labels in each diamond. A bounded number of test diamonds of size $C_3 s$ covers the entire common active set, giving at most $C_4 \rho s^d$ labels in total. Thus each of the $d+1$ clusters has a fixed positive fraction of the mass. Figure~\ref{fig:cluster-geometry} illustrates the cluster construction and its projected separation.
	
	The single bounded and boundedly invertible $\Lambda$ of Proposition~\ref{prop:accepted-chart-soundness} preserves the simplex's directional widths up to constants. Cluster diameter $O(\epsilon s)$ and coordinate error $O(E_s s) = o(s)$ do not remove those widths. For any Euclidean unit vector $\xi$, two of the clusters consequently have projected separation at least $c_4 s$. The exact identity
	\[ \frac{1}{N} \sum_i (v_i - \overline v)^2 = \frac{1}{2N^2} \sum_{i,j} (v_i-v_j)^2 \]
	applied to $v_i = \xi \cdot X(c_i)$ gives $\xi^T M_X \xi \geq \kappa s^2$. This proves the matrix inequality. The same-center statement uses the identical cluster construction inside the fine active ball and the coarse inclusion. 
\end{proof}

\begin{center}
\begin{minipage}{\linewidth}
    \centering
    \captionsetup{hypcap=false}
    \includegraphics[width=.80\linewidth]{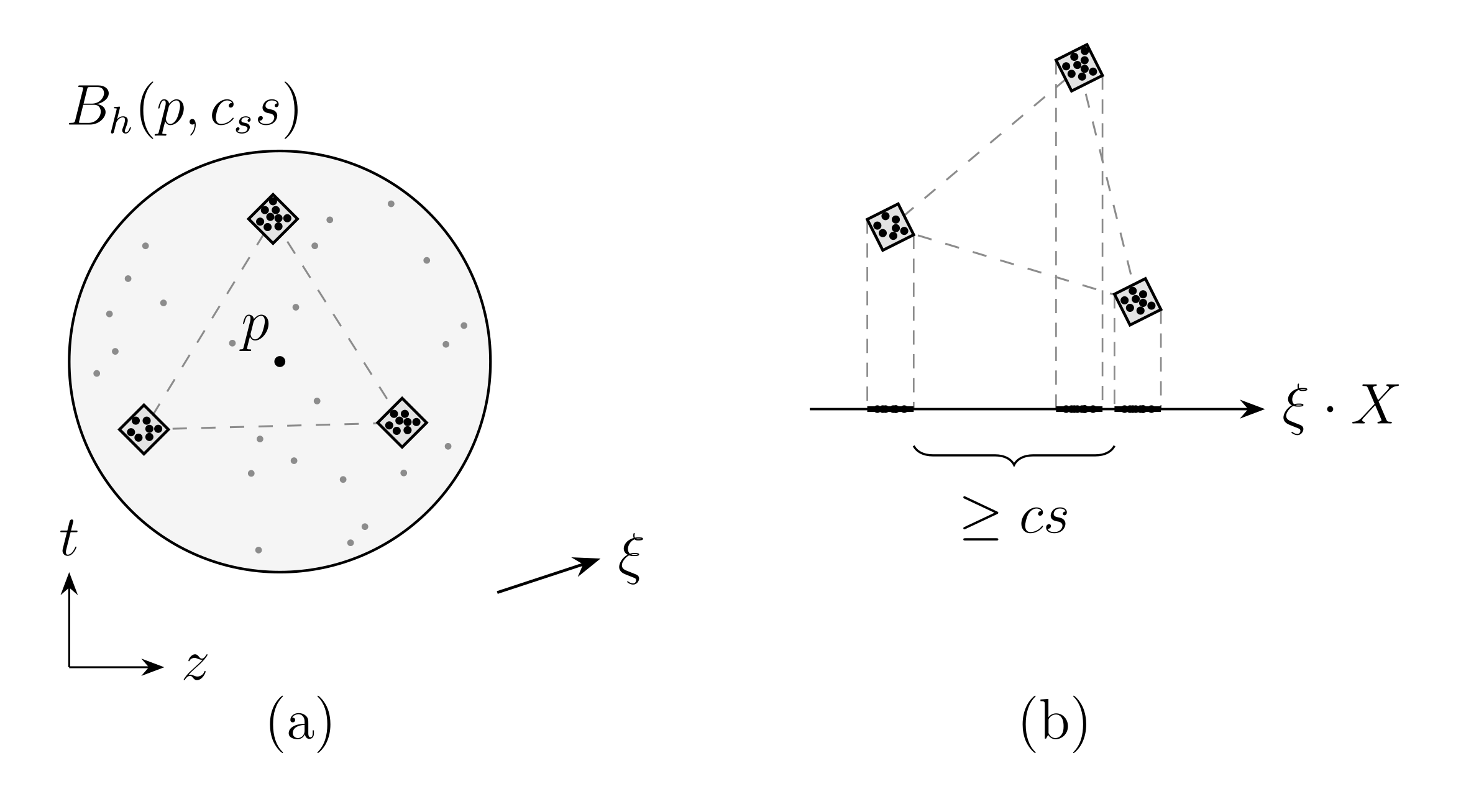}
    \captionof{figure}{Small diamonds around simplex vertices inside a ball contained in the overlap, shown in a flat two-dimensional model. Each cluster contains a fixed positive fraction of the full overlap sample, which may extend beyond the displayed ball. In (b), the clusters are projected onto a unit coordinate direction $\xi$. Two clusters have projected separation at least $cs$; the third may project between them. The separated pair may change with $\xi$, but some pair has this separation in every direction. Dashed triangle edges join centers, not null boundaries.}
    \label{fig:cluster-geometry}
\end{minipage}
\end{center}

For the same-center fit, $\mathcal A_{p,r} \subset \mathcal A_{p,C_HH}$ holds directly from the definitions when $r < C_H H$. For neighboring fine-chart bases, \eqref{eq:sample-profile-localization} shows that physical separation at most a sufficiently small fixed multiple of $r$ implies the profile overlap test, since $\eta_\rho =o(r)$. These domain and overlap estimates supply both overlap types needed below.

% ============================================================
% ============================================================
\section{From sampled normal coordinates to transport and curvature}
\label{sec:transport-curvature}
% ============================================================
The coordinates constructed in Section~\ref{sec:sampled-charts} allow us to compare tangent vectors at nearby sample points. The linear part of a transition between normal charts approximates parallel transport along the short geodesic joining their bases. An affine fit estimates this linear part from the coordinates of the shared sample labels. For the fit to be stable, those labels must be spread across independent coordinate directions, as expressed by a lower bound on their covariance.

We use the fine charts at scale $r$ to estimate transport along edges of length $O(r)$. The coordinate error $O(r^3)$ then contributes $O(r^2)$ to each fitted matrix. Combining these matrices around a loop gives a curvature estimate, but the accumulated fitting error must be small compared with the loop's area. A loop of side $H$ has $O(H/r)$ edges, so the accumulated error is $O(Hr)$, while curvature contributes at order $H^2$. The larger loop scale $H=\sqrt r$ makes their ratio tend to zero. The coarse chart locates the loop and expresses its curvature in one frame at the base.

We work throughout this section in the fixed geometric setting of Section~\ref{sec:geo}, under the uniform volume conditions~\eqref{eq:V1} and~\eqref{eq:V2}. We first establish the coverage, chart accuracy, and covariance bounds for the actual interval-count construction, then use them in the transport and loop estimates. All conclusions are uniform over accepted chart tuples and choices among equally close minimizing labels.

We use
\[
\eta=\mathrm{diag}(-1,1,\dots,1),\qquad R(U,V)=[\nabla_U,\nabla_V]-\nabla_{[U,V]}.
\]
Vectors are columns, and the earliest transport factor is on the right. These conventions give $I-H^2R_{ab}+O(H^3)$ for transport around a positively oriented small coordinate rectangle. We occasionally write a sample label $q$ for its realization $f(q)$ in geometric expressions.

\subsection{Geometric properties of the reconstructed charts}
\label{sec:reconstructed-chart-properties}
\label{sec:raw-chart-interfaces}
Use the scales $r=(L_\rho/\rho)^{1/(3d+4)}$ and $H=\sqrt r$ from Section~\ref{sec:reconstruction-scales}. A loop base is a label $p\in C_\rho[4\theta]$. At each chart base $q\in C_\rho[2\theta]$, Proposition~\ref{prop:accepted-scaffolds} gives passing tuples for the coordinate construction. Choose any passing tuple at each base and scale, and write
\[
I_q^r=\mathcal A_{q,r},\qquad I_p^H=\mathcal A_{p,C_HH}
\]
for the full fine and coarse active sets. The outer sets $\mathcal B_{q,s}$ contain the anchor candidates. Write $X_q^r$ and $X_p^H$ for the corresponding recentered coordinate maps. The fixed enlargement $C_H$ will be chosen below.

The compact sets
\[
K_0=\{x\in D:v(x)\geq7\theta/2\},\qquad
K_1=\{x\in D:v(x)\geq3\theta\},\qquad
K_2=\{x\in D:v(x)\geq\theta\}
\]
satisfy $K_0\subset\mathrm{int}K_1$ and $K_1\subset\mathrm{int}K_2$ by Lemma~\ref{lem:causal-profile-distance}. They are fixed independently of the sample and density. Smoothness of the fixed geometry on these compact neighborhoods supplies uniform bounds for normal-coordinate maps and their inverses, their derivatives used below, the connection and its first two coordinate derivatives, and short-geodesic endpoint maps. Choose the true normal charts on $h$-balls of a fixed sufficiently small radius about bases in $K_1$, with their closures in $\mathrm{int}K_2$. The finite maps $X_q^r$ and $X_p^H$ are evaluated only on their active label sets; the true charts also contain the smooth coordinate segments used in the proofs.

\begin{proposition}[Uniform neighborhoods and reconstructed-chart bounds]
\label{prop:count-chart-guarantees}
Fix $R_0>3$. The constants $C_H,C_*,R_1,K_{\mathrm{fr}},\delta_0,\kappa>0$ can be chosen independently of the density so that, for all sufficiently large $\rho$, the following statements hold at every loop base $p\in C_\rho[4\theta]$.

The point $f(p)$ lies in $K_0$. Every sample label in $B_h(f(p),C_*H)$ belongs to $C_\rho[2\theta]$ and therefore has a fine chart. This ball contains the coarse active domain and the fine-chart neighborhoods needed below. The shrinking active neighborhoods lie in $K_1$, while their true normal charts and the short geodesic constructions lie in $K_2$. Every continuum point in these fixed compact neighborhoods has a sample within $h$-distance
\[
u=O(\rho^{-1/d}L_\rho^{2/d}),\qquad
\frac{u}{r^3}=O(r^{4/d}L_\rho^{1/d})=o(1).
\]

Each reconstructed chart has one Lorentz-orthonormal comparison frame, constant across that chart, whose norm and inverse norm are at most $K_{\mathrm{fr}}$. At every fine-chart base $q$ in the above neighborhood,
\begin{equation}\label{eq:fine-chart-accuracy}
X_q^r(q)=0,\quad x_q^r(z)=(E_q^r)^{-1}\log_{f(q)}^g z,\quad |X_q^r(c)-x_q^r(f(c))|\leq Cr^3,
\end{equation}
for all $c\in I_q^r$, and
\begin{equation}\label{eq:fine-chart-domain}
\{c:f(c)\in B_h(f(q),c_0r)\}\subset I_q^r\subset\{c:f(c)\in B_h(f(q),C_0r)\}.
\end{equation}
The coarse chart satisfies
\begin{equation}\label{eq:coarse-chart-accuracy}
X_p^H(p)=0,\quad x_p^H(z)=(E_p^H)^{-1}\log_{f(p)}^g z,\quad |X_p^H(c)-x_p^H(f(c))|\leq CH^3,
\end{equation}
for all $c\in I_p^H$, and
\begin{equation}\label{eq:coarse-chart-domain}
\{c:|x_p^H(f(c))|\leq R_0H\}\subset I_p^H\subset\{c:|x_p^H(f(c))|\leq R_1H\}.
\end{equation}
These inclusions are taken within the fixed normal neighborhood. In particular, $I_p^r\subset I_p^H$.

If fine-chart bases $q,q'$ in the above neighborhood satisfy $d_h(f(q),f(q'))\leq\delta_0r$, their full common active set $J_{q,q'}=I_q^r\cap I_{q'}^r$ is nonempty and
\begin{equation}\label{eq:overlap-covariance}
\frac{1}{|J_{q,q'}|}\sum_{c\in J_{q,q'}}(X_q^r(c)-\overline X)(X_q^r(c)-\overline X)^T\succeq\kappa r^2I.
\end{equation}
The same lower bound holds for the covariance of $X_p^r$ on all of $I_p^r$, the fitting set for alignment with $X_p^H$. The covariance matrices are computable from the reconstructed coordinates.
\end{proposition}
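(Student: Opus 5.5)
The proof is essentially bookkeeping that assembles Corollaries~\ref{cor:profile-localization} and~\ref{cor:sample-net}, Proposition~\ref{prop:accepted-chart-soundness}, and Lemma~\ref{lem:empirical-conditioning} at the scales of Section~\ref{sec:reconstruction-scales}. The one genuinely new point is the choice of constants in the right order. That order will be: first the geometric radii tied to $\theta$; then $R_0$; then $C_H$; then $R_1,\delta_0,\kappa$; and finally the density threshold.

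\emph{Localization step.} For a loop base $p\in C_\rho[4\theta]$, the individual count estimates of Corollary~\ref{cor:profile-localization} give $|v(f(p))-\rho^{-1}\min|J_C^\pm(p)||\leq\eta_\rho=o(1)$, so $v(f(p))\geq 4\theta-\eta_\rho\geq7\theta/2$ and $f(p)\in K_0$. Because $v$ is Lipschitz (Lemma~\ref{lem:causal-profile-distance}), there is a fixed radius $\varrho$ such that $B_h(x,\varrho)\subset\{v\geq 3\theta+\theta/4\}\subset K_1$ for $x\in K_0$. Applying the count estimates again, every label $c$ with $f(c)$ in that ball has empirical past and future volumes at least $3\theta-2\eta_\rho\geq2\theta$, so $c\in C_\rho[2\theta]$. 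Since $C_*H\to0$, the ball $B_h(f(p),C_*H)$ lies well inside this region for large $\rho$. The inclusion $K_1\subset\mathrm{int}K_2$ gives a uniform margin for the true normal charts and short geodesics. The net radius $u$ comes directly from Corollary~\ref{cor:sample-net} on $K_2$, and $u/r^3=o(1)$ is the computation $s_{\min}\ll r^3$ already recorded.

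\emph{Chart accuracy and domains.} Proposition~\ref{prop:accepted-chart-soundness} with $s=r$ and $E_r=O(r^2)$ gives $X_q^r(c)=\Lambda\chi_q(f(c))+O(r^3)$. Setting $E_q^r=e_q\Lambda^{-1}$ produces a single Lorentz frame with $\|E\|+\|E^{-1}\|\leq K_{\mathrm{fr}}$, and this yields \eqref{eq:fine-chart-accuracy}. The domain inclusions \eqref{eq:fine-chart-domain} are the $c_{in},C_{act}$ inclusions derived from \eqref{eq:sample-profile-localization} with $s=r$, using $\eta_\rho=o(r)$. The coarse estimate \eqref{eq:coarse-chart-accuracy} follows the same way with $s=C_HH$ and $E_H=O(H^2)$.

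\emph{Coarse domain.} This is the step needing care, because \eqref{eq:coarse-chart-domain} is phrased in normal coordinates rather than in $h$-balls. Normal coordinates and $d_h$ are uniformly bi-Lipschitz on $K_2$, with constants that depend on the frame bound $K_{\mathrm{fr}}$ through the Lorentz factor. I therefore convert $|x_p^H|\leq R_0H$ into $d_h\leq C R_0 H$ using that frame bound. The frame bound is uniform over passing tuples at every scale, so $K_{\mathrm{fr}}$ does not depend on $C_H$; this independence is what prevents the argument from becoming circular. I then choose $C_H$ so that $c_{in}C_H\geq C R_0$, which gives the left inclusion. The right inclusion follows with $R_1=C\,C_{act}C_H$. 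Finally, $I_p^r\subset I_p^H$ holds directly from the definitions, since $ar<aC_HH$. The constant $C_*$ is taken larger than $C_{out}C_H$ plus the fine-neighborhood radii.

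\emph{Overlaps.} I choose $\delta_0$ small so that, by \eqref{eq:sample-profile-localization}, $d_h\leq\delta_0 r$ implies $d_C(q,q')\leq C_P\delta_0r+C\eta_\rho\leq ar/2$. Lemma~\ref{lem:empirical-conditioning} then gives a nonempty overlap and \eqref{eq:overlap-covariance}, as well as the same-center bound on $I_p^r$. Computability holds because the covariance is an explicit finite sum of the $X$ values.
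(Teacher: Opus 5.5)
Your proposal is correct and follows essentially the same route as the paper's proof. It places $f(p)$ in $K_0$ via the individual count estimates and the Lipschitz bound on $v$, and it sets $E=e\Lambda^{-1}$ from Proposition~\ref{prop:accepted-chart-soundness}. It also reads off the domains from the profile-ball inclusions, with $C_H$ chosen after the $s$-independent frame bound, and picks $\delta_0$ from the profile upper bound to feed Lemma~\ref{lem:empirical-conditioning}.
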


\begin{proof}
The individual past- and future-count estimates in Corollary~\ref{cor:profile-localization} give $v(f(p))\geq4\theta-\eta_\rho\geq7\theta/2$ for large $\rho$. For every fixed $C_*$, the Lipschitz bound for $v$ places $B_h(f(p),C_*H)$ in $K_1$ when $H$ is small. The same count estimates then give empirical past and future volumes at least $2\theta$ at every sample label in this ball. This proves chart availability on any such fixed-multiple neighborhood; we choose $C_*$ after fixing the coarse domain below.

For a passing tuple $\mathbf a$ at scale $s=r$ or $s=C_HH$, Proposition~\ref{prop:accepted-chart-soundness} gives
\[
X_{q,s,\mathbf a}(c)=\Lambda_{q,s,\mathbf a}\chi_q(f(c))+O(E_ss),\qquad c\in\mathcal A_{q,s},
\]
where the single matrix $\Lambda=\Lambda_{q,s,\mathbf a}$ satisfies
\[
\Lambda^T\eta\Lambda=\eta,\qquad \|\Lambda\|+\|\Lambda^{-1}\|\leq C.
\]
The bound follows from the fixed normalized anchor and Gram-matrix thresholds and is uniform over accepted tuples at sufficiently small chart heights. With $\chi_q=e_q^{-1}\log_{f(q)}^g$, set $E_{q,s,\mathbf a}=e_q\Lambda_{q,s,\mathbf a}^{-1}$. These are the frames denoted by $E_q^r$ and $E_p^H$. Since the reference frames $e_q$ are adapted to $T$, the displayed bound gives a fixed $K_{\mathrm{fr}}$. Exact centering follows from the coordinate formula. The scales of Section~\ref{sec:reconstruction-scales} give $E_r=O(r^2)$ and $E_{C_HH}=O(H^2)$ for each fixed $C_H$, proving the two coordinate-error bounds.

The profile-ball inclusions of Section~\ref{sec:sampled-charts} give~\eqref{eq:fine-chart-domain}. The frame bounds compare the true coarse coordinate norm with $h$-distance uniformly. Choose $C_H$ large enough that the physical inner ball of radius $c_{in}C_HH$ contains the true coordinate ball of radius $R_0H$. The outer profile-ball bound then gives a finite $R_1$, proving~\eqref{eq:coarse-chart-domain}. These choices use the uniform frame bound above, which is independent of the fixed enlargement once the density is sufficiently large. Choose $C_*$ larger than the physical outer radius of this coarse domain divided by $H$, with room for its fine-chart neighborhoods. Large density puts all shrinking domains in $K_1$ and all the fixed-radius smooth comparison neighborhoods in $K_2$. Finally, $r<C_HH$ gives $I_p^r\subset I_p^H$ directly from their profile-distance definitions.

The profile upper bound gives $d_C(q,q')\leq C_Pd_h(f(q),f(q'))+C\eta_\rho$. Since $\eta_\rho=o(r)$, choose $\delta_0$ sufficiently small that $d_h(f(q),f(q'))\leq\delta_0r$ implies $d_C(q,q')\leq ar/2$ for large $\rho$. Lemma~\ref{lem:empirical-conditioning} then proves~\eqref{eq:overlap-covariance}. Its same-center conclusion gives the fine-source covariance for alignment on the entire set $I_p^r$. The coarse coordinate error also satisfies $H^3=o(r)$, so the same lemma applies to the coarse coordinates on this set. Every fit retains all labels of its stated fitting set.

Corollary~\ref{cor:sample-net}, applied on the fixed compact neighborhoods, gives the stated $u=O(s_{\min})$. The scale identity for $u/r^3$ above shows that $u=o(r^3)$. The profile error $\eta_\rho=o(r)$ controls neighborhood and overlap selection, while the sharper coordinate errors control transport and alignment.
\end{proof}

A short edge below is the unique local geodesic joining two sample points at observer distance at most $\delta_0r$. The preceding proposition makes their charts and stable fitting sets available. We now estimate transport on these edges and combine the estimates around the larger loops.

\subsection{Affine fits estimate short-edge parallel transport}
Let $N$ be the number of labels used in the fit and let bars denote averages over these labels. Fit an affine map $Y_c\approx BX_c+b$, allowing both the matrix $B$ and the translation vector $b$ to vary. For a fixed $B$, the least-squares translation is $b=\overline Y-B\overline X$, which gives the centered formula
\begin{equation}\label{eq:affine-fit}
\mathcal B(X,Y) = \left[ \frac{1}{N} \sum_c (Y_c - \overline Y)(X_c - \overline X)^T \right] \left[ \frac{1}{N} \sum_c (X_c - \overline X)(X_c -\overline X)^T \right]^{-1}.
\end{equation}

Figure~\ref{fig:overlap-affine-fit} illustrates this affine fit on the shared labels $J_{q,q'}$ of two neighboring fine charts.
\begin{center}
\begin{minipage}{\linewidth}
\centering
\captionsetup{hypcap=false}
\includegraphics[width=.94\linewidth]{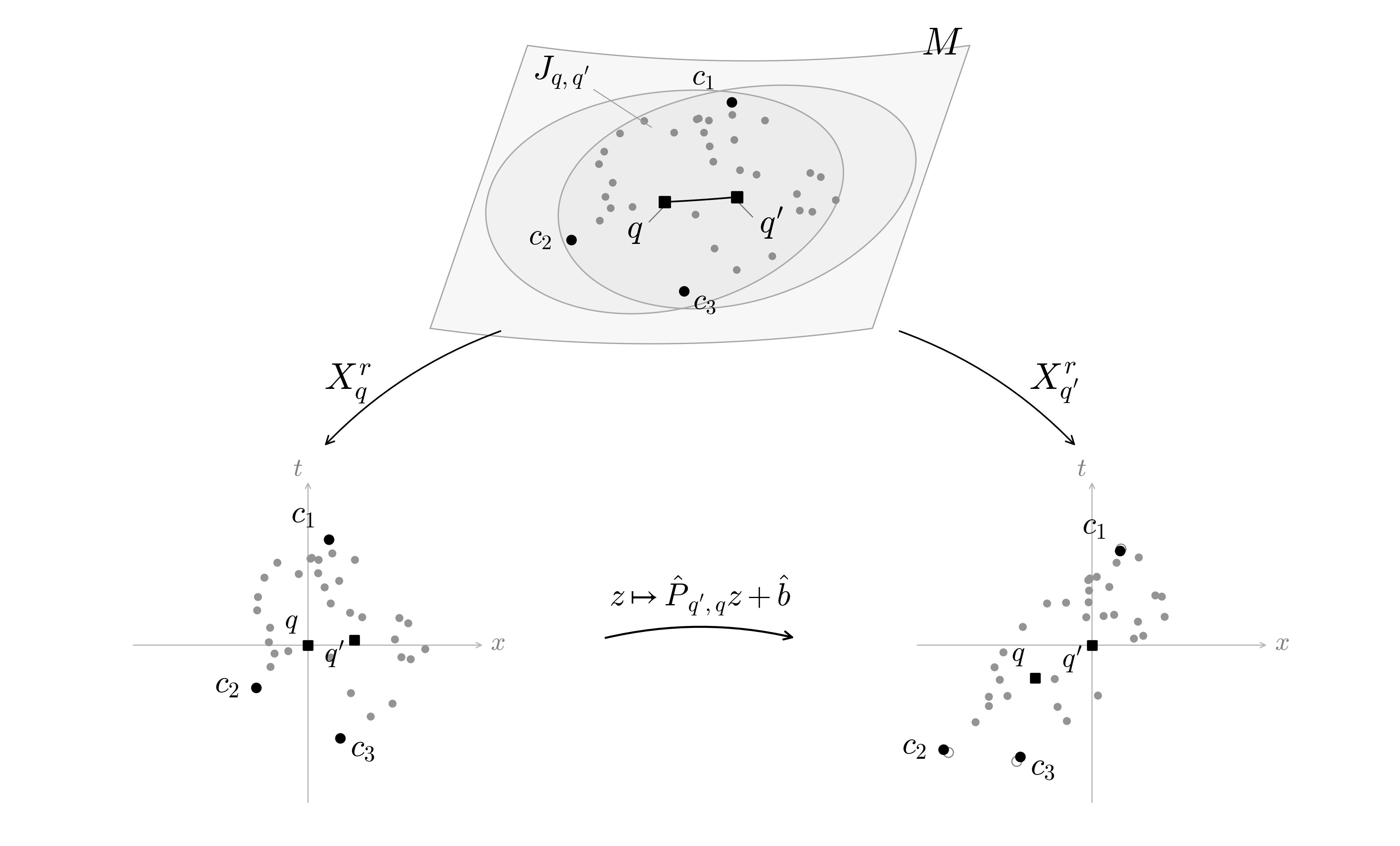}
\captionof{figure}{Top: a schematic neighborhood of $M$, with chart bases $q,q'$ and their shared sample labels $J_{q,q'}$. Squares identify the bases, and the segment joining them is the short geodesic whose transport is estimated. The neighborhood boundaries are schematic. The curved arrows assign coordinates to the same labels in the two fine charts and apply the fitted affine map $z\mapsto\hat P_{q',q}z+\hat b$, where $z=X_q^r(c)$. The matrix $\hat P_{q',q}$ is its linear part and $\hat b$ its translation; both are fitted using all shared labels. The three labels $c_1,c_2,c_3$ identify corresponding points in all three views. Open markers in the target view show their fitted locations. The linear part estimates transport in the reconstructed frames.}
\label{fig:overlap-affine-fit}
\end{minipage}
\end{center}

\begin{lemma}[Stability of affine least-squares fits]
\label{lem:affine-fit-stability}
	If $Y_c = AX_c + b +e_c$, $\sup_c |e_c| \leq \zeta r, |X_c -\overline X| \leq Cr$ and the source covariance is at least $\kappa r^2 I$, then $\| \mathcal B(X,Y)-A \| \leq C' \zeta$. 
\end{lemma}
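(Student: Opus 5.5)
The plan is to substitute the model $Y_c=AX_c+b+e_c$ directly into the centered formula \eqref{eq:affine-fit} and show that the exact part reproduces $A$, while the remainder is a product of one small factor and one bounded factor. Write $\tilde X_c=X_c-\overline X$ and $\tilde e_c=e_c-\overline e$. Centering removes the translation $b$ exactly, so $Y_c-\overline Y=A\tilde X_c+\tilde e_c$. Writing $M_X=N^{-1}\sum_c\tilde X_c\tilde X_c^T$ for the source covariance, the cross-covariance becomes
\[
\frac1N\sum_c (Y_c-\overline Y)\tilde X_c^T = A M_X + \frac1N\sum_c \tilde e_c\tilde X_c^T.
\]
Multiplying on the right by $M_X^{-1}$ gives the decomposition
\[
\mathcal B(X,Y)-A=\Bigl[\frac1N\sum_c\tilde e_c\tilde X_c^T\Bigr]M_X^{-1}.
\]
The covariance hypothesis $M_X\succeq\kappa r^2I$ makes $M_X$ invertible, so the fit is well defined.

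Next we bound the two factors separately. Since $|\overline e|\le\sup_c|e_c|\le\zeta r$, we have $|\tilde e_c|\le2\zeta r$. By hypothesis $|\tilde X_c|\le Cr$. Each rank-one matrix $\tilde e_c\tilde X_c^T$ therefore has operator norm at most $2C\zeta r^2$, and so does their average. For the second factor, $M_X$ is symmetric positive definite with smallest eigenvalue at least $\kappa r^2$, so $\|M_X^{-1}\|\le(\kappa r^2)^{-1}$. Multiplying the two bounds gives
\[
\|\mathcal B(X,Y)-A\|\le \frac{2C}{\kappa}\,\zeta,
\]
so $C'=2C/\kappa$ works. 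This constant depends only on the spread and covariance constants, not on $N$, $r$, or the particular labels.

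There is no real obstacle here; the lemma is a deterministic least-squares perturbation bound. The one point to watch is the powers of $r$. The factor $r^2$ in the cross term must cancel exactly against the $r^{-2}$ from the covariance inverse, leaving a dimensionless bound. This cancellation is what lets the coordinate errors $\zeta r$ with $\zeta=O(r^2)$, from \eqref{eq:fine-chart-accuracy}, produce $O(r^2)$ errors in the fitted transport matrices later in the section. We will also remark that the bound uses no averaging of $e_c$ and no independence, so it holds uniformly over any fitting set, including sample-selected ones.
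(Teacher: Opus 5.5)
Your proof is correct and is essentially the paper's own argument: the paper uses the same exact centered identity $\mathcal B(X,Y)-A=\bigl[N^{-1}\sum_c(e_c-\overline e)(X_c-\overline X)^T\bigr]M_X^{-1}$ and the same two bounds, an $O(\zeta r^2)$ cross-covariance and $\|M_X^{-1}\|\le\kappa^{-1}r^{-2}$. Your factor $2$ is harmless; it can be dropped because $\sum_c(X_c-\overline X)=0$ lets you replace $e_c-\overline e$ by $e_c$.
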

\begin{proof}
	For source covariance $M_X$, the exact identity is
	\[ \mathcal B(X,Y)-A = \left[ \frac{1}{N} \sum_c (e_c-\overline e)(X_c-\overline X)^T \right] M_X^{-1}. \]
	The numerator is $O(\zeta r^2)$ and $\|M_X^{-1} \| \leq \kappa^{-1}r^{-2}$. 
\end{proof}

Figure~\ref{fig:parallel-transport} distinguishes geometric parallel transport from its matrix in the fine-chart frames.
\begin{center}
\begin{minipage}{\linewidth}
    \centering
    \captionsetup{hypcap=false}
    \includegraphics[width=.72\linewidth]{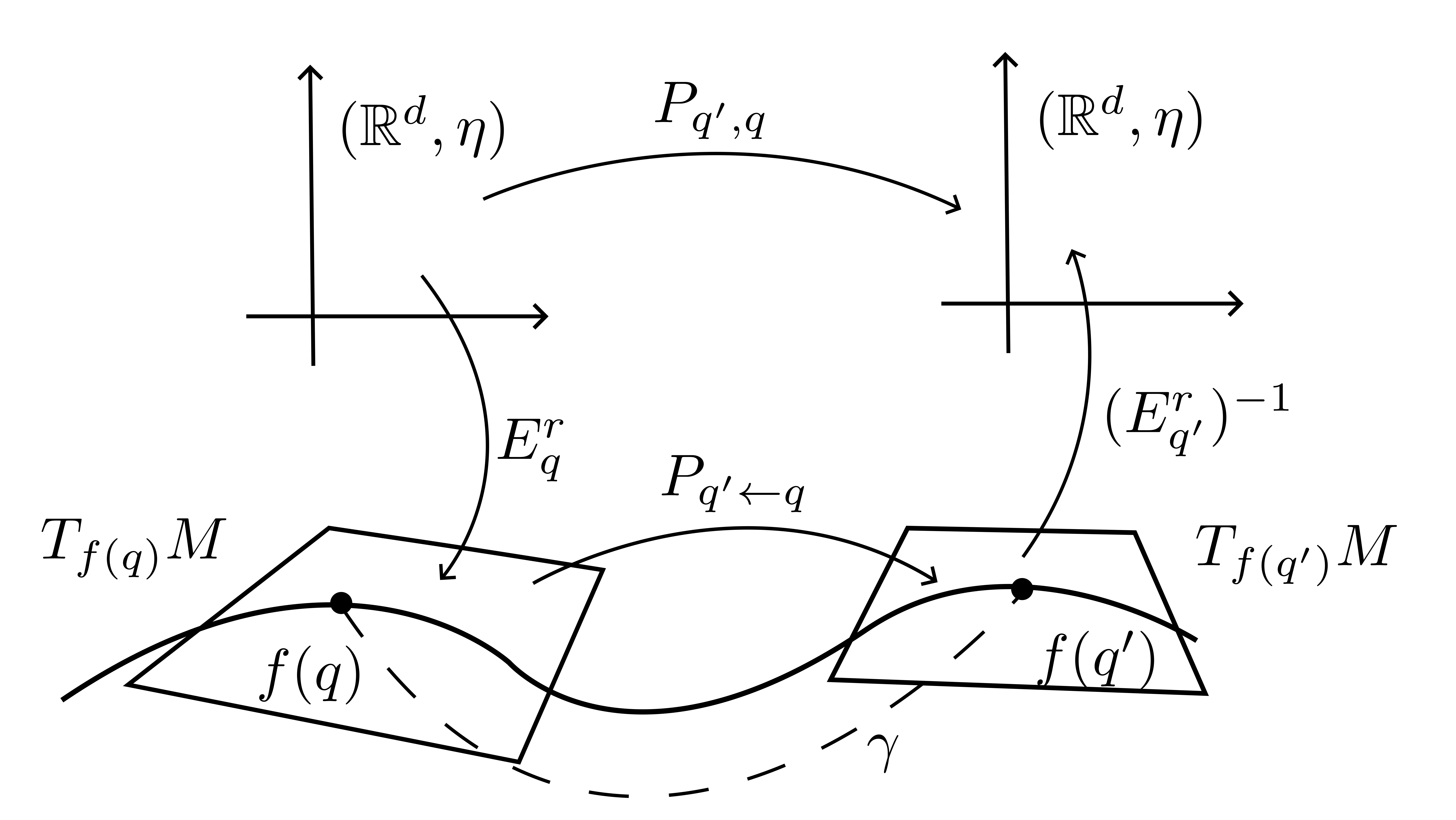}
    \captionof{figure}{Parallel transport $P_{q'\leftarrow q}$ along the short geodesic $\gamma$ maps $T_{f(q)}M$ to $T_{f(q')}M$. The fine-chart frames identify these tangent spaces with $(\mathbb R^d,\eta)$, giving the coordinate matrix $P_{q',q}=(E_{q'}^r)^{-1}P_{q'\leftarrow q}E_q^r$.}
    \label{fig:parallel-transport}
\end{minipage}
\end{center}

\begin{lemma}[Transport accuracy along one edge]
\label{lem:one-edge}
	Let eligible $q,q'$ satisfy $d_h(f(q),f(q'))\leq \delta_0 r$. Let $P_{q' \leftarrow q}$ be Levi-Civita parallel transport along their unique short geodesic and put
	\[ P_{q',q} = (E_{q'}^r)^{-1} P_{q' \leftarrow q} E_q^r. \]
	Fit $\hat P_{q',q} = \mathcal B(X_q^r, X_{q'}^r)$ using all labels in $J_{q,q'}$. Then
	\[ \| \hat P_{q',q}-P_{q',q} \| \leq Cr^2. \]
	The short geodesic may have any causal type. 
\end{lemma}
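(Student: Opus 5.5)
The plan is to compare the two reconstructed fine charts through the exact continuum transition map between the true normal charts $x_q^r$ and $x_{q'}^r$, and then apply Lemma~\ref{lem:affine-fit-stability}. Write $\Phi=x_{q'}^r\circ(x_q^r)^{-1}$. This transition map is defined on the true coordinate image of the overlap, which has radius $O(r)$ about the origin. By \eqref{eq:fine-chart-accuracy}, for every $c\in J_{q,q'}$,
\[
X_{q'}^r(c)=\Phi\bigl(X_q^r(c)\bigr)+O(r^3),
\]
after using a uniform Lipschitz bound for $\Phi$ to absorb the source coordinate error. The task is therefore to show that $\Phi$ is affine up to $O(r^3)$ on this $O(r)$ region, with linear part $P_{q',q}+O(r^2)$.

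\textbf{Step 1: structure of the transition map.} In normal coordinates about $f(q)$, the Christoffel symbols vanish at the origin and are $O(|z|)$ nearby. The map $z\mapsto\log_{f(q')}^g\exp_{f(q)}^g(E_q^r z)$ is smooth, and the geometry on $K_2$ supplies uniform bounds on its first three derivatives. Let $z_0=x_q^r(f(q'))$, so $|z_0|=O(r)$. Taylor expanding about $z_0$ gives
\[
\Phi(z)=\Phi(z_0)+D\Phi(z_0)(z-z_0)+O(|z-z_0|^2\,\|D^2\Phi\|),
\]
and this expansion must be organized so that the remainder is $O(r^3)$.

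\textbf{Step 2: the second derivative and the linear part.} First I would show that $D^2\Phi=O(r)$ on the $O(r)$ region. This holds because $D^2\Phi$ is built from Christoffel symbols of the two normal charts, each of which vanishes at its own base and has bounded derivative. The standard identity for a coordinate change, $\partial^2 y=\Gamma_x\,\partial y-\Gamma_y(\partial y,\partial y)$, expresses this directly. It follows that the quadratic remainder is $O(r^2\cdot r)=O(r^3)$, so $\Phi$ is affine up to $O(r^3)$ with linear part $D\Phi(z_0)$.

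Next I would identify $D\Phi(z_0)$ with $P_{q',q}$ up to $O(r^2)$. On the true geodesic $\gamma$ from $q$ to $q'$, compare $d\exp$ and $d\log$ along the geodesic with parallel transport. The derivative of $\exp$ at a vector $v$ equals parallel transport composed with $I+O(|R||v|^2)$, by the Jacobi field expansion, and $d\log_{f(q')}$ at $f(q')$ is the identity. Together these give
\[
D\Phi(z_0)=(E_{q'}^r)^{-1}P_{q'\leftarrow q}E_q^r+O(r^2),
\]
for a geodesic of any causal type, since only the smooth maps and not the causal character enter. The frame bound $K_{\mathrm{fr}}$ keeps the conjugation uniform.

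\textbf{Step 3: applying the fit stability lemma.} Take $A=D\Phi(z_0)$ and collect the affine part into $b$. The residuals are then $e_c=O(r^3)$, that is, $\zeta=O(r^2)$. Proposition~\ref{prop:count-chart-guarantees} supplies the nonempty overlap, the bound $|X_c-\overline X|\le Cr$ from \eqref{eq:fine-chart-domain}, and the covariance bound \eqref{eq:overlap-covariance}. Lemma~\ref{lem:affine-fit-stability} then gives $\|\hat P_{q',q}-A\|\le Cr^2$, and the triangle inequality with Step 2 completes the argument.

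\textbf{Main obstacle.} I expect the hardest step to be Step 2, namely showing uniformly that $D^2\Phi=O(r)$ rather than $O(1)$ and that $D\Phi(z_0)$ matches transport to second order. I would establish this using the Jacobi field or Taylor expansion of $\exp$ in normal coordinates, $\exp$-differential $=P(I+O(r^2))$, with constants uniform over $K_2$.
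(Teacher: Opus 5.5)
Your proposal is correct and follows essentially the paper's proof. You use the transition map between the true fine normal charts, the Christoffel transformation rule to get $D^2\Phi=O(r)$, a Jacobi-field comparison of $d\exp$ with parallel transport to identify the linear part up to $O(r^2)$ for any causal type, and Lemma~\ref{lem:affine-fit-stability} with the overlap covariance bound. The only difference is cosmetic: you linearize at $z_0=x_q^r(f(q'))$ via $d\exp^g_{f(q)}$, while the paper linearizes at the origin via $d\exp^g_{f(q')}$; since $D^2\Phi=O(r)$ and $|z_0|=O(r)$, the two linear parts agree to $O(r^2)$.
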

\begin{proof}
	Write $F=x_{q'}^r\circ(x_q^r)^{-1}$ and $v=\log^g_{f(q')}f(q)$, so $|v|_h=O(r)$. Expressing the Jacobi equation in a frame parallel along the radial geodesic from $f(q')$ to $f(q)$ gives
	\[ (d\exp^g_{f(q')})_v=P_{q\leftarrow q'}(I+O(r^2)). \]
	Its curvature coefficient contains two factors of the radial velocity, so this positive-norm estimate holds for any causal type of $v$. For the corresponding normal-coordinate Jacobian expansion and its relation to a parallel frame, see Poisson--Pound--Vega~\cite[Sec.~8, Eqs.~(8.6)--(8.7)]{Poisson2011}. Since $d((x_q^r)^{-1})_0=E_q^r$, inversion and the chain rule give
	\[ DF(0)=(E_{q'}^r)^{-1}(d\exp^g_{f(q')})_v^{-1}E_q^r=P_{q',q}+O(r^2). \]

	For $c\in J_{q,q'}$, put $x=x_q^r(f(c))$. The fine-domain bound \eqref{eq:fine-chart-domain} gives $|x|=O(r)$. The radial segment $(x_q^r)^{-1}(tx)$, $0\leq t\leq1$, stays $O(r)$ from both bases, hence lies in both true normal-chart neighborhoods for small $r$. Normal-coordinate Christoffel symbols vanish at their bases, so their uniform derivative bounds give $\Gamma_q=O(r)$ and $\Gamma_{q'}=O(r)$ along the segment. The connection transformation rule, with bounded $DF$, gives $D^2F=O(r)$. Taylor's formula therefore yields
	\[ F(x)=F(0)+P_{q',q}x+O(r^3). \]

	By \eqref{eq:fine-chart-accuracy} and boundedness of $P_{q',q}$, both coordinate errors are absorbed into a residual $e_c$ satisfying
	\[ X_{q'}^r(c)=P_{q',q}X_q^r(c)+F(0)+e_c,\qquad |e_c|\leq Cr^3. \]
	The fine-domain bound also gives $|X_q^r(c)-\overline X|\leq Cr$. Apply Lemma~\ref{lem:affine-fit-stability} with $\zeta=Cr^2$ and the covariance bound \eqref{eq:overlap-covariance}.
\end{proof}

The discrepancy between the coordinate-transition derivative and parallel transport contributes to the $O(r^2)$ edge error. Since curvature contributes at order $r^2$ around a loop of side $r$, we use the larger loop scale $H$ to separate the curvature contribution from the accumulated edge error. 

Figure~\ref{fig:coarse-fine-scales} shows how the two chart scales enter the construction: coarse coordinates select the loop, while overlapping fine charts supply the affine fits for its short edges.
\begin{center}
\begin{minipage}{\linewidth}
    \centering
    \captionsetup{hypcap=false}
    \includegraphics[width=.78\linewidth]{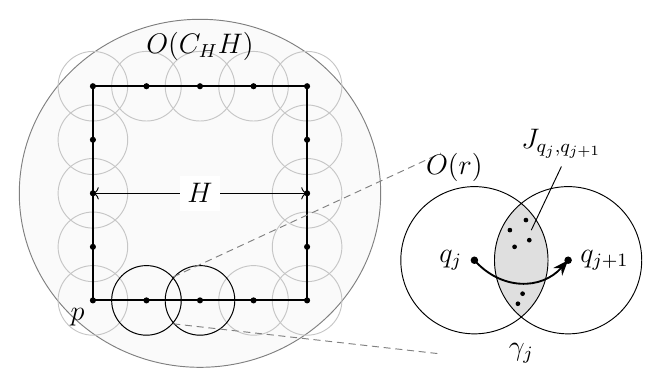}
    \captionof{figure}{A loop of side $H$ inside a coarse neighborhood of scale $C_HH$, with fine neighborhoods of scale $r$ along its edges. The enlargement shows the common sample labels $J_{q_j,q_{j+1}}$ used to fit transport between consecutive fine-chart bases. The curve $\gamma_j$ is the corresponding short geodesic. The neighborhoods and sample locations are schematic.}
    \label{fig:coarse-fine-scales}
\end{minipage}
\end{center}

\subsection{Selecting coarse rectangles from reconstructed coordinates}\label{sec:coarse-rectangle}
Choose a fixed $\ell_0 > 0$ sufficiently small in terms of the coordinate bounds and $\delta_0$, and set
\[ k = \lceil \frac{H}{\ell_0 r} \rceil, \qquad m=4k \]
For $0\leq a < b \leq d-1$, divide each side of the oriented rectangle
\begin{equation}\label{eq:coarse-rectangle}
0 \rightarrow He_a \rightarrow H(e_a + e_b) \rightarrow He_b \rightarrow 0
\end{equation}
into $k$ equal pieces; denote consecutive vertices by $a_0, \dots ,a_m = a_0$. At each nonclosing grid vertex choose a label $q_j \in I_p^H$ minimizing $|X_p^H(q_j)-a_j|$. If several labels attain the same minimum distance, choose one using a fixed ordering of the labels. Prescribe $q_0 = q_m = p$. No realization coordinates are used. 

Figure~\ref{fig:coordinate-plane-loops} illustrates the three coordinate-plane loops in $2+1$ dimensions. Each begins and ends at $p$, so their holonomies are expressed in the same coarse base frame. We write $\gamma_{ab}$ when distinguishing the loop for the coordinate pair $a,b$.
\begin{center}
\begin{minipage}{\linewidth}
    \centering
    \captionsetup{hypcap=false}
    \includegraphics[width=.52\linewidth]{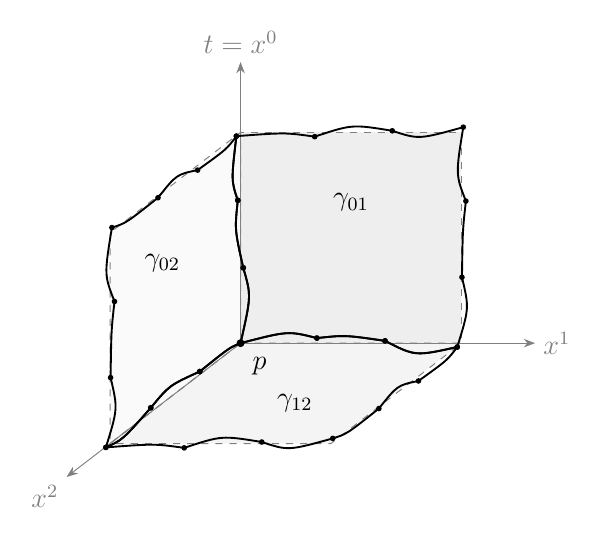}
    \captionof{figure}{The loops $\gamma_{01},\gamma_{02},\gamma_{12}$ based at $p$ in $2+1$ dimensions. The first two use time--space templates; the third uses a spatial template. Dashed rectangles indicate the ideal coordinate templates, and solid curves join the selected sample points by short geodesics. Displacements and bending are exaggerated.}
    \label{fig:coordinate-plane-loops}
\end{minipage}
\end{center}

\begin{lemma}[Loop-vertex accuracy and containment]
\label{lem:loop-vertex-domain}
	In true coarse coordinates, the selected vertices $b_j = x_p^H(f(q_j))$ satisfy
	\begin{equation}\label{eq:loop-vertex-error}
b_j = a_j + d_j, \qquad |d_j|\leq \epsilon := C(H^3+u).
\end{equation}
	For sufficiently small $r$, all vertices are eligible fine-chart centers, consecutive vertices have observer distance at most $\delta_0 r$, and their geodesic polygon has length $O(H)$ and stays in the buffered coarse normal neighborhood.
\end{lemma}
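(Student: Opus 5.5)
The plan is to establish the vertex bound first and then derive the remaining assertions from it.

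\emph{Vertex bound.} The template rectangle \eqref{eq:coarse-rectangle} has all vertices $a_j$ with $|a_j|\leq\sqrt2 H<R_0H$. Since $R_0>3$, a coordinate ball of radius $H$ around every $a_j$ stays well inside the inner coarse domain of \eqref{eq:coarse-chart-domain}. The continuum point $z_j=(x_p^H)^{-1}(a_j)$ lies in the fixed compact neighborhood of Proposition~\ref{prop:count-chart-guarantees}. The sample net therefore supplies a label $c$ with $d_h(f(c),z_j)\leq u$. Because the frame $E_p^H$ and the normal chart have uniformly bounded derivatives, this gives $|x_p^H(f(c))-a_j|\leq Cu$. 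For small $H$ we have $Cu\ll H$, so $|x_p^H(f(c))|\leq R_0H$ and, by \eqref{eq:coarse-chart-domain}, $c\in I_p^H$. Combining with \eqref{eq:coarse-chart-accuracy} yields $|X_p^H(c)-a_j|\leq Cu+CH^3$. The minimizer $q_j$ satisfies the same bound. This step is the main subtlety: the minimization is over reconstructed coordinates, but the candidate set $I_p^H$ is defined by profile distances, so we need the inner inclusion to guarantee that the good label $c$ is a competitor.

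Applying \eqref{eq:coarse-chart-accuracy} once more, now to $q_j$, gives
\[
|b_j-a_j|\leq |x_p^H(f(q_j))-X_p^H(q_j)|+|X_p^H(q_j)-a_j|\leq C(H^3+u).
\]
At the closing vertices, $q_0=q_m=p$, and $b_0=0=a_0$ exactly, since $f(p)$ is the chart base.

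\emph{Consecutive spacing.} Consecutive template vertices differ by $H/k\leq\ell_0 r$. It follows that
\[
|b_{j+1}-b_j|\leq\ell_0r+2\epsilon .
\]
The scale choices give $\epsilon=o(r)$, because $H^3=r^{3/2}$ and $u=o(r^3)$. The inverse normal chart with bounded frame is uniformly Lipschitz into $d_h$, with some constant $C_L$. Choosing $\ell_0$ small enough that $C_L\ell_0<\delta_0/2$ then gives $d_h(f(q_j),f(q_{j+1}))\leq\delta_0r$ for small $r$.

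\emph{Eligibility and containment.} Every $f(q_j)$ satisfies $|b_j|\leq 2H$, so it lies within $h$-distance $CH\leq C_*H$ of $f(p)$, provided $C_*$ was chosen as in Proposition~\ref{prop:count-chart-guarantees}. That proposition then places $q_j$ in $C_\rho[2\theta]$, so each vertex is a fine-chart center.

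Each short geodesic between consecutive vertices has $h$-length $O(r)$. It therefore stays within $O(r)$ of its endpoints, and so lies in the coarse normal neighborhood of radius $O(H)$ buffered inside $K_1\subset K_2$. The polygon has $m=4k=O(H/r)$ edges, each of length $O(r)$, so its total length is $O(H)$.
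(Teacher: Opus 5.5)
Your proposal is correct and follows the paper's proof essentially step for step. It uses the same ingredients: a net candidate placed in $I_p^H$ by the inner margin of the coarse domain, minimality plus a second application of the coarse accuracy bound, $\epsilon=o(r)$ together with a small fixed $\ell_0$ for the edge spacing, and $O(H)$ localization for eligibility and containment. The only cosmetic difference is that you bound the polygon length by counting $O(H/r)$ edges of $h$-length $O(r)$, rather than bounding the coordinate polygon by $CH+C(H/r)\epsilon$ and then appealing to short-geodesic estimates.
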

\begin{proof}
	The net supplies a candidate within true coordinate distance $Cu$ of every $a_j$. The margin in \eqref{eq:coarse-chart-domain} places it in $I_p^H$. Its reconstructed distance is at most $C(u+H^3)$, so a minimizing label does at least as well. A second application of \eqref{eq:coarse-chart-accuracy} proves \eqref{eq:loop-vertex-error}. Here $\epsilon =o(r)$, since $H=\sqrt{r}, u=o(r^3)$. 
	
	Ideal edge lengths are comparable to $\ell_0 r$. Uniform coordinate and observer-distance comparisons, with $\ell_0$ fixed sufficiently small, give the required actual edge distances for small $r$. Distinct grid vertices choose distinct labels, except for closure. The coordinate polygon has length at most $CH+C(H/r)\epsilon = O(H)$; short-geodesic estimates give the same bound for its geodesic replacement. All vertices lie $O(H)$ from $p$. For sufficiently small $r$, fine charts are therefore available at these vertices, and the short geodesic edges remain within the coarse normal neighborhood. Geodesics are only comparison objects, not computed by the estimator.
\end{proof}

Figure~\ref{fig:geodesic-polygon} shows the rectangular template and the selected geodesic loop. Write $\gamma_j$ for the short geodesic from $f(q_j)$ to $f(q_{j+1})$, so that $\gamma$ is their concatenation.
\begin{center}
\begin{minipage}{\linewidth}
    \centering
    \captionsetup{hypcap=false}
    \includegraphics[width=.52\linewidth]{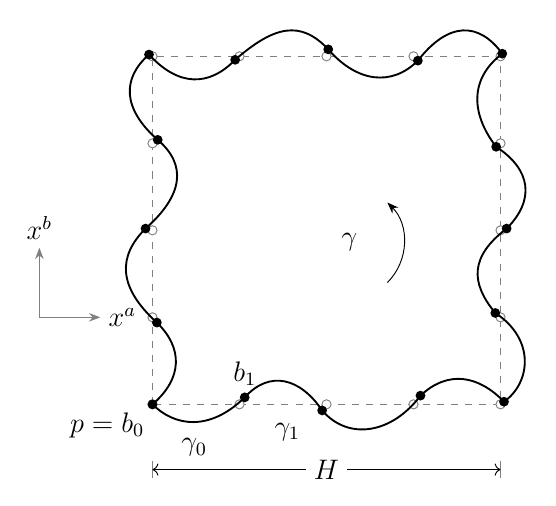}
    \captionof{figure}{A loop in the coarse $x^a,x^b$ coordinate plane. The dashed rectangle of side $H$ is the ideal template; open circles mark its grid vertices. Filled circles mark selected sample points, joined by the short geodesic segments $\gamma_j$. The interior arrow indicates the orientation of $\gamma$. Vertex displacements and geodesic bending are exaggerated. The drawing shows a planar schematic; in spacetime the selected points and geodesic segments may also leave the template plane.}
    \label{fig:geodesic-polygon}
\end{minipage}
\end{center}

\begin{lemma}[Signed area under bounded vertex perturbations]
\label{lem:signed-loop-area}
	For the selected geodesic polygon $\gamma$, let
	\[ A^{ij}(\gamma) = \frac12 \oint_{\gamma} (x^i dx^j -x^j dx^i) \]
	in the true coarse coordinates. Then
	\[ A(\gamma)=H^2 e_a \wedge e_b + O\left( H\epsilon + \frac{H}{r}\epsilon^2 + H^2r^2 \right). \]
	The estimate holds for every choice of vertex perturbations satisfying the stated bound, including polygons that intersect themselves.
\end{lemma}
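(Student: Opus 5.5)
We split $A(\gamma)$ into the signed area of the straight coordinate polygon through the selected vertices $b_0,\dots,b_m$ plus the corrections from replacing each straight coordinate segment by the short geodesic $\gamma_j$. For the straight polygon, the signed-area form is exact and additive:
\[
\frac12\oint_{\mathrm{poly}}(x^i dx^j-x^jdx^i)=\frac12\sum_{j=0}^{m-1} b_j\wedge b_{j+1},
\]
with $b_m=b_0=0$. The template vertices $a_j$ give exactly $H^2e_a\wedge e_b$, since the grid polygon is the rectangle \eqref{eq:coarse-rectangle} traversed with the subdivision points inserted. Collinear insertions do not change the area.

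Next we expand $b_j\wedge b_{j+1}=(a_j+d_j)\wedge(a_{j+1}+d_{j+1})$ using \eqref{eq:loop-vertex-error}. The quadratic terms $d_j\wedge d_{j+1}$ contribute at most $m\epsilon^2=O((H/r)\epsilon^2)$. The linear terms are where a naive bound would give $mH\epsilon=O(H^2\epsilon/r)$, which is too large. So we sum by parts: $\sum_j(a_j\wedge d_{j+1}+d_j\wedge a_{j+1})=\sum_j d_j\wedge(a_{j+1}-a_{j-1})$, up to boundary terms at the fixed base where $d_0=d_m=0$. Each $|a_{j+1}-a_{j-1}|\leq 2H/k=O(r)$, so this sum is $O(m\cdot\epsilon\cdot r)=O(H\epsilon)$. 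This reindexing is the key step, and it is the reason the estimate holds regardless of self-intersection: only the algebraic sum formula is used, never any region enclosed by the polygon.

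Finally we bound the geodesic-versus-chord correction on each edge. In true coarse normal coordinates the Christoffel symbols vanish at $f(p)$ and are $O(H)$ on the buffered neighborhood supplied by Lemma~\ref{lem:loop-vertex-domain}. Parametrizing $\gamma_j$ on $[0,1]$ with coordinate velocity of size $O(r)$, the geodesic equation gives $\ddot x=O(Hr^2)$. Hence $\gamma_j$ deviates from its chord by $\delta x=O(Hr^2)$, and its velocity deviates by $O(Hr^2)$. The area difference on one edge is $\frac12\int(\delta x\wedge\dot x+x\wedge\delta\dot x)$, up to second-order terms, and since $|x|=O(H)$ it is at most $O(H\cdot Hr^2)=O(H^2r^2)$.

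The hardest point is that summing this per-edge bound over the $m=O(H/r)$ edges gives only $O(H^3r)$, not $O(H^2r^2)$. One fix is to integrate by parts within each edge: $\frac12\int(x\wedge\delta\dot x-\delta x\wedge\dot x)$ equals the boundary terms, which vanish because $\delta x=0$ at the endpoints, plus $-\int\delta x\wedge\dot x$. That leaves a bound of $O(Hr^2\cdot r)=O(Hr^3)$ per edge, and summing over edges gives $O(H^2r^2)$. This matches the stated term, so this is the route we take. Combining the three contributions yields the claimed bound.
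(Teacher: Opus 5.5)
Your proposal is correct and follows essentially the same route as the paper: the exact formula $\frac12\sum_j b_j\wedge b_{j+1}$ for the straight polygon, the index shift that turns the linear terms into $\sum_j d_j\wedge(a_{j+1}-a_{j-1})$ with $O(r)$ coefficients, and an integration by parts on each edge that replaces the $O(H)$ position by the $O(r)$ chord velocity, giving $O(Hr^3)$ per edge and $O(H^2r^2)$ in total. One small slip: since $\delta x$ vanishes at the edge endpoints, the correct identity is $\frac12\int(x\wedge\delta\dot x+\delta x\wedge\dot x)=\int\delta x\wedge\dot x$ (the version you wrote with a minus sign is identically zero), but this does not affect the $O(Hr^3)$ bound.
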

\begin{proof}
	For straight-coordinate polygons through $a_j$ and $b_j$, their area difference is 
	\[ \frac12 \sum_j (a_j \wedge d_{j+1} + d_j \wedge a_{j+1} + d_j \wedge d_{j+1}). \]
	Since the polygon is closed, indices are understood modulo $m$. Shifting the index in the first sum and using antisymmetry of the wedge product gives
	\[ \sum_j (a_j\wedge d_{j+1} +d_j \wedge a_{j+1}) = \sum_j(a_{j-1} -a_{j+1})\wedge d_j .\]
	Each ideal difference is $O(r)$, including at corners. Summing over $O(H/r)$ edges gives $O(H\epsilon)$ for the linear terms and $O((H/r)\epsilon^2)$ for the quadratic terms. 
	
	Write a geodesic chord as $l_j +w_j$, where $l_j$ is its straight segment, parametrized on $[0,1]$. The coarse normal chart has $\Gamma =O(H)$; the chord speed is $O(r)$. Its equation and $w_j(0)=w_j(1)=0$ imply
	\[ \| w_j \|_\infty + \| w'_j \|_\infty \leq CHr^2. \]
	In the area difference, integration by parts changes $\int l_j \wedge w'_j$. Only the $O(r)$ increment remains, rather than the $O(H)$ absolute edge position. The area change per edge is $O(Hr^3)$. Summing gives $O(H^2r^2)$. 
\end{proof}

\subsection{Holonomy and curvature of the comparison loop}
Define curvature in the coarse base frame by
\[ \mathcal R_{ab}^H(p) = (E_p^H)^{-1} R_{f(p)}(E_p^H e_a, E_p^H e_b) E_p^H. \]
Put $P_j = (E_{q_{j+1}}^r)^{-1}P_{q_{j+1}\leftarrow q_j} E_{q_j}^r$, using the same fine chart when $q_m = q_0 =p$, and set
\[ P_{loop} = P_{m-1} \dots P_0, \qquad Q=(E_p^H)^{-1} E_p^r. \]
The matrix $Q$ converts fine base coordinates to coarse base coordinates in the continuum comparison.

\begin{lemma}[Curvature expansion of loop holonomy]
\label{lem:true-loop-holonomy}
	One has
	\begin{equation}\label{eq:true-loop-holonomy}
	Q P_{loop} Q^{-1} = I-H^2 \mathcal R^H_{ab}(p)+O\left( H^3 + H\epsilon + \frac{H}{r}\epsilon^2 + H^2r^2 \right).
	\end{equation}
	In particular $\| P_{loop} - I \| \leq CH^2$.
\end{lemma}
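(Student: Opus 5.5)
The plan is to work entirely in the continuum, using the coarse true normal chart $x_p^H$ at $f(p)$ with frame $E_p^H$. We compare the product of fine-frame transports with the single parallel transport around the geodesic polygon $\gamma$, expressed in the coarse base frame.

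\emph{Step one: the intermediate frames cancel.} Each $P_j$ is $(E_{q_{j+1}}^r)^{-1}P_{q_{j+1}\leftarrow q_j}E_{q_j}^r$. In the ordered product, every intermediate $E_{q_j}^r(E_{q_j}^r)^{-1}$ telescopes. We get exactly $P_{loop}=(E_p^r)^{-1}\mathrm{Hol}(\gamma)E_p^r$, where $\mathrm{Hol}(\gamma)$ is Levi-Civita transport around the closed piecewise-geodesic loop at $f(p)$. This step needs $q_0=q_m=p$ with the same fine chart at both ends. Conjugating by $Q=(E_p^H)^{-1}E_p^r$ then gives $QP_{loop}Q^{-1}=(E_p^H)^{-1}\mathrm{Hol}(\gamma)E_p^H$. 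This identity is exact and involves no sample error, so what remains is a purely geometric holonomy expansion in the coarse normal chart.

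\emph{Step two: expand the holonomy in coarse normal coordinates.} We write the holonomy matrix as the ordered exponential of $-\oint\Gamma_k\,dx^k$ along $\gamma$, in the frame $\partial_i$ of the chart $x_p^H$. At the base, $\partial_i=E_p^He_i$ because $d(x_p^H)^{-1}_0=E_p^H$. In normal coordinates $\Gamma^i_{jk}(x)=-\tfrac13(R^i{}_{jkl}+R^i{}_{kjl})x^l+O(|x|^2)$. Lemma~\ref{lem:loop-vertex-domain} gives length $O(H)$ and containment in the chart, so $\oint|\Gamma|\,|dx|=O(H^2)$. Hence the Dyson series equals $I-\oint\Gamma_k dx^k+O(H^4)$. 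The linear term in $x$ gives $\oint x^l dx^k$. Because $\gamma$ is closed, the symmetric part of $\oint x^ldx^k$ vanishes, leaving the antisymmetric signed area $A^{lk}(\gamma)$. The standard Stokes computation turns $-\tfrac13(R^i{}_{jkl}+R^i{}_{kjl})A^{lk}$ into $-\tfrac12R^i{}_{jlk}A^{lk}$, which is $-R(\cdot,\cdot)$ contracted with the area bivector, up to the paper's sign convention. The quadratic remainder in $\Gamma$ contributes $O(H^2)\cdot O(H)=O(H^3)$.

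\emph{Step three: insert the area estimate.} By Lemma~\ref{lem:signed-loop-area}, $A(\gamma)=H^2e_a\wedge e_b+O(H\epsilon+(H/r)\epsilon^2+H^2r^2)$. The curvature coefficients are bounded, so the linear term becomes $-H^2\mathcal R^H_{ab}(p)$ plus the stated errors. We must check the sign against the convention $I-H^2R_{ab}$ announced at the start of Section~\ref{sec:transport-curvature}, with the earliest factor on the right. Finally, $\epsilon=o(r)$ and $H=\sqrt r$ make every error term $O(H^2)$. Because $\|Q\|,\|Q^{-1}\|\le K_{\mathrm{fr}}^2$ by Proposition~\ref{prop:count-chart-guarantees}, this yields $\|P_{loop}-I\|\le CH^2$.

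The main obstacle is step two. The Christoffel expansion must be controlled uniformly along a loop that is only a geodesic polygon and may leave the $(a,b)$ plane and self-intersect. The $O(|x|^2)$ remainder of $\Gamma$ integrates to $O(H^3)$ using only the length bound. The delicate point is making sure the ordered-exponential remainder and the non-planar components of $A(\gamma)$ are fully absorbed by Lemma~\ref{lem:signed-loop-area}. I would handle this by writing the first-order term exactly as a contraction with the full antisymmetric $A^{lk}$, with no planar assumption, so that the area lemma supplies all the deviation bounds.
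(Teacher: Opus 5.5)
Your proposal is correct and is essentially the paper's proof. The fine frames telescope so that $QP_{loop}Q^{-1}$ is the holonomy of $\gamma$ in the coarse base frame. The ordered exponential is expanded to first order using $\Gamma(0)=0$ and $\int_\gamma\|\Gamma_i\,dx^i\|=O(H^2)$. The linear term reduces to the full antisymmetric area $A^{lk}(\gamma)$, and Lemma~\ref{lem:signed-loop-area} is then inserted. The paper simply uses $(\partial_j\Gamma_i-\partial_i\Gamma_j)(0)=\mathcal R^H_{ji}(p)$ in place of your explicit $-\tfrac13$ normal-coordinate formula.

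One correction on the sign you deferred. The quantity $-\tfrac13(R^i{}_{jkl}+R^i{}_{kjl})A^{lk}$ equals $+\tfrac12R^i{}_{jlk}A^{lk}$, not $-\tfrac12R^i{}_{jlk}A^{lk}$, and it is $\oint\Gamma^i_{jk}\,dx^k$ itself. So it is the holonomy's linear term $-\oint\Gamma^i_{jk}\,dx^k=-\tfrac12R^i{}_{jlk}A^{lk}$ that becomes $-H^2(\mathcal R^H_{ab})^i{}_j$ when $A^{ab}=-A^{ba}=H^2$. This confirms the sign in \eqref{eq:true-loop-holonomy}.
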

\begin{proof}
	For connection matrices in the coarse coordinate basis,
	\[ \Gamma_i(0) = 0,\qquad \Gamma_i(x)=\partial_j \Gamma_i(0) x^j + O(|x|^2). \]
	The loop radius and length are $O(H)$, so $\int_{\gamma} \| \Gamma_i dx^i \| \leq CH^2$. The equation $\dot V = -\Gamma_i \dot x^i V$, expanded in iterated integrals, gives
	\[ \mathrm{Hol}(\gamma) = I - \oint_\gamma \Gamma_i dx^i + O(H^4) \]
	For a closed loop $\oint x^j dx^i = A^{ji}$. Antisymmetrization and $\Gamma(0)=0$ yield
	\begin{align*}
		\oint_\gamma \Gamma_i dx^i &= \frac12 \sum_{j,i} (\partial_j \Gamma_i - \partial_i \Gamma_j)(0) A^{ji}(\gamma) + O(H^3) \\
		&= \frac12 \sum_{j,i} \mathcal R_{ji}^H (p) A^{ji}(\gamma) +O(H^3).
	\end{align*}
	The factor $1/2$ accounts for both antisymmetric index orders. For the rectangle \eqref{eq:coarse-rectangle}, the nonzero leading area components are $A^{ab}=H^2, A^{ba} = -H^2$. Lemma~\ref{lem:signed-loop-area} gives \eqref{eq:true-loop-holonomy}, with the displayed minus sign. This is coarse-frame holonomy, namely $QP_{loop} Q^{-1}$, not $P_{loop}$. The identity bound also follows directly from the transport equation. 
\end{proof}

\subsection{Accumulation of transport errors around a loop}
Keep the independent fine-overlap affine fits $\hat P_j$, and set $\hat P_{loop} = \hat P_{m-1} \dots \hat P_0$.
\begin{lemma}[Accumulation of edge errors]
\label{lem:loop-accumulation}
	The one-edge estimates imply
	\begin{equation}\label{eq:loop-accumulation}
	\| \hat P_{loop} - P_{loop} \| \leq CH r.
	\end{equation}
\end{lemma}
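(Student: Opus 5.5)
The plan is the standard telescoping estimate for products of nearly equal matrices, combined with uniform bounds on partial products. Write $\Delta_j=\hat P_j-P_j$. By Lemma~\ref{lem:loop-vertex-domain}, consecutive vertices are eligible fine-chart centers at observer distance at most $\delta_0 r$, so Lemma~\ref{lem:one-edge} applies to every edge and gives $\|\Delta_j\|\leq Cr^2$, uniformly in $j$. The edge count is $m=4k=O(H/r)$. The telescoping identity is
\[
\hat P_{loop}-P_{loop}=\sum_{j=0}^{m-1}\hat P_{m-1}\cdots\hat P_{j+1}\,\Delta_j\,P_{j-1}\cdots P_0 ,
\]
so it suffices to bound every true partial product $P_{j-1}\cdots P_0$ and every fitted partial product $\hat P_{m-1}\cdots\hat P_{j+1}$ by a constant independent of $\rho$. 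Summing then gives at most $m\cdot C\cdot Cr^2\cdot C=O(Hr)$.

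The true partial products need care because the fine frames $E^r_{q_j}$ vary from chart to chart. However, the intermediate frames cancel:
\[
P_{j-1}\cdots P_0=(E^r_{q_j})^{-1}P_{q_j\leftarrow q_{j-1}}\cdots P_{q_1\leftarrow q_0}E^r_{q_0}.
\]
The middle factor is Levi-Civita transport along a geodesic polygon of length $O(H)$ that stays in the buffered coarse normal neighborhood (Lemma~\ref{lem:loop-vertex-domain}). In the coarse coordinate basis, the transport equation $\dot V=-\Gamma_i\dot x^iV$ with $\Gamma=O(H)$ and length $O(H)$ shows, via Gr\"onwall, that this transport equals $I+O(H^2)$ in coordinate components. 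It is therefore uniformly bounded with respect to $h$, since the coarse coordinate basis is uniformly comparable to $h$-orthonormal frames. The outer factors are bounded by $K_{\mathrm{fr}}$ from Proposition~\ref{prop:count-chart-guarantees}. Hence $\|P_{j-1}\cdots P_0\|\leq C_1$ for all $j$, and the same argument gives $\|P_{m-1}\cdots P_{j+1}\|\leq C_1$ for trailing partial products.

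The fitted partial products are the main obstacle, because a naive bound $\prod\|\hat P_i\|$ grows exponentially in $m$. I would handle them by induction with the same telescoping applied to the shorter product:
\[
\|\hat P_{m-1}\cdots\hat P_{j+1}-P_{m-1}\cdots P_{j+1}\|\leq\sum_{i>j}\|\hat P_{m-1}\cdots\hat P_{i+1}\|\,Cr^2\,C_1 .
\]
Let $S_j=\max_{i\geq j}\|\hat P_{m-1}\cdots\hat P_{i+1}\|$. Then $S_j\leq C_1+C_1Cr^2\,m\,S_j$. Since $mr^2=O(Hr)\to0$, for large $\rho$ this gives $S_j\leq 2C_1$. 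Equivalently, the discrete Gr\"onwall bound $\prod_i(1+Cr^2)\leq e^{CHr}\leq 2$ applies after conjugating by the bounded true products. Inserting both partial-product bounds into the telescoping sum then yields $\|\hat P_{loop}-P_{loop}\|\leq CHr$, which is \eqref{eq:loop-accumulation}.

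All constants come from Lemma~\ref{lem:one-edge}, the frame bound $K_{\mathrm{fr}}$, and the fixed geometry on $K_2$. They are therefore uniform over loop bases, accepted tuples, and tie-breaking choices.
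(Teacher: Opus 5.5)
Your proof is correct, but it bounds the partial products differently from the paper.

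The paper conjugates every factor into one smooth frame, the coarse normal coordinate basis, using $A_j=(dx_p^H)_{f(q_j)}E^r_{q_j}$. It sets $T_j=A_{j+1}P_jA_j^{-1}$ and $\hat T_j=A_{j+1}\hat P_jA_j^{-1}$. In that frame the connection is $O(H)$ and each edge has length $O(r)$, so $\|T_j-I\|\leq CHr$ and $\|\hat T_j-I\|\leq C(Hr+r^2)$. Every contiguous partial product, true or fitted, is then bounded by the crude estimate $\exp(Cm(Hr+r^2))$ with no induction. The telescoping sum is conjugated back at the end using $A_m=A_0=Q$.

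You stay in the fine-chart frames throughout:
\begin{itemize}
\item For the true partial products, you use the exact cancellation of intermediate frames together with boundedness of Levi-Civita transport along a polygon of length $O(H)$.
\item For the fitted suffix products, you use the self-improving inequality $S_j\leq C_1+C_1Cmr^2S_j$. It closes because $mr^2=O(Hr)\to0$, and $S_j$ is automatically finite.
\end{itemize}

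The paper's route makes each factor individually close to the identity, so the product bounds are immediate. The smallness of $Hr$ then matters only for the size of the final error, not for closing an argument. Your route needs less geometric input: only that true transport is bounded, so the $\Gamma=O(H)$ estimate you invoke is not actually needed for $C_1$. It also needs no final conjugation, because $\hat P_{loop}$ and $P_{loop}$ are compared directly in the same fine frame at $p$. The cost is the bootstrap step for the fitted products.

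Both routes give the same $O(mr^2)=O(Hr)$ bound, with constants uniform over bases, accepted tuples, and tie-breaking choices.
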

\begin{proof}
	Use the coarse normal coordinate basis as one smooth frame throughout the loop. Define
	\[ A_j = (dx_p^H)_{f(q_j)}E^r_{q_j}, \qquad T_j = A_{j+1} P_j A_j^{-1}, \qquad \hat T_j = A_{j+1} \hat P_j A_j^{-1}. \]
	The $A_j$ and their inverses are bounded even when neighboring fine charts use independently chosen frames. In this smooth frame the connection is $O(H)$ and edges have length $O(r)$. Consequently,
	\[ \| T_j-I \| \leq CHr, \qquad \| \hat T_j - T_j \| \leq Cr^2, \qquad \| \hat T_j - I \| \leq C(Hr+r^2). \]
	Every contiguous partial product is bounded by $\exp(Cm(Hr+r^2)) \leq \exp(C(H^2+Hr))$, hence uniformly bounded. The exact noncommutative identity is
	\[ \hat T_{m-1} \dots \hat T_0 - T_{m-1} \dots T_0 = \sum_{j=0}^{m-1} (\hat T_{m-1} \dots \hat T_{j+1})(\hat T_j - T_j)(T_{j-1} \dots T_0), \]
	with empty products $I$. Its norm is $O(mr^2)=O(Hr)$. The same fine chart is used on returning to $p$, so $A_m = A_0 = Q$. Endpoint conversions cancel exactly in each product; one bounded conjugation proves \eqref{eq:loop-accumulation}. 
\end{proof}

Expressing the independently fitted edge matrices in one smooth frame gives uniform bounds for their partial products. The changes of frame cancel at intermediate vertices, leaving only one bounded conjugation at the base.

\subsection{Fine-to-coarse alignment and the estimator}
Fit the coarse coordinates against the fine coordinates at the same base, allowing a translation and giving equal weight to every label in $I_p^r\subset I_p^H$:
\begin{equation}\label{eq:coarse-fine-fit}
\hat Q = \mathcal B(X^r_p , X^H_p).
\end{equation}
Define the data-derived estimator
\begin{equation}\label{eq:curvature-estimator}
\hat R_{ab}(p) = -\frac{\hat Q \hat P_{loop} \hat Q^{-1}-I}{H^2}.
\end{equation}

Figure~\ref{fig:sampled-holonomy-alignment} follows a vector around the sampled loop and shows how the fitted base-frame conversion enters the curvature estimate.
\begin{center}
\begin{minipage}{\linewidth}
\centering
\captionsetup{hypcap=false}
\includegraphics[width=.98\linewidth]{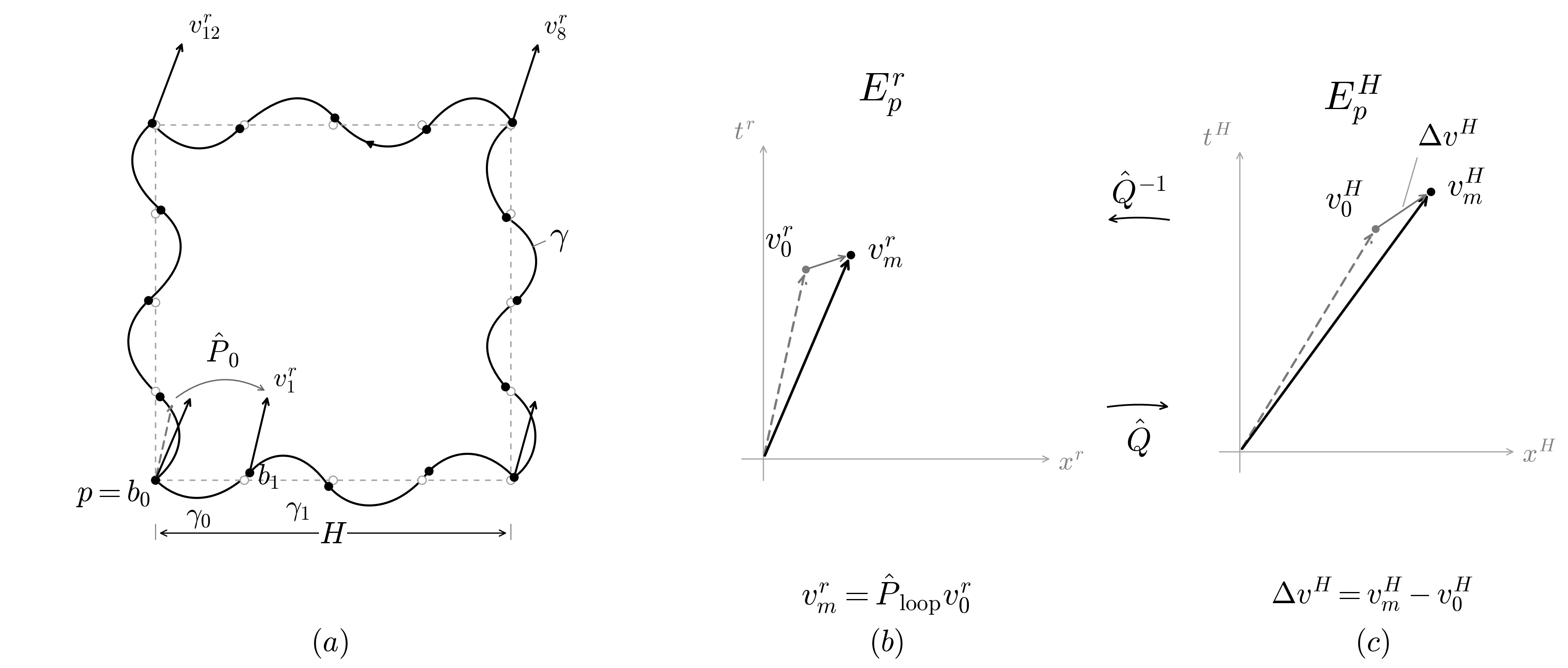}
\captionof{figure}{(a) The sampled loop from Figure~\ref{fig:geodesic-polygon}, with the same vertices and curves. The loop $\gamma$ is the concatenation of the short geodesics $\gamma_j$. The arrowhead on the curve indicates its orientation. The drawn vectors illustrate $v_{j+1}^r=\hat P_jv_j^r$; the curved arrow $\hat P_0$ maps the initial vector to the vector at $b_1$. After the ordered product, $v_m^r$ returns to the same base $p$, where the same fine chart is used. (b) An enlarged comparison of the initial and returned vectors at $p$: the initial vector $v_0^r$ is dashed and the returned vector $v_m^r$ is solid. (c) Conversion to the coarse base coordinates gives $v_0^H=\hat Qv_0^r$ and $v_m^H=\hat Qv_m^r$. Their difference is $\Delta v^H=(\hat Q\hat P_{\mathrm{loop}}\hat Q^{-1}-I)v_0^H$, so $-\Delta v^H/H^2=\hat R_{ab}(p)v_0^H$. Vector changes and geodesic bending are exaggerated; the plots are schematic projections.}
\label{fig:sampled-holonomy-alignment}
\end{minipage}
\end{center}

\begin{lemma}[Stability of the aligned loop estimate]
\label{lem:base-frame-alignment}
	For small $r$, $\hat Q$ has uniformly bounded inverse, and
	\begin{equation}\label{eq:base-frame-error}
	\| \hat Q - Q \| \leq C(H^3/r + r^2).
	\end{equation}
	Furthermore,
	\begin{equation}\label{eq:aligned-loop-error}
	\| \frac{\hat Q \hat P_{loop} \hat Q^{-1} -I}{H^2} - \frac{QP_{loop} Q^{-1} - I}{H^2} \| \leq C \left( \frac{r}{H} + \frac{H^3}{r} + r^2 \right).
	\end{equation}
\end{lemma}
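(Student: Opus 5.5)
The plan is to prove \eqref{eq:base-frame-error} first, by a direct application of Lemma~\ref{lem:affine-fit-stability} to the same-center fit \eqref{eq:coarse-fine-fit}. Then \eqref{eq:aligned-loop-error} follows from a conjugation identity in which the alignment error is multiplied by $P_{loop}-I$.

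\textbf{Step 1: the continuum transition is exactly linear.} The two true charts at the base have the same centre. They are $x_p^r=(E_p^r)^{-1}\log^g_{f(p)}$ and $x_p^H=(E_p^H)^{-1}\log^g_{f(p)}$. Hence
\[
x_p^H\circ(x_p^r)^{-1}=(E_p^H)^{-1}E_p^r=Q,
\]
exactly, with no Taylor remainder. This is the key simplification compared with Lemma~\ref{lem:one-edge}.

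\textbf{Step 2: a residual bound on the fitting set.} For $c\in I_p^r\subset I_p^H$, write $X_c=X_p^r(c)$ and $Y_c=X_p^H(c)$. By \eqref{eq:fine-chart-accuracy} and \eqref{eq:coarse-chart-accuracy},
\[
Y_c=QX_c+e_c,\qquad |e_c|\le C(H^3+r^3).
\]
Here we use $\|Q\|\le K_{\mathrm{fr}}^2$, which follows from the uniform frame bounds of Proposition~\ref{prop:count-chart-guarantees}. The fine-domain bound \eqref{eq:fine-chart-domain} together with fine accuracy gives $|X_c-\overline X|\le Cr$. Proposition~\ref{prop:count-chart-guarantees} also supplies the source covariance lower bound $\kappa r^2 I$ on all of $I_p^r$.

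\textbf{Step 3: the bound on $\hat Q$.} Apply Lemma~\ref{lem:affine-fit-stability} with $\zeta r=C(H^3+r^3)$, that is, $\zeta=C(H^3/r+r^2)$. This gives \eqref{eq:base-frame-error}. Since $H=\sqrt r$, we have $H^3/r=\sqrt r\to0$, so $\|\hat Q-Q\|=o(1)$. The matrix $Q^{-1}=(E_p^r)^{-1}E_p^H$ is uniformly bounded, so a Neumann series argument shows that $\hat Q$ is invertible with a uniformly bounded inverse. It also gives
\[
\|\hat Q^{-1}-Q^{-1}\|\le C\|\hat Q-Q\|.
\]

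\textbf{Step 4: the conjugation identity.} The point is that conjugation fixes $I$:
\[
\hat Q\hat P_{loop}\hat Q^{-1}-I=\hat Q(\hat P_{loop}-I)\hat Q^{-1}.
\]
Subtracting the corresponding continuum expression, we split the difference as
\[
\hat Q(\hat P_{loop}-P_{loop})\hat Q^{-1}
+\bigl[\hat Q(P_{loop}-I)\hat Q^{-1}-Q(P_{loop}-I)Q^{-1}\bigr].
\]

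\textbf{Step 5: bounding the two terms.} By Lemma~\ref{lem:loop-accumulation} and the bounded $\hat Q^{\pm1}$, the first term is $O(Hr)$; after division by $H^2$ it gives $r/H$. For the second term, telescope in $\hat Q-Q$ and $\hat Q^{-1}-Q^{-1}$. Each piece carries a factor $\|P_{loop}-I\|\le CH^2$ from Lemma~\ref{lem:true-loop-holonomy}, so this term is $O(H^2(H^3/r+r^2))$. Dividing by $H^2$ yields \eqref{eq:aligned-loop-error}.

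The step needing the most care is Step 2. We must check that the coarse chart's coordinates are controlled on the whole fine fitting set: this uses $I_p^r\subset I_p^H$ and the coarse accuracy \eqref{eq:coarse-chart-accuracy}. We must also check that the covariance bound is taken for the source $X_p^r$, not for the target. The remaining steps are algebraic once the exact linearity in Step 1 is noted.
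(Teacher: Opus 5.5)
Your proposal is correct and follows essentially the same route as the paper. The same-center transition $x_p^H=Q\,x_p^r$ is exactly linear, which gives a residual $O(H^3+r^3)$ on $I_p^r$. The affine-fit stability lemma with the base covariance bound then yields the $\hat Q$ estimate. Finally, the conjugation identity multiplies the alignment error by $P_{loop}-I=O(H^2)$, while the accumulation lemma contributes the $O(Hr)$ term. Your Neumann-series argument for $\hat Q^{-1}$ and the bound $\|Q\|\le K_{\mathrm{fr}}^2$ just spell out details the paper leaves implicit.
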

\begin{proof}
	True normal coordinates with the same center differ exactly by $Q$. On $I_p^r$, therefore,
	\[ X_p^H(c) = Q X_p^r(c) + O(H^3+r^3). \]
	The base covariance and Lemma~\ref{lem:affine-fit-stability} give \eqref{eq:base-frame-error}. Its bound tends to zero at $H=\sqrt{r}$, so bounded invertibility of $Q$ gives that of $\hat Q$.
	
	Write $Z=P_{loop} - I$. Since conjugation fixes $I$, 
	\[ \hat Q P_{loop} \hat Q^{-1} - Q P_{loop} Q^{-1} = (\hat Q - Q)Z \hat Q^{-1} + QZ(\hat Q^{-1} - Q^{-1}). \]
	Use $\hat Q^{-1} - Q^{-1} = \hat Q^{-1}(Q-\hat Q)Q^{-1}$ and $\| Z \| \leq CH^2$. This costs at most $C \| \hat Q - Q \| H^2$. Replacing $P_{loop}$ by $\hat P_{loop}$ costs $CHr$ by Lemma~\ref{lem:loop-accumulation}. Divide by $H^2$. 
\end{proof}

\subsection{The curvature reconstruction theorem}
\label{sec:volume-reconstruction}
The established chart and loop estimates now give the count-based reconstruction theorem.

\begin{theorem}[Volume-based curvature reconstruction]\label{thm:volume-curvature}
In the fixed geometric setting of Section~\ref{sec:geo}, assume the deterministic volume conditions of Definition~\ref{def:deterministic-volume-event}. The chart and loop thresholds can be fixed so that, for all sufficiently large $\rho$, the constructions of Sections~\ref{sec:sampled-charts} and~\ref{sec:transport-curvature} succeed at every base $p\in C_\rho[4\theta]$. With
\[
r=(L_\rho/\rho)^{1/(3d+4)},\qquad H=\sqrt r,
\]
the estimator~\eqref{eq:curvature-estimator} satisfies
\begin{equation}\label{eq:volume-curvature-rate}
\sup_{p\in C_\rho[4\theta]}\max_{0\leq a<b\leq d-1}
\bigl\|\hat R_{ab}(p)-\mathcal R_{ab}^{H}(p)\bigr\|
\leq C\left(\frac{L_\rho}{\rho}\right)^{1/[2(3d+4)]}.
\end{equation}
Here $\mathcal R_{ab}^{H}(p)$ is the true curvature expressed in the single coarse comparison frame $E_p^H$. The constants depend on the fixed geometry, $\theta$, the count allowance $K$, and the fixed construction thresholds. The estimate holds simultaneously for every eligible base, every accepted chart tuple, and every choice among equally close minimizing labels.
\end{theorem}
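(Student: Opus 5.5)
The proof assembles the estimates of Section~\ref{sec:transport-curvature} through one triangle inequality. First I would fix every construction threshold in advance. These are $n_0,n_1,g_0,g_1,a,A_{\mathrm{cand}}$ from Proposition~\ref{prop:accepted-scaffolds}, and $C_H,C_*,R_1,\delta_0,\kappa$ from Proposition~\ref{prop:count-chart-guarantees}. I would then take $\rho$ large enough that all scale relations of Section~\ref{sec:reconstruction-scales} hold: $u=o(r^3)$, $\eta_\rho=o(r)$, $E_r=O(r^2)$, $E_{C_HH}=O(H^2)$.

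\emph{Success of the construction.} Let $p\in C_\rho[4\theta]$. Proposition~\ref{prop:count-chart-guarantees} places $f(p)$ in $K_0$ and makes every label in $B_h(f(p),C_*H)$ a chart base in $C_\rho[2\theta]$. Proposition~\ref{prop:accepted-scaffolds} then gives passing tuples at both scales. Lemma~\ref{lem:loop-vertex-domain} shows that the selected loop vertices are eligible and that consecutive vertices are within $\delta_0 r$. Hence each overlap $J_{q_j,q_{j+1}}$ is nonempty and satisfies the covariance bound \eqref{eq:overlap-covariance}, and the base fitting set $I_p^r$ is well conditioned. Consequently every fit $\hat P_j$ and $\hat Q$ is defined. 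Lemma~\ref{lem:base-frame-alignment} shows $\hat Q$ is invertible, so the estimator \eqref{eq:curvature-estimator} is defined.

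\emph{Error decomposition.} I would write
\[
\hat R_{ab}(p)-\mathcal R^H_{ab}(p)=-\Bigl[\tfrac{\hat Q\hat P_{loop}\hat Q^{-1}-I}{H^2}-\tfrac{QP_{loop}Q^{-1}-I}{H^2}\Bigr]-\Bigl[\tfrac{QP_{loop}Q^{-1}-I}{H^2}+\mathcal R^H_{ab}(p)\Bigr].
\]
The first bracket is bounded by \eqref{eq:aligned-loop-error}. That lemma already uses Lemma~\ref{lem:loop-accumulation} (which in turn rests on Lemma~\ref{lem:one-edge}) together with $\|P_{loop}-I\|\le CH^2$. It gives $C(r/H+H^3/r+r^2)$. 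The second bracket is bounded by Lemma~\ref{lem:true-loop-holonomy} divided by $H^2$, which gives $C(H+\epsilon/H+\epsilon^2/(rH)+r^2)$. Here $\epsilon=C(H^3+u)$ comes from Lemma~\ref{lem:loop-vertex-domain}.

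\emph{Substituting the scales.} With $H=\sqrt r$ the terms become:
\begin{itemize}
\item[(i)] $r/H=r^{1/2}$ and $H^3/r=r^{1/2}$;
\item[(ii)] $H=r^{1/2}$;
\item[(iii)] $\epsilon/H\le C(r+u r^{-1/2})=O(r)$, using $u=o(r^3)$;
\item[(iv)] $\epsilon^2/(rH)\le C(r^3+u^2)/r^{3/2}=O(r^{3/2})$;
\item[(v)] $r^2$.
\end{itemize}
The total is therefore $O(r^{1/2})=O((L_\rho/\rho)^{1/[2(3d+4)]})$, which is the bound \eqref{eq:volume-curvature-rate}.

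\emph{Uniformity.} All constants used come from the fixed compact sets $K_0\subset K_1\subset K_2$, the frame bound $K_{\mathrm{fr}}$, and the fixed thresholds. Every cited lemma holds for every accepted tuple and every tie-breaking among minimizing labels. So the supremum over $p$ and the maximum over the finitely many pairs $(a,b)$ cost nothing extra.

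\emph{Main obstacle.} I expect the main point to check is the $H^3/r$ term in the alignment error. It is only of order $r^{1/2}$, so the rate is sharp at $H=\sqrt r$. It enters multiplied by $\|P_{loop}-I\|/H^2=O(1)$, not divided by $H^2$, so I must verify that the conjugation identity in Lemma~\ref{lem:base-frame-alignment} applies with the identity-fixing cancellation. I also need to confirm that $u=o(r^3)$, so that the vertex error $\epsilon$ is dominated by $H^3$ and the $\epsilon/H$ term does not degrade the rate.
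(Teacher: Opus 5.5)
Your proposal is correct and follows essentially the same route as the paper: you check that the construction is well defined via Proposition~\ref{prop:count-chart-guarantees}, Lemma~\ref{lem:loop-vertex-domain}, and Lemma~\ref{lem:base-frame-alignment}, then combine \eqref{eq:true-loop-holonomy} divided by $H^2$ with \eqref{eq:aligned-loop-error} by the triangle inequality, and verify that at $H=\sqrt r$ with $u=o(r^3)$ the six error terms have orders $H,H,H^2,H^3,H^4,H$. Your explicit sign bookkeeping in the decomposition, and your observation that the $H^3/r$ alignment term is multiplied by $\|P_{loop}-I\|/H^2=O(1)$ rather than divided by $H^2$, spell out exactly what the paper's short proof leaves implicit.
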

\begin{proof}
Proposition~\ref{prop:count-chart-guarantees} establishes the required chart domains, coordinate errors, frame bounds, sample coverage, and covariance estimates for the chosen count-based construction. Lemma~\ref{lem:loop-vertex-domain} then supplies the short edges and loop domains. The covariance bounds make every edge fit and the fine-to-coarse fit well-defined; Lemma~\ref{lem:base-frame-alignment} gives the remaining inverse.

The true holonomy expansion~\eqref{eq:true-loop-holonomy} and aligned estimation error~\eqref{eq:aligned-loop-error}, divided by the loop area where required, give
\[
\bigl\|\hat R_{ab}(p)-\mathcal R_{ab}^{H}(p)\bigr\|
\leq C\left[H+\frac rH+\frac\epsilon H+\frac{\epsilon^2}{rH}+r^2+\frac{H^3}{r}\right],
\qquad \epsilon=C(H^3+u).
\]
At $H=\sqrt r$ and $u=o(r^3)$, one has $\epsilon=O(H^3)$. The six terms have orders $H,H,H^2,H^3,H^4,H$, respectively, so the error is at most $CH$. Substituting $H=(L_\rho/\rho)^{1/[2(3d+4)]}$ proves~\eqref{eq:volume-curvature-rate}. Every estimate used above is simultaneous over the eligible bases, accepted chart tuples, and minimizing-label choices in the statement.
\end{proof}

The $\mathcal R_{ab}^H$ determine the entire curvature endomorphism-valued two-form in the coarse Lorentz frame. Lowering the output index using $\eta$ gives the Riemann tensor. A fixed bounded linear projection onto the algebraic curvature symmetries preserves the rate. Comparison uses the frame $E_p^H$ associated with the chosen coarse chart. This frame may depend on the sample, and the theorem does not require the frames themselves to converge. 

\begin{corollary}[Poisson curvature reconstruction]\label{cor:poisson-curvature}
Let $C_\rho$ be the induced order on a Poisson sample in $D$ of intensity $\rho\mu$. For every $N>0$, the construction and estimate of Theorem~\ref{thm:volume-curvature} hold with probability at least $1-\rho^{-N}$ for sufficiently large $\rho$, with $C$ replaced by $C_N$. Under the marked-Poisson coupling defined in Section~\ref{sec:poisson-volume}, they hold almost surely for every sufficiently large real $\rho$, with deterministic construction thresholds and an error constant, and a finite random density threshold.
\end{corollary}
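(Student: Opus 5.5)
The plan is to reduce the corollary to Theorem~\ref{thm:volume-curvature} by verifying its only probabilistic hypothesis, the deterministic volume conditions (V1) and (V2), on an event of the required probability. Everything else in the theorem, including the existence of passing tuples, chart accuracy, overlap conditioning, loop selection and the rate bound, is stated to hold deterministically on the event where (V1) and (V2) hold. It depends only on the fixed geometry, $\theta$, the allowance $K$, and fixed thresholds. So it suffices to exhibit a constant $K$ and an event of large probability on which (V1) and (V2) hold with that $K$.

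For the high-probability statement, fix $N>0$ and apply Theorem~\ref{thm:poisson-volume-bounds}. This gives $K_N$ such that a Poisson sample of intensity $\rho\mu$ satisfies (V1) and (V2) with $K=K_N$, with probability at least $1-\rho^{-N}$ for large $\rho$. The induced order on the sample is an exact order embedding $f$ into $D$, as required in Section~\ref{sec:geo}. Almost surely the points are distinct and no two lie on a common null boundary issue that matters, since all our statements are about the embedded sample. On this event I would invoke Theorem~\ref{thm:volume-curvature} with $K=K_N$. The construction thresholds ($n_0,n_1,g_0,g_1,a,A_{\mathrm{cand}},\delta_0,\ell_0,C_H$) should be fixed before $N$ is chosen, so I have to check that their choice in Proposition~\ref{prop:accepted-scaffolds} and Proposition~\ref{prop:count-chart-guarantees} does not depend on $K$. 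The margins there come from ideal geometric configurations, and $K$ only enters through $\delta_t$, which tends to zero. Thus $K$ affects only the density threshold and the error constant, which becomes $C_N$. The ``sufficiently large $\rho$'' of the theorem and of Theorem~\ref{thm:poisson-volume-bounds} are then combined by taking the maximum.

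For the almost-sure statement, use the marked-Poisson coupling and Corollary~\ref{cor:poisson-eventual-volume-bounds}. It gives a deterministic $K$ such that, almost surely, there is a random finite $\rho_0$ with (V1) and (V2) holding for all real $\rho\geq\rho_0$. Theorem~\ref{thm:volume-curvature} has its own deterministic density threshold $\rho_1(K)$. On the almost-sure event, for every $\rho\geq\max(\rho_0,\rho_1)$ the deterministic theorem applies to $P_\rho$ with the fixed $K$. This yields the estimate with deterministic thresholds and error constant and a random finite density threshold. The statement is uniform over all real $\rho$ and not just a sequence, because the theorem's conclusion at each $\rho$ only needs (V1) and (V2) at that same $\rho$.

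The main obstacle is the bookkeeping of quantifier order. I must confirm that Theorem~\ref{thm:volume-curvature}'s density threshold depends on the sample only through $K$, and that the construction thresholds are chosen independently of $K$ (or at least of $N$). That way a single deterministic construction serves for every $N$ in the first part and for the whole coupled family in the second. I would check this by tracing where $K$ appears: only in $\delta_t=K\sqrt{L_\rho/(\rho t^d)}$ and $\eta_\rho$. Both vanish as $\rho\to\infty$ for any fixed $K$, so every ``for sufficiently large $\rho$'' step holds with a threshold depending on $K$ but not on the realization.
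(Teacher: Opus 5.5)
Your proposal is correct and follows the paper's own argument. You apply Theorem~\ref{thm:volume-curvature} on the event of Theorem~\ref{thm:poisson-volume-bounds} with $K=K_N$, and for the coupled statement on the almost-sure eventual event of Corollary~\ref{cor:poisson-eventual-volume-bounds}; your check that the construction thresholds do not depend on $K$, because $K$ enters only through $\delta_t$ and $\eta_\rho$, makes explicit bookkeeping that the paper leaves implicit.
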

\begin{proof}
Apply Theorem~\ref{thm:volume-curvature} on the event supplied by Theorem~\ref{thm:poisson-volume-bounds}. For the coupled conclusion use Corollary~\ref{cor:poisson-eventual-volume-bounds}. The deterministic result controls all eligible choices on each of these events.
\end{proof}

For $a<b$, extend the estimated matrices by $\hat R_{ba}(p)=-\hat R_{ab}(p)$ and $\hat R_{aa}(p)=0$. Writing $(\hat R_{ab})^c{}_e$ for their matrix entries, define
\[
\widehat{\operatorname{Scal}}(p)
 =\sum_{a,b=0}^{d-1}\eta^{bb}(\hat R_{ab}(p))^a{}_b.
\]
Here the diagonal entries of $\eta^{-1}$ are $\eta^{00}=-1$ and $\eta^{bb}=1$ for $b>0$.

\begin{corollary}[Scalar-curvature reconstruction]\label{cor:scalar-curvature}
Under the hypotheses of Theorem~\ref{thm:volume-curvature}, for all sufficiently large $\rho$,
\[
\sup_{p\in C_\rho[4\theta]}
\left|\widehat{\operatorname{Scal}}(p)-\operatorname{Scal}_g(f(p))\right|
\leq C_{\mathrm{sc}}\left(\frac{L_\rho}{\rho}\right)^{1/[2(3d+4)]}.
\]
The estimate holds uniformly over accepted chart tuples and every choice among equally close minimizing labels. It also has the high-probability and eventual almost-sure Poisson conclusions of Corollary~\ref{cor:poisson-curvature}. The scalar target is independent of the comparison frame.
\end{corollary}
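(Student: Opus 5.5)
The plan is to deduce the scalar bound directly from Theorem~\ref{thm:volume-curvature}. The scalar curvature is a fixed linear contraction of the curvature matrices, and because it is invariant under Lorentz changes of frame, it can be evaluated in the coarse frame $E_p^H$. The only care needed is with the uniformly bounded frame.

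First I would identify the exact contraction. In the coarse Lorentz-orthonormal frame $E_p^H$, with vectors $E_a=E_p^He_a$, the matrix $\mathcal R^H_{ab}(p)$ has entries $(\mathcal R^H_{ab})^c{}_e=R^c{}_{eab}$. This is the curvature endomorphism $R(E_a,E_b)$ written in the frame. Since $E_p^H=e_p\Lambda^{-1}$ with $\Lambda^T\eta\Lambda=\eta$, this frame is Lorentz-orthonormal: $g(E_a,E_b)=\eta_{ab}$. Hence
\[
\operatorname{Scal}_g(f(p))=g^{be}R^a{}_{eab}=\sum_{a,b}\eta^{bb}(\mathcal R^H_{ab}(p))^a{}_b .
\]
Here I would check the index placement against the convention $R(U,V)=[\nabla_U,\nabla_V]-\nabla_{[U,V]}$: with the Ricci tensor $\mathrm{Ric}_{eb}=R^a{}_{eab}$, the sum equals $\eta^{eb}\mathrm{Ric}_{eb}$, and the displayed estimator uses $e=b$ because $\eta$ is diagonal. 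Because the scalar is a full contraction and the frame is orthonormal, the value does not depend on which accepted frame, and hence which chart tuple, was chosen. This is the frame-independence asserted in the statement.

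Next I would extend the true matrices antisymmetrically, $\mathcal R^H_{ba}=-\mathcal R^H_{ab}$ and $\mathcal R^H_{aa}=0$, which is exact because curvature is antisymmetric in its two-form slots. The extension of $\hat R$ in the statement matches this, so for every pair $(a,b)$ we get $\|\hat R_{ab}-\mathcal R^H_{ab}\|$ bounded by the maximum over $a<b$. Then
\[
|\widehat{\operatorname{Scal}}(p)-\operatorname{Scal}_g(f(p))|\le\sum_{a,b}|(\hat R_{ab}-\mathcal R^H_{ab})^a{}_b|\le d^2\max_{a<b}\|\hat R_{ab}(p)-\mathcal R^H_{ab}(p)\|,
\]
since each matrix entry is bounded by the operator norm. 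Applying \eqref{eq:volume-curvature-rate} gives the rate with $C_{\mathrm{sc}}=d^2C$, uniformly over bases, accepted tuples and label choices. The probabilistic conclusions follow exactly as in Corollary~\ref{cor:poisson-curvature}, by working on the events supplied by Theorem~\ref{thm:poisson-volume-bounds} and Corollary~\ref{cor:poisson-eventual-volume-bounds}.

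The main obstacle is the bookkeeping in the first step: confirming that the sign and index conventions of $\mathcal R^H_{ab}$ make this particular trace equal to $+\operatorname{Scal}$ rather than its negative or a different contraction. I would verify this on a round-sphere or de Sitter model, using the stated holonomy convention $I-H^2R_{ab}$.
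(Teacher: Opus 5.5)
Your proposal is correct and follows the paper's proof. The paper writes $\operatorname{Scal}_g(f(p))=\sum_{a,b}\eta^{bb}(\mathcal R^H_{ab}(p))^a{}_b$ as a fixed linear contraction in the Lorentz-orthonormal coarse frame, applies it to the error bound of Theorem~\ref{thm:volume-curvature}, and obtains the Poisson statements on the events of Corollary~\ref{cor:poisson-curvature}. Your index check ($g^{be}R^a{}_{eab}$), the antisymmetric extension of the true matrices, and the explicit constant $C_{\mathrm{sc}}=d^2C$ spell out what the paper absorbs into its dimension-dependent constant.
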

\begin{proof}
With the curvature convention of Lemma~\ref{lem:true-loop-holonomy}, scalar curvature in any Lorentz-orthonormal frame is
\[
\operatorname{Scal}_g(f(p))
 =\sum_{a,b=0}^{d-1}\eta^{bb}(\mathcal R^H_{ab}(p))^a{}_b.
\]
The contraction is a fixed linear map on the finite collection of curvature matrices. Applying it to the error in Theorem~\ref{thm:volume-curvature} proves the estimate, with a dimension-dependent factor absorbed into $C_{\mathrm{sc}}$. The Poisson statements follow on the same events as in Corollary~\ref{cor:poisson-curvature}.
\end{proof}

This gives a scalar consequence of the full tensor reconstruction, complementing scalar-curvature constructions based on causal-set wave operators~\cite{BenincasaDowker2010,BelenchiaBenincasaDowker2016} and chain abundances~\cite{RoySinhaSurya2013}.

% ============================================================
\section{Alternative interval estimators}\label{sec:alternative-estimators}
\subsection{Accuracy required of an interval estimator}
The coordinate construction uses estimates of squared proper time. For timelike pairs in a convex normal neighborhood, write
\[
D_g(x,y)=2\sigma_g(x,y)=-\tau_g(x,y)^2,
\]
where $\sigma_g$ is Synge's world function, half the signed squared geodesic interval. Suppose a symmetric scalar estimator $\hat D$, with $\hat D(c,c)=0$, satisfies
\begin{equation}\label{eq:alternative-scalar-accuracy}
|\hat D(c,c')-D_g(f(c),f(c'))|\leq CE_\rho(s)s^2
\end{equation}
uniformly on every distinct pair used in the chart tests. Assume these distinct pairs are localized in a normal neighborhood of size $O(s)$ and have proper time at least a fixed positive multiple of $s$. Retain the Gram eigenvalue bounds, full active sets, covariance lower bounds, and domain assumptions used above. Take $E_\rho(s)\geq s^2$ to include the $O(s^4)$ normal-coordinate squared-interval error, and $E_\rho(s)\to0$ at the reconstruction scales.

Under these conditions the proof of Proposition~\ref{prop:accepted-chart-soundness} applies with this scalar estimator: its coordinate error is $O(E_\rho(s)s)$. If the fine and coarse errors satisfy the bounds of Section~\ref{sec:transport-assumptions}, the same transport and curvature theorem applies. Other accuracy budgets can be assessed through the edge, loop-area, and frame-alignment estimates proved above.

For comparable labels the volume and longest-chain candidates are
\[
\hat D_{\mathrm{vol}}(c,c')=-\left(\frac{N(c,c')}{\rho\kappa_d}\right)^{2/d},
\qquad
\hat D_{\mathrm{ch}}(c,c')=-\left(\frac{\ell_C(c,c')}{c_d^{\mathrm{chain}}\rho^{1/d}}\right)^2.
\]
Here $\ell_C(c,c')$ is the maximum number of links in a chain from the earlier label to the later one, and $c_d^{\mathrm{chain}}>0$ is a specified flat-spacetime calibration constant for chain length~\cite{BrightwellGregory1991}. It is distinct from the diamond-volume constant $\kappa_d$. Both estimators are extended symmetrically and assigned zero on the diagonal. Counts may still be used to select and localize the pairs on which the chain estimator is evaluated.

For the chain construction, one must establish a uniform two-sided estimate of the form~\eqref{eq:alternative-scalar-accuracy} on the tested pairs, together with the probability statement and scale bounds required for the desired curvature conclusion. The preceding volume theorem supplies this input for $\hat D_{\mathrm{vol}}$.

\subsection{A curvature estimate for general sampled charts}\label{sec:transport-assumptions}
The volume construction above establishes its chart properties from interval counts. We now state the deterministic estimate for general sampled charts, so that other interval estimators can use the same transport and loop analysis once they supply these properties. Here an eligible label is one at which the fine-chart data described below are available.

Let $(M,g)$ be a fixed smooth Lorentzian $d$-manifold, $d \geq 2$, with positive observer metric $h$. Fix compact sets $K_0 \subset \mathrm{int} K_1 \subset \mathrm{int} K_2$ in a smooth buffered region. Norms use $h$ and Euclidean coordinate norms. Fix uniform bounds for normal-coordinate maps and their inverses, their derivatives needed below, the connection and its first two coordinate derivatives, and short-geodesic endpoint maps. Smooth fixed geometry on a compact buffer supplies such bounds after the normal-neighborhood radius is reduced.

We choose the true normal charts on neighborhoods containing $h$-balls of one fixed sufficiently small radius about their bases. The sampled maps are required only on the active label sets. This distinction allows us to use smooth coordinate segments in the proof without assuming that those segments are part of the finite sample.

Let $\mathcal C$ be a finite labelled sample with injective realization $f: \mathcal C \to M$. The map $f$ is used only to analyze the construction. It is not an input to the estimator. We occasionally write $q$ for $f(q)$. Let $r \to 0, H = \sqrt{r}$, and let the sample-net radius satisfy $u=o(r^3)$. Assume the following conditions hold uniformly for every selected base $p$, with $f(p) \in K_0$.
\begin{enumerate}
	\item \textit{Net and available centers.} Every continuum point in the neighborhoods used below has a sample within $h$-distance $u$. Every label within $C_* H$ of $f(p)$ is eligible for a fine chart, where the fixed $C_*$ contains the entire coarse active domain and its fine-chart buffers. These neighborhoods lie in $K_1$; all normal-chart and geodesic constructions stay in $K_2$.
	\item \textit{Fine charts and their domains.} At every such $q$, a known active label set $I_q^r$ and data-derived map $X_q^r : I_q^r \to \mathbb R^d$ are supplied. There is one Lorentz-orthonormal frame $E_q^r : \mathbb R^d \to T_{f(q)}M$, constant for the entire chart, such that
	\begin{equation*}
	X_q^r(q)=0, \quad x_q^r(z)=(E_q^r)^{-1} \log_{f(q)}^g z, \quad |X_q^r(c)-x_q^r(f(c))| \leq Cr^3,
	\end{equation*}
	\begin{equation*}
	\{ c: f(c) \in B_h(f(q),c_0 r) \} \subset I_q^r \subset \{c:f(c) \in B_h(f(q),C_0 r) \}.
	\end{equation*}
	The frame and its inverse have norm at most a fixed $K_{\mathrm{fr}}$. The estimates below hold uniformly over these frame choices.
	\item \textit{A sufficiently large coarse chart.} At $p$, a known active set $I_p^H$ and recentered coordinate map obey
	\begin{equation*}
	X_p^H(p)=0, \quad x_p^H(z) = (E_p^H)^{-1}\log_{f(p)}^g z, \quad |X_p^H(c)-x_p^H(f(c))| \leq CH^3.
	\end{equation*}
	Here $E_p^H$ is one Lorentz-orthonormal frame with the same bounds. For fixed $R_0>3, R_1 < \infty$, the true chart is defined on a buffered coordinate ball containing the sets below, and
	\begin{equation*}
	\{ c: |x_p^H(f(c))| \leq R_0 H\} \subset I_p^H \subset \{c: |x_p^H(f(c))| \leq R_1 H \}.
	\end{equation*}
	Inclusions are understood within the fixed normal neighborhood. For small $r$, $I_p^r \subset I_p^H$, by \eqref{eq:fine-chart-domain} and uniform coordinate bounds.
	\item \textit{Covariance bounds for fitting sets.} Fix $\delta_0 , \kappa > 0$. If eligible $q,q'$ have $d_h(f(q),f(q')) \leq \delta_0 r$, the common set $J_{q,q'} = I_q^r \cap I_{q'}^r$ is nonempty and its uniform empirical covariance in the source coordinates satisfies
	\begin{equation*}
	\frac{1}{|J_{q,q'}|} \sum_{c \in J_{q,q'}} (X_q^r(c)-\overline X)(X_q^r(c)-\overline X)^T \succeq \kappa r^2 I.
	\end{equation*}
	\item The same bound holds for the uniform covariance of $X_p^r$ on $I_p^r$. The covariance acceptance tests are computable from the supplied data.
\end{enumerate}

Each reconstructed chart may differ from a reference normal chart by a constant Lorentz transformation. This transformation may vary between charts, but it and its inverse have uniformly bounded norms. If $X_q^r = \Lambda_q \chi_q + O(r^3)$ with $\chi_q = e_q^{-1} \log_{f(q)}^g$, set $E_q^r = e_q \Lambda_q^{-1}$. Recentering at the base removes any fixed translation and at most doubles the coordinate error.

\begin{lemma}[Covariance bound from separated sample clusters]\label{lem:sufficient-mass-design}
	Suppose a fitting set lies in a coordinate ball of radius $Cr$. Suppose $d+1$ clusters each carry at least a fixed fraction $\alpha > 0$ of its empirical mass. Their true coordinate centers form a simplex with difference matrix of smallest singular value at least $\sigma r > 0$. If each cluster has radius at most $\beta r$, with $\beta$ sufficiently small depending on $\sigma, d$, then uniform coordinate errors $o(r)$ preserve a covariance lower bound $\kappa r^2 I$. 
\end{lemma}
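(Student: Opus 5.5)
The plan is to follow the cluster argument of Lemma~\ref{lem:empirical-conditioning}, stated abstractly. Fix a Euclidean unit vector $\xi$ and reduce the matrix inequality to the scalar bound $\xi^TM_X\xi\geq\kappa r^2$. Write $v_c=\xi\cdot X(c)$ for the labels $c$ in the fitting set, with $N$ labels in total. Use the exact pairwise identity
\[
\frac1N\sum_c(v_c-\overline v)^2=\frac1{2N^2}\sum_{c,c'}(v_c-v_{c'})^2 .
\]
Every term on the right is nonnegative, so the sum may be restricted to pairs $(c,c')$ lying in two chosen clusters $k\neq k'$. Each cluster carries at least $\alpha N$ labels, so this restriction leaves at least $\alpha^2N^2$ pairs. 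It therefore suffices to find, for each $\xi$, two clusters whose $\xi$-projections of reconstructed coordinates are separated by at least $c_1 r$ for every choice of one label in each. This gives $\xi^TM_X\xi\geq\tfrac12\alpha^2c_1^2r^2$.

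The separation comes from the simplex condition. Let $y_0,\dots,y_d$ be the true coordinate centers, and let $S$ have columns $y_i-y_0$ with $\sigma_{\min}(S)\geq\sigma r$. Then $|S^T\xi|\geq\sigma r$. Hence some $i$ satisfies $|\xi\cdot(y_i-y_0)|\geq\sigma r/\sqrt d$, so two centers have projections at least $\sigma r/\sqrt d$ apart.

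Now pass from the centers to the reconstructed coordinates of the cluster labels. A label $c$ in cluster $k$ has true coordinate within $\beta r$ of $y_k$. Its reconstructed coordinate $X(c)$ differs from the true one by $o(r)$. Here I take "true coordinates" to mean the reference normal chart composed with the chart's fixed frame, as in the hypotheses of Section~\ref{sec:transport-assumptions}, so that no Lorentz factor intervenes. If the clusters and centers are described in a chart differing by a bounded Lorentz $\Lambda$, the singular value bound only changes by the factor $K_{\mathrm{fr}}^{-1}$. The projected values therefore differ from $\xi\cdot y_k$ by at most $\beta r+o(r)$. Choose $\beta\leq\sigma/(4\sqrt d)$, and take $r$ small enough that the $o(r)$ error is below $\sigma r/(8\sqrt d)$. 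Then every pair across the two chosen clusters satisfies
\[
|v_c-v_{c'}|\geq\sigma r/(4\sqrt d).
\]

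Plugging this in gives $\kappa=\alpha^2\sigma^2/(32d)$, which is uniform in $\xi$. Since $\kappa$ does not depend on $\xi$, the bound $\xi^TM_X\xi\geq\kappa r^2$ holds for every unit vector, which is the matrix inequality $M_X\succeq\kappa r^2I$. The containment in a coordinate ball of radius $Cr$ is not needed for the lower bound. It serves only to keep the fit hypotheses $|X_c-\overline X|\leq Cr$ of Lemma~\ref{lem:affine-fit-stability} available.

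The only delicate point is uniformity in $\xi$ of the pair selection. The pair of clusters may change with $\xi$, but the bound obtained does not depend on which pair is chosen. The main obstacle is bookkeeping, not analysis: making sure that the $o(r)$ coordinate error and the cluster radius $\beta$ are both measured in the same chart where the singular-value hypothesis holds.
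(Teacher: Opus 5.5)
Your proof is correct and follows the paper's argument. You reduce to a scalar bound in each direction $\xi$, use the singular-value hypothesis to get two simplex centers whose projections differ by a fixed multiple of $\sigma r$, absorb the cluster radius and the $o(r)$ error into a fraction of that separation, and finish with the pairwise variance identity restricted to the at least $\alpha^2N^2$ cross-cluster pairs; your explicit constants ($\sigma r/\sqrt d$, $\beta\leq\sigma/(4\sqrt d)$, $\kappa=\alpha^2\sigma^2/(32d)$) simply make the paper's unspecified $c$ concrete.
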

\begin{proof}
	For every Euclidean unit covector $\xi$, two simplex centers have projection difference at least $c\sigma r$. Small cluster radii and the $o(r)$ error preserve a fixed fraction of this separation for every pair from those clusters. The identity
	\[ \frac{1}{N} \sum_c [ \xi \cdot (X_c - \overline X)]^2 = \frac{1}{2N^2} \sum_{c,c'} [\xi \cdot (X_c - X_{c'})]^2 \]
	gives a lower bound $c\alpha^2 \sigma^2 r^2$, uniformly in $\xi$. 	
\end{proof}

For regular-diamond counting, place small disjoint diamonds around a simplex inside an overlap ball. Each must contain at least $c\rho r^d$ labels, and a finite regular-diamond cover must bound the whole fitting set by $C\rho r^d$ labels.

\begin{theorem}[Deterministic two-scale curvature consistency]
\label{thm:curvature-consistency}
	Assume the sampling and chart hypotheses of Section~\ref{sec:transport-assumptions} uniformly at the selected base labels $p$, with $f(p)\in K_0$. The algorithm \eqref{eq:affine-fit}, \eqref{eq:coarse-rectangle}, \eqref{eq:curvature-estimator} is well-defined for small $r$. For every $a < b$, 
	\begin{equation}\label{eq:curvature-error}
	\| \hat R_{ab}(p) - \mathcal R_{ab}^H(p) \| \leq C \left[ H + \frac{r}{H} + \frac{\epsilon}{H} + \frac{\epsilon^2}{rH} + r^2 + \frac{H^3}{r} \right], \qquad \epsilon=C(H^3+u).
	\end{equation}
	At $H=\sqrt{r}, u=o(r^3)$, this yields
	\begin{equation}\label{eq:curvature-rate}
	\sup_{p, a<b} \| \hat R_{ab}(p) - \mathcal R_{ab}^H(p) \| \leq C\sqrt{r}.
	\end{equation}
	Constants depend only on the fixed geometric, domain, frame, coordinate-error, and covariance constants in the sampling and chart hypotheses. The bound holds uniformly over chart choices satisfying these hypotheses and every choice among equally close minimizing labels.
\end{theorem}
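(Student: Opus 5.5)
The plan is to rerun the chain of Section~\ref{sec:transport-curvature}, observing that each lemma there used Proposition~\ref{prop:count-chart-guarantees} only through conclusions that now appear verbatim as hypotheses 1--5 of Section~\ref{sec:transport-assumptions}. The volume-specific inputs (profile balls, interval counts, (V1)/(V2)) are therefore no longer needed. The argument is deterministic and purely geometric once those hypotheses are granted.

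\textbf{Well-definedness.} First I check that every step of the algorithm is defined for small $r$.
\begin{enumerate}
\item For the loop vertices, I apply Lemma~\ref{lem:loop-vertex-domain} with hypotheses 1 and 3 in place of the net and coarse-domain statements. It gives vertices $b_j=a_j+d_j$ with $|d_j|\le\epsilon=C(H^3+u)$. Consecutive vertices lie within $\delta_0 r$ of each other, since $\epsilon=o(r)$ when $H=\sqrt r$ and $u=o(r^3)$. All vertices lie within $C_*H$ of $f(p)$, so hypothesis 1 makes them eligible fine-chart centers.
\item Hypothesis 4 makes every edge fit \eqref{eq:affine-fit} well-defined, because the source covariance is invertible.
\item Hypothesis 5 does the same for the fine-to-coarse fit $\hat Q$.
\item Invertibility of $\hat Q$ follows from \eqref{eq:base-frame-error} together with the frame bounds.
\end{enumerate}

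\textbf{Error chain.}
\begin{enumerate}
\item Lemma~\ref{lem:one-edge} uses only the fine accuracy $O(r^3)$, the fine domain inclusions, the overlap covariance, and smooth-geometry bounds. It gives $\|\hat P_j-P_j\|\le Cr^2$ for each edge.
\item Lemma~\ref{lem:loop-accumulation} then gives $\|\hat P_{loop}-P_{loop}\|\le CHr$. This uses $m=O(H/r)$ and bounded partial products in the coarse smooth frame.
\item Lemma~\ref{lem:signed-loop-area} depends only on the vertex perturbation bound and the coarse connection bound $\Gamma=O(H)$. Feeding it into Lemma~\ref{lem:true-loop-holonomy} gives
\[
QP_{loop}Q^{-1}=I-H^2\mathcal R^H_{ab}(p)+O\bigl(H^3+H\epsilon+\tfrac{H}{r}\epsilon^2+H^2r^2\bigr).
\]
\item Lemma~\ref{lem:base-frame-alignment} uses hypothesis 5 and the coarse accuracy $O(H^3)$ on $I_p^r\subset I_p^H$. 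It gives \eqref{eq:aligned-loop-error}. Its key ingredient is that conjugation fixes $I$, so the alignment error is multiplied by $\|P_{loop}-I\|\le CH^2$.
\end{enumerate}

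\textbf{Combining the bounds.} I write $\hat R_{ab}-\mathcal R^H_{ab}$ as the sum of two differences:
\[
-\frac{\hat Q\hat P_{loop}\hat Q^{-1}-QP_{loop}Q^{-1}}{H^2}
\qquad\text{and}\qquad
-\frac{QP_{loop}Q^{-1}-I}{H^2}-\mathcal R^H_{ab}.
\]
The first is controlled by \eqref{eq:aligned-loop-error} and the second by \eqref{eq:true-loop-holonomy} divided by $H^2$. Together they give the six terms of \eqref{eq:curvature-error}, with $H^3/H^2=H$ and $H^2r^2/H^2=r^2$ landing in the listed terms.

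At $H=\sqrt r$ and $u=o(r^3)$ we have $\epsilon=O(H^3)$, and the six terms have orders
\[
H,\quad H,\quad H^2,\quad H^3,\quad H^4,\quad H.
\]
This proves \eqref{eq:curvature-rate}. Uniformity holds because every constant came from hypothesis constants and fixed geometry, and each lemma was stated for arbitrary frames and arbitrary minimizing-label choices.

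\textbf{Main obstacle.} The only delicate point is auditing Lemma~\ref{lem:loop-vertex-domain} and Lemma~\ref{lem:one-edge}. These must not rely on any volume-specific fact, such as the profile-ball geometry. If the one-edge Taylor argument needs the radial segment to stay inside both true charts, I would supply this from hypothesis 2's outer inclusion $I_q^r\subset B_h(f(q),C_0r)$ together with the fixed normal-neighborhood radius in $K_2$. If the vertex lemma needs a margin placing net candidates in $I_p^H$, I would use $R_0>3$: the template has $|a_j|\le\sqrt2H$, so $|a_j|+Cu\le R_0H$.
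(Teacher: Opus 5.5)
Your proposal is correct and follows essentially the same route as the paper. Like the paper, you audit that the lemmas of Section~\ref{sec:transport-curvature} (vertex selection, one-edge fit, signed area, holonomy, accumulation, and alignment) use only hypotheses 1--5 and the fixed geometry, then divide \eqref{eq:true-loop-holonomy} by $H^2$ and add \eqref{eq:aligned-loop-error} to obtain the six terms, whose orders at $H=\sqrt r$ are $H,H,H^2,H^3,H^4,H$; your explicit margin checks (the $R_0>3$ containment of net candidates and the outer fine-domain inclusion for the radial segment) are precisely the points the paper's audit relies on.
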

\begin{proof}
The analytic estimates of Section~\ref{sec:transport-curvature} use the chart accuracy, domains, frame bounds, covariance, and sample coverage just listed. The one-edge proof uses the fine-coordinate error and overlap covariance; vertex selection uses the coarse-coordinate error, the net, and the domain margins; the area and holonomy arguments use the fixed geometry; and accumulation and alignment use the frame and base-covariance bounds. Thus those proofs apply to these general sampled charts.

	Lemma~\ref{lem:loop-vertex-domain} supplies the required edges and domains. Covariance gives each affine fit; Lemma~\ref{lem:base-frame-alignment} supplies the remaining inverse. Divide \eqref{eq:true-loop-holonomy} by $H^2$ and add \eqref{eq:aligned-loop-error}, proving \eqref{eq:curvature-error}. At $H=\sqrt{r}, \epsilon = O(H^3)$; the six bracketed terms have orders $H,H,H^2,H^3,H^4,H$, respectively.
\end{proof}

\subsection{Agreement of two interval estimators}
The following result compares two constructions once their respective accuracy and geometric bounds hold.

\begin{proposition}[Comparison of interval estimators]\label{prop:estimator-comparison}
Under the preceding localization, Gram-matrix, covariance, and domain assumptions, let two scalar estimators satisfy~\eqref{eq:alternative-scalar-accuracy} with budgets $E_A(s)$ and $E_B(s)$. At a common base $p$, suppose their accepted charts use a common full-dimensional active set. There is a bounded Lorentz matrix
\[
L_{p,s}=\Lambda_{B,p,s}\Lambda_{A,p,s}^{-1}
\]
such that
\[
\sup_c|X_B(c)-L_{p,s}X_A(c)|\leq C(E_A(s)+E_B(s))s.
\]
If the centered empirical covariance on this set is at least $\kappa s^2I$, fitting a matrix and translation estimates $L_{p,s}$ with matrix error at most $C(E_A(s)+E_B(s))$.

Suppose both curvature constructions have errors $b_A,b_B$ relative to the true tensors in their coarse frames. Write $L_p^H$ for the Lorentz transformation between these frames. Then their full $(1,3)$ tensor estimates satisfy
\[
\|\hat{\mathcal R}_B-L_p^H\cdot\hat{\mathcal R}_A\|\leq C(b_A+b_B),
\]
where frame conversion acts on every slot:
\[
(L\cdot\mathcal R)(u,v)w=L\mathcal R(L^{-1}u,L^{-1}v)L^{-1}w.
\]
If $L_p^H$ is replaced by a fitted matrix whose error is $\delta$, the comparison bound becomes $C(b_A+b_B+\delta)$ for sufficiently small $\delta$ and bounded curvature estimates.
\end{proposition}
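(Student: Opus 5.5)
The plan has three parts, following the three assertions of Proposition~\ref{prop:estimator-comparison}: chart agreement, the fitted Lorentz matrix, and curvature comparison.

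\emph{Chart agreement.} Apply the proof of Proposition~\ref{prop:accepted-chart-soundness} to each estimator separately; the paragraph after \eqref{eq:alternative-scalar-accuracy} says this is permitted. At the common base $p$ with reference normal chart $\chi_p$, this gives
\[
X_A(c)=\Lambda_{A,p,s}\chi_p(f(c))+e_A(c),\qquad
X_B(c)=\Lambda_{B,p,s}\chi_p(f(c))+e_B(c),
\]
with $|e_A|\leq CE_A(s)s$ and $|e_B|\leq CE_B(s)s$ on the common active set. Both Lorentz matrices and their inverses are uniformly bounded. Eliminate $\chi_p(f(c))=\Lambda_{A,p,s}^{-1}(X_A(c)-e_A(c))$ to obtain
\[
X_B(c)=L_{p,s}X_A(c)-L_{p,s}e_A(c)+e_B(c).
\]
This gives the sup bound, since $\|L_{p,s}\|\leq C$, and $L_{p,s}$ is Lorentz as a product of Lorentz matrices.

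\emph{Fitted matrix.} The identity above has exactly the form required by Lemma~\ref{lem:affine-fit-stability}, with $b=0$, $\zeta=C(E_A+E_B)$, and scale $s$ in place of $r$. The source-coordinate bound $|X_A(c)-\overline X_A|\leq Cs$ follows from localization of the active set, using $|\chi_p(f(c))|\leq Cs$ and bounded $\Lambda_A$. The assumed covariance bound $\kappa s^2I$ supplies $M_X^{-1}$. The lemma then gives matrix error $C(E_A+E_B)$.

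\emph{Curvature comparison.} Let $\mathcal R_A$ and $\mathcal R_B$ denote the true $(1,3)$ curvature tensors at $f(p)$, expressed in the coarse frames $E^H_{A}$ and $E^H_B$. Set $L_p^H=(E_B^H)^{-1}E_A^H$. Both frames are Lorentz-orthonormal with bounded norms, so $L_p^H$ is a bounded Lorentz matrix and tensoriality gives exactly $\mathcal R_B=L_p^H\cdot\mathcal R_A$.

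The frame-conversion operation is trilinear in $(L,L^{-1},L^{-1},L^{-1})$ acting on $\mathcal R$. With $\|L\|+\|L^{-1}\|\leq C$ it is a bounded linear map, so
\[
\|\hat{\mathcal R}_B-L\cdot\hat{\mathcal R}_A\|\leq\|\hat{\mathcal R}_B-\mathcal R_B\|+\|L\cdot(\mathcal R_A-\hat{\mathcal R}_A)\|\leq b_B+Cb_A.
\]
The $(1,3)$ tensor estimates are assembled from the matrices $\hat R_{ab}$ by the fixed linear extension described after Theorem~\ref{thm:volume-curvature}, so the bounds $b_A,b_B$ transfer with a dimensional constant.

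For the fitted replacement $\tilde L$ with $\|\tilde L-L_p^H\|\leq\delta$, first note that $\tilde L$ is invertible with $\|\tilde L^{-1}-(L_p^H)^{-1}\|\leq C\delta$ once $\delta$ is small, by the Neumann series. Then expand $\tilde L\cdot\hat{\mathcal R}_A-L_p^H\cdot\hat{\mathcal R}_A$ by telescoping over the four slots. Each term is bounded by $C\delta\|\hat{\mathcal R}_A\|$, and $\hat{\mathcal R}_A$ is bounded by assumption.

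\emph{Main obstacle.} The substantive step is the first one: verifying that Proposition~\ref{prop:accepted-chart-soundness} genuinely transfers to an abstract scalar estimator. This requires checking that the Gram perturbation bounds $\|\hat G-G\|\leq CE s^2$, the Cholesky-map Lipschitz bound, and invertibility of $S$ use only \eqref{eq:alternative-scalar-accuracy} and the retained eigenvalue and localization assumptions, not counting. I would re-read that proof line by line to confirm this.

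A secondary point is that the statement asks for a common active set for both charts. The sup bounds must hold on that set, which they do because each soundness bound is uniform over its own active set.
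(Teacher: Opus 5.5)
Your proposal is correct and follows the paper's proof. You subtract the two soundness representations $X_A=\Lambda_A\chi_p+e_A$ and $X_B=\Lambda_B\chi_p+e_B$ in the common reference chart, and you get the fitted-matrix bound from a cross-covariance of size $O((E_A+E_B)s^2)$ against an inverse covariance of size $O(s^{-2})$, which is exactly Lemma~\ref{lem:affine-fit-stability} at scale $s$. You then finish with tensoriality $\mathcal R_B=L_p^H\cdot\mathcal R_A$, the triangle inequality, and an $O(\delta)$ perturbation of $L$ and $L^{-1}$. The only slip is cosmetic: the frame conversion is multilinear in the four factors $L,L^{-1},L^{-1},L^{-1}$, not ``trilinear''.
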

\begin{proof}
Write $X_A=\Lambda_A\chi_p+e_A$ and $X_B=\Lambda_B\chi_p+e_B$ in the same reference normal chart. Subtraction gives $X_B-L_{p,s}X_A=e_B-L_{p,s}e_A$, proving the coordinate bound. For the affine fit, centered source coordinates have size $O(s)$ and the centered residual has size $O((E_A+E_B)s)$. Its cross-covariance therefore has norm $O((E_A+E_B)s^2)$. Multiplication by the covariance inverse, whose norm is at most $C/s^2$, proves the matrix bound. The true curvature tensors in the two coarse frames are related by the displayed all-slot transformation. Subtract them and apply the triangle inequality. Uniform bounds for $L_p^H$ and its inverse control the transformation norm. Perturbing the matrix and its inverse by $O(\delta)$ changes the transformed bounded tensor by $O(\delta)$.
\end{proof}

Thus scalar estimators with the required accuracy and geometric tests recover the same limiting curvature after frame conversion. Their finite-density errors and convergence rates depend on their respective scalar accuracy bounds.

\section{Conclusion}\label{sec:conclusion}
We have constructed approximate normal coordinates on sample points in Section~\ref{sec:sampled-charts}. Using these coordinates, Section~\ref{sec:transport-curvature} estimates parallel transport along geodesic segments joining nearby sample points and recovers estimates of the full Riemann tensor from loops. The inputs are causal order and element counts, with density, dimension, and fixed construction thresholds specified. The construction applies to samples in a fixed smooth globally hyperbolic spacetime satisfying the stated uniform volume bounds. Timelike interval counts supply estimates of squared proper time, overlapping normal charts supply transport, and transport around shrinking loops supplies curvature. The fine scale $r$ controls chart size and the separation of consecutive loop vertices, while the larger scale $H$ controls loop size. This separation of scales allows the accumulated transport error to remain smaller than the leading curvature contribution to holonomy.

Theorem~\ref{thm:volume-curvature} gives the uniform curvature error
\[
O\!\left(\left[\frac{\log(e+\rho)}{\rho}\right]^{1/[2(3d+4)]}\right)
\]
at sample points satisfying the loop-base past- and future-count thresholds of Section~\ref{sec:eligible-bases}, simultaneously over accepted chart choices. Corollary~\ref{cor:poisson-curvature} gives the Poisson realization: the estimate holds with high probability and eventually almost surely under the marked-Poisson coupling introduced in Section~\ref{sec:poisson-volume}. Curvature is compared in the reconstructed Lorentz frames, whose changes of frame are uniformly bounded. The result therefore controls the full local tensor while allowing each chart to choose its own frame.

Corollary~\ref{cor:scalar-curvature} gives a scalar-curvature estimate with the same uniform rate, whose continuum target is independent of the reconstructed frame.

The scalar formulation in Section~\ref{sec:alternative-estimators} also identifies how other interval estimators can enter the construction. A uniform squared-interval accuracy bound, together with the stated geometric hypotheses and matrix bounds that keep coordinate reconstruction and overlap fits stable, gives the same coordinate construction and supports the transport and curvature estimates. The longest-chain estimator is one candidate, conditional on its own uniform two-sided accuracy estimate. Proposition~\ref{prop:estimator-comparison} shows that when two such constructions converge, their curvature tensors agree after frame conversion, with errors determined by their respective accuracy bounds.

Further work can seek sharper convergence rates and practical methods for selecting anchors and computing the overlap fits. Numerical studies can examine the size of the errors at finite density and guide these algorithmic choices. The present construction gives a quantitative local realization of the order-and-number principle, extending interval measurements to the comparison of tangent spaces and their curvature.

\section{Acknowledgments}
% ============================================================
The author thanks David Morrison for valuable discussions and guidance throughout the development of this work.

\appendix
\section{Stability of light-cone arrival times}
\label{app:light-cone-stability}

Fixing a spatial coordinate $z$ gives a timelike coordinate line.
Its intersection with the future light cone of a tip $a$ defines an
arrival time $\alpha_a(z)$. We need to control how this time changes
when the tip moves. Away from the cone vertex, the world function gives
an implicit equation for the intersection. At the vertex, where that
equation cannot be differentiated to solve for time, a direct cone
comparison supplies the same bound.

\begin{lemma}[Stability of light-cone arrival times]
\label{lem:arrival-time-stability}
Let $W$ be a convex normal neighborhood with coordinates $(t,z)$
such that $\nabla t$ is past timelike and $\partial_t$ is future timelike.
Let $P\Subset Q\Subset W$ be coordinate cylinders, each a product
of a time interval and a Euclidean spatial ball, and let $B$ be a
closed spatial ball in the spatial base of $Q$.
Suppose that, for every $a\in\overline P$ and $z\in B$, the future
null cone of $a$ in $W$, including its vertex, meets the line with
spatial coordinate $z$ at a unique point
$(\alpha_a(z),z)$ in $Q$, and that these intersections lie in a fixed
compact subset of $Q$. Then there is a constant $C$ such that
\[
|\alpha_a(z)-\alpha_{\tilde a}(z)|
\leq C\,d_h(a,\tilde a)
\qquad
(a,\tilde a\in\overline P,\ z\in B).
\]
The analogous statement holds for the past-cone intersection times
$\beta_b(z)$.
\end{lemma}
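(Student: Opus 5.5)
The idea is to split into two regimes: the line meets the cone away from its vertex, or close to it. Away from the vertex we use the implicit function theorem applied to Synge's world function. Near the vertex we use a direct cone comparison. Compactness of $\overline P\times B$ then glues the two regimes with a single constant. Throughout we use that coordinate Euclidean distance and $d_h$ are uniformly comparable on $\overline Q$. So it suffices to prove $|\alpha_a(z)-\alpha_{\tilde a}(z)|\leq C|a-\tilde a|_{\mathrm{Eucl}}$, and, by compactness, only for $|a-\tilde a|$ below a fixed $\delta$. For larger separations the left side is bounded by the time extent of $Q$, divided by $\delta$.

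\emph{Away from the vertex.} Fix $\delta_1>0$ and consider pairs $(a,z)$ with $|(\alpha_a(z),z)-a|\geq\delta_1$. Here $y=(\alpha_a(z),z)$ satisfies $\sigma(a,y)=0$ with $y$ on the future cone. The endpoint-variation formula gives $\partial_t\sigma(a,y)=g_y(\dot\gamma(1),\partial_t)$ for the null geodesic from $a$ to $y$. Its tangent is a nonzero future null vector of coordinate size comparable to $|y-a|\geq\delta_1$. Since $\partial_t$ is uniformly timelike on the compact set of intersections, we get $|\partial_t\sigma|\geq c(\delta_1)>0$. The implicit function theorem, with $\sigma$ smooth on $W\times W$, then shows that $\alpha$ is $C^1$ in $a$ on this set. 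Its gradient is $-D_a\sigma/\partial_t\sigma$, which is uniformly bounded there.

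Uniqueness of the intersection lets us identify the implicit solution with $\alpha_{\tilde a}(z)$ for nearby $\tilde a$. A nearby tip has a nearby intersection, by continuity from compactness and uniqueness. This gives the Lipschitz bound along the segment from $a$ to $\tilde a$, once the segment stays in the region where the intersection remains at distance at least $\delta_1/2$ from the tip.

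\emph{Near the vertex.} Suppose the intersection lies within $\delta_1$ of $a$, so $z$ is within roughly $\delta_1$ of the spatial position $z_a$. Here I would use the cone comparison from Lemma~\ref{lem:small-diamond-stability}. Moving a tip backward by coordinate time $\kappa^{-1}\epsilon$ places a ball of radius $\epsilon$ about $a$ inside its chronological future, and symmetrically for forward moves. Take $\epsilon=|a-\tilde a|$ and $a^\pm=a\pm C\epsilon\,\partial_t$. Then $a^-\preceq\tilde a\preceq a^+$, so $J^+(a^+)\subset J^+(\tilde a)\subset J^+(a^-)$. Because $\partial_t$ is timelike and each vertical line leaves the cone interior only once, the arrival times are monotone under these inclusions: $\alpha_{a^-}(z)\leq\alpha_{\tilde a}(z)\leq\alpha_{a^+}(z)$. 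It then remains to bound $\alpha_{a^\pm}(z)-\alpha_a(z)$ for vertical shifts of the tip.

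\emph{Main obstacle: vertical tip shifts.} This is the step I expect to be hardest, since the implicit function theorem degenerates at the vertex itself. I would handle it with normal coordinates at $a$. There the cone is the smooth image of the flat cone under $\exp_a$, and vertical translation by $\epsilon$ moves the flat cone's graph by exactly $\epsilon$. The curved correction is $O(|y-a|^2)$ uniformly in the tip, and its variation under tip translation is $O(\epsilon|y-a|)$. So the shift is at most $C\epsilon$. Alternatively, one can compare directly with the two constant-coefficient cones (wider and narrower) as in Lemma~\ref{lem:normal-cone-comparison} on a small cylinder; their graphs are $1$-Lipschitz-translated by vertical shifts. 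For $\beta_b$, reverse time orientation and apply the same argument.
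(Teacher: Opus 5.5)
Your far-from-vertex step and the causal sandwich $\alpha_{a^-}(z)\le\alpha_{\tilde a}(z)\le\alpha_{a^+}(z)$ are sound. The proof, however, hinges on the vertical-shift bound $|\alpha_{a\pm C\epsilon\partial_t}(z)-\alpha_a(z)|\le C\epsilon$ for intersections within $\delta_1$ of the vertex, and you do not establish it.

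\begin{itemize}
\item Your normal-coordinate route just asserts that the curved correction varies by $O(\epsilon|y-a|)$ under tip translation. That assertion is the entire near-vertex content of the lemma. Normal coordinates at $a$ do not by themselves describe the cone of the shifted tip; you would need, for example, an implicit-function argument in the blown-up variables $|z-z_a|$ and $(z-z_a)/|z-z_a|$, with smooth dependence on the tip.
\item Your alternative route fails. Sandwiching the cones of $a$ and $a^\pm$ between constant-coefficient wider and narrower cones, on a cylinder containing $y$, leaves a gap of order $|y-a|$ times the metric variation on that cylinder. That is of order $|y-a|^2$, or at best $|y-a|^3$ in normal coordinates, independent of $\epsilon$. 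It is not bounded by $C\epsilon$ once $\epsilon\ll|y-a|^3$.
\end{itemize}

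So for displacements much smaller than a power of the distance to the vertex, neither route yields a Lipschitz bound.

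The missing idea, which is how the paper argues, is that your own implicit-function derivative is scale invariant, so no $\delta_1$-split is needed. In $D_a\alpha_a(z)[v]=g_a(\dot\gamma(0),v)/g_y(\dot\gamma(1),\partial_t)$:
\begin{itemize}
\item the denominator satisfies $|g_y(\dot\gamma(1),\partial_t)|\ge c\|\dot\gamma(1)\|_h$, because $\partial_t$ is uniformly timelike;
\item the numerator is at most $C\|\dot\gamma(0)\|_h\|v\|_h$;
\item and $\|\dot\gamma(0)\|_h\le C\|\dot\gamma(1)\|_h$.
\end{itemize}
Numerator and denominator both scale like $|y-a|$, so $|D_a\alpha_a(z)[v]|\le C\|v\|_h$ at every non-vertex intersection, however close to the vertex. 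You bounded the denominator below by $c(\delta_1)$ and the numerator by a constant, which throws away this cancellation.

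With the uniform bound, join $a,\tilde a$ by the straight segment in the convex cylinder $\overline P$. This also avoids your shifted tips $a^\pm$ possibly leaving $\overline P$, where the hypotheses are not given. There are two cases.
\begin{itemize}
\item If the segment never has spatial coordinate $z$, no intermediate intersection is a vertex, and integrating the uniform derivative bound gives $C|a-\tilde a|$.
\item Otherwise both tips lie within spatial distance $\delta=|a-\tilde a|$ of $z$. Local cone comparison gives $0\le\alpha-t_{\mathrm{tip}}\le C\delta$ for each tip, and the two tip times differ by at most $\delta$.
\end{itemize}
The same dichotomy along a vertical segment would also have closed your vertical-shift step.
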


The common-domain assumption makes explicit the room needed to compare
graphs: both cones must be defined over the same spatial ball.
For tips in a sufficiently small inner cylinder, uniform local cone
comparison ensures this by taking a larger time interval for $Q$.
In the small-diamond application all tips and relevant spatial points
are within $O(s)$ of the center, so an interval of length $Cs$ suffices.
All constants can be taken from the fixed compact coordinate
neighborhoods, independently of $s$.

\begin{proof}
For the affinely parametrized geodesic $\gamma:[0,1]\to W$ from $a$
to $y$, Synge's world function is
\[
\sigma(a,y)=\frac12\int_0^1
g_{\gamma(\lambda)}(\dot\gamma(\lambda),\dot\gamma(\lambda))\,d\lambda.
\]
Differentiating this energy and integrating by parts, the geodesic
equation eliminates the interior term and gives
\[
\delta\sigma
=g_y(\dot\gamma(1),\delta y)-g_a(\dot\gamma(0),\delta a);
\]
see \cite[Sec.~3.3, Eqs.~(3.3)--(3.4)]{Poisson2011}.
For $y=(\alpha_a(z),z)\ne a$, the geodesic is null and
$g_y(\dot\gamma(1),\partial_t)\ne0$. The implicit function theorem
therefore applies to $\sigma(a,(\alpha_a(z),z))=0$, giving
\[
D_a\alpha_a(z)[v]
=\frac{g_a(\dot\gamma(0),v)}
       {g_y(\dot\gamma(1),\partial_t)}.
\]
On the fixed compact coordinate region, uniform timelikeness of $\partial_t$ and boundedness of $g$ imply
\[
|g_y(\dot\gamma(1),\partial_t)|
\geq c\|\dot\gamma(1)\|_h,
\qquad
|g_a(\dot\gamma(0),v)|
\leq C\|\dot\gamma(0)\|_h\|v\|_h.
\]
The first inequality follows by minimizing over null vectors of $h$-norm one. Smooth dependence of geodesics and parallel transport on endpoints in these compact subsets gives $\|\dot\gamma(0)\|_h\leq C\|\dot\gamma(1)\|_h$. Hence $|D_a\alpha_a(z)[v]|\leq C\|v\|_h$ away from the vertex. The norms here are positive auxiliary norms, not null proper lengths.

For nearby tips $a,\tilde a$, let $\delta$ be their Euclidean coordinate distance and join them by the straight coordinate segment in $\overline P$. If its spatial coordinate never equals $z$, integrating the derivative bound gives an arrival-time difference at most $C\delta$. Otherwise, both tips are within spatial distance $\delta$ of $z$. Local cone comparison bounds each arrival time minus its tip's time by $C\delta$; the two tip times also differ by at most $\delta$. This proves the same estimate through the vertex. The compact containments ensure the coordinate segments used in the cone comparison remain in $W$ for small $\delta$. Larger separations are handled by boundedness of the arrival times, increasing $C$. Comparison of coordinate distance with $h$-distance on $\overline P$ proves the stated estimate. Reversing time proves the past-cone case.
\end{proof}

\clearpage
\section{Fixed parameters and constants}\label{app:constants}
\begin{center}
\begingroup
\small
\renewcommand{\arraystretch}{1.35}
\setlength{\tabcolsep}{5pt}
\begin{tabular}{@{}p{.19\linewidth}p{.43\linewidth}p{.32\linewidth}@{}}
\hline
\raggedright Symbol & \raggedright Role & \raggedright Where fixed or used \tabularnewline
\hline
\raggedright $d,\theta$ & \raggedright Dimension and past/future volume threshold. & \raggedright Sections~\ref{sec:geo}, \ref{sec:eligible-bases}. \tabularnewline
\raggedright $K$ or $K_N$ & \raggedright Bounds the counting error in (V1) and (V2). & \raggedright Definition~\ref{def:deterministic-volume-event}, Theorem~\ref{thm:poisson-volume-bounds}. \tabularnewline
\raggedright $s_0$ & \raggedright Largest test-diamond height. & \raggedright Lemma~\ref{lem:small-diamond-stability}. \tabularnewline
\raggedright $A_{\mathrm{br}}$ & \raggedright Height adjustment for comparing nearby diamonds. & \raggedright Lemma~\ref{lem:small-diamond-stability}. \tabularnewline
\raggedright $c_P,C_P$ & \raggedright Compare causal-profile distance with $h$-distance. & \raggedright Lemma~\ref{lem:causal-profile-distance}, \eqref{eq:sample-profile-localization}. \tabularnewline
\raggedright $a,A_{\mathrm{cand}}$ & \raggedright Profile-distance bounds for active and candidate sets. & \raggedright Section~\ref{sec:sampled-charts}, Proposition~\ref{prop:accepted-scaffolds}. \tabularnewline
\raggedright $n_0,n_1,g_0,g_1$ & \raggedright Bounds on interval counts and rescaled Gram eigenvalues. & \raggedright Tests (C2), (C3), Proposition~\ref{prop:accepted-chart-soundness}. \tabularnewline
\raggedright $C_H,R_0,R_1,C_*$ & \raggedright Set coarse-chart and neighborhood sizes. & \raggedright Proposition~\ref{prop:count-chart-guarantees}. \tabularnewline
\raggedright $K_{\mathrm{fr}}$ & \raggedright Bounds comparison-frame norms and inverse norms. & \raggedright Proposition~\ref{prop:count-chart-guarantees}. \tabularnewline
\raggedright $\delta_0,\kappa$ & \raggedright Allowed base spacing and lower covariance bound. & \raggedright Proposition~\ref{prop:count-chart-guarantees}. \tabularnewline
\raggedright $\ell_0$ (loop subdivision) & \raggedright Controls the spacing of ideal loop vertices. & \raggedright Section~\ref{sec:coarse-rectangle}. \tabularnewline
\raggedright $C,C_N$ & \raggedright Error-bound constants; $C_N$ may also depend on $N$. & \raggedright Section~\ref{sec:geo}, Theorem~\ref{thm:volume-curvature}. \tabularnewline
\hline
\end{tabular}
\endgroup
\end{center}

\bibliographystyle{plain}
\bibliography{refs}

\end{document}